\documentclass[11pt]{article}
\usepackage[T1]{fontenc}
\usepackage[utf8]{inputenc}
\usepackage{microtype}

\usepackage[margin=25mm]{geometry}
\usepackage[normalem]{ulem}
\usepackage{microtype} 

\usepackage[T1]{fontenc}
\usepackage[utf8]{inputenc}
\usepackage{lmodern}

\usepackage{amsmath,amssymb,amsthm}
\usepackage{mathtools}
\usepackage{stmaryrd}
\usepackage{dsfont}
\usepackage{braket}
\usepackage{bm}

\usepackage{graphicx}
\usepackage{xcolor}
\usepackage{here}
\usepackage{array}
\usepackage{booktabs}
\usepackage{makecell}
\usepackage{longtable}
\usepackage{tabularx}
\usepackage{pict2e}
\usepackage{diagbox}
\usepackage{lineno}
\usepackage{authblk}
\usepackage[normalem]{ulem}
\usepackage[numbers,sort&compress]{natbib}
\usepackage[
    colorlinks=true,
    citecolor=blue,
    urlcolor=blue
]{hyperref}
\newcolumntype{Y}{>{\centering\arraybackslash}X}

\usepackage{tikz}
\usepackage{tikz-cd}
\usepackage{quantikz}

\usepackage{algorithm}
\usepackage{algpseudocode}

\usepackage{ytableau}
\ytableausetup{boxsize=1em}

\usepackage[
    colorlinks=true,
    citecolor=blue,
    urlcolor=blue
]{hyperref}
\makeatletter
\newcommand{\namedlabel}[2]{\begingroup
  #2%
  \def\@currentlabel{#2}%
  \phantomsection\label{#1}\endgroup}
\makeatother

\newtheorem{Def}{Definition}
\newtheorem{Thm}[Def]{Theorem}
\newtheorem{Cor}[Def]{Corollary}
\newtheorem{Prop}[Def]{Proposition}
\newtheorem{Lem}[Def]{Lemma}
\newtheorem{Remark}{Remark}

\newtheorem{Task}{Task}

\newcommand{\R}{\mathbb{R}}

\newcommand{\bigO}[1]{\mathcal{O}\!\left(#1\right)}
\newcommand{\bigtilO}[1]{\tilde{\mathcal{O}}\!\left(#1\right)}
\newcommand{\bigOmega}[1]{\Omega\!\left(#1\right)}

\newcommand{\rank}{\operatorname{rank}}
\newcommand{\poly}{\operatorname{poly}}
\newcommand{\polylog}{\operatorname{polylog}}
\newcommand{\norm}[1]{\left\lVert#1\right\rVert}
\newcommand{\abs}[1]{\left|#1\right|}

\DeclareMathOperator{\Tr}{Tr}
\DeclareMathOperator*{\argmax}{arg\,max}

\title{An exponential query advantage from access to Stinespring dilation unitaries in quantum channel learning}

\author[1]{Manaki Arihara\thanks{\texttt{manaki-arihara@g.ecc.u-tokyo.ac.jp}}}
\author[1]{Satoshi Yoshida\thanks{\texttt{satoshi.yoshida@phys.s.u-tokyo.ac.jp}}}
\author[2]{Philip Taranto\thanks{\texttt{philip.taranto@manchester.ac.uk}}}
\author[1]{Mio Murao\thanks{\texttt{murao@phys.s.u-tokyo.ac.jp}}}

\affil[1]{Department of Physics, The University of Tokyo, 7-3-1 Hongo, Bunkyo-ku, Tokyo 113-0033, Japan}
\affil[2]{Department of Physics \& Astronomy, University of Manchester, Manchester M13 9PL, United Kingdom}
\date{}

\begin{document}

\maketitle
\begin{abstract}
    We present a learning task for quantum channels that exhibits an exponential separation in query complexity (in terms of the number of qubits) between two access models: i) black-box channel access and ii) access to a Stinespring dilation unitary of the channel along with its inverse. 
    We consider the task of learning how close the unknown channel is to the set of unitary channels.
    This is characterized by the \textit{optimal average gate fidelity (OAGF)}, defined as the average gate fidelity maximized over all unitary channels. 
    For constant Kraus rank and a dilation environment of dimension $\polylog(d)$, we provide an algorithm that estimates the OAGF to fixed additive accuracy using $\polylog(d)$ queries to the Stinespring dilation unitary and its inverse.
    In contradistinction, under the black-box access model, this estimation requires at least $\Omega(\sqrt{d})$ queries, even for channels of constant Kraus rank and constant additive accuracy. 
    As a corollary, we rule out a universal simulator that uses only  $\poly(\log d,q)$ queries to the unknown channel to approximate the output of arbitrary $q$-query algorithms with access to a randomly-sampled Stinespring dilation unitary and its inverse, even to sufficiently small constant error.
    These results demonstrate that access to the system-environment joint evolution and its inverse can provide an exponential advantage in estimating an intrinsic property of a quantum channel.
\end{abstract}
\section{Introduction}
Efficiently learning unknown quantum objects such as quantum states and quantum processes is one of the promising applications of quantum computers \cite{montanaro2018}.
Learning the properties of these objects helps us understand and predict the behavior of complex physical systems in nature and in the laboratory, and assess and improve the performance of quantum devices.
In quantum learning, the general goal is to minimize the resources required to accomplish a given learning task, under a specified access model and a set of allowed operations.
A natural question is therefore how much stronger forms of access can reduce the resources required for a given learning task, with all other computational capabilities held fixed.
For example, consider quantum state tomography, the task of obtaining a classical description of an unknown quantum state.
Possible forms of access to an unknown mixed state $\rho$ include access to multiple copies of the state and access to the process that prepares it.
A concrete model of the latter is access to a state-preparation unitary that prepares a purification of $\rho$, together with its inverse.
For quantum state tomography, such access provides a quadratic improvement in the dependence on the estimation error compared with access to copies of the state alone \cite{vanapeldoorn2022}.
For the task of estimating expectation values of observables and solving StateHSP, it has been shown that access to a state-preparation unitary (and its inverse) provides a quadratic advantage over merely having access to copies of the state itself \cite{Rall2019,liu2026poweroracleaccessoptimal}.
Advantages have also been established for other learning tasks, including entropy estimation, quantum state certification, and the estimation of nonlinear properties of quantum states \cite{gilyén2019,Zhang2026,Wang2024,Wang2024Renyi,Wang2024Tracedistance,utsumi2025}.

This motivates us to extend this perspective to quantum channels.
Any quantum channel can be realized as a untiary channel on a larger system followed by partial trace over the environmental system, which is called a Stinespring dilation. We compare access to the channel itself with access to a Stinespring dilation unitary of the channel and its inverse.
There is no universal algorithm that exactly implements a dilation unitary channel using finitely many queries to the corresponding quantum channel~\cite{liu2025}.
However, even if a dilation unitary channel cannot be constructed efficiently, a property that depends only on the induced quantum channel may still be learned directly and efficiently using black-box access to the channel.
The impossibility of constructing a dilation therefore does not by itself establish a separation between these access models for learning.
In fact, the queries to a mixed state is shown to be equivalent to those to a purification of the state, in learning purification-independent properties of the state~\cite{TangWrightZhandryConjugate,Chen2026localunitary}.
Similarly, it is shown that parallel queries to a quantum channel is equivalent to those to a Stinespring dilation isometry of the channel, in learning dilation-independent properties of the channel~\cite{GirardiEtAlRandomStinespring,YoshidaEtAlRandomDilation,chen2025}.
We thus ask how much access to a dilation unitary and its inverse can reduce the resources required to learn properties that depend only on the channel, rather than on a particular realization of it.

We establish an \textit{exponential separation} in query complexity, as a function of the number of qubits, between these two access models.
Specifically, we consider the task of estimating the optimal average gate fidelity (OAGF) of a quantum channel, defined as its average gate fidelity maximized over all unitary channels.
This result shows that even for properties determined by the channel itself, access to its realization can change whether efficient learning is possible.

\subsection{Results}
This work establishes an exponential separation in query complexity
between black-box access to a quantum channel $\mathcal{E}$ and
access to a unitary channel induced by a Stinespring dilation unitary, together with its inverse.
To specify the latter access model, let $\mathcal{H}_E$ be an $m$-dimensional environment and we say $U_m$ is a Stinespring dilation of $\mathcal{E}$ if $U_m$ is a unitary operator that acts on $\mathcal{H}\otimes \mathcal{H}_E$ satisfying
\begin{equation}
    \mathcal{E}(\cdot)=\Tr_{\mathcal{H}_E}[U_m(\,\cdot \,\otimes \proj{0}_E)U_m^\dagger].
\end{equation}

We distinguish the access models according to $m$, and these access moels need not to be equivalent for different values of $m$.
We show an exponential separation in query complexity between the two access models for the task of estimating the channel's \textit{optimal average gate fidelity (OAGF)}.
The average gate fidelity is a standard figure of merit for assessing the performance of quantum gates~\cite{HorodeckiEtAl1999,Nielsen2002}. Such fidelity-based quantities play an important role in quantum benchmarking and process characterization, where gate performance can be measured by the average gate fidelity to some prescribed target unitary~\cite{Magesan2011,Roth2018}.
Here, rather than comparing the unknown channel with a prescribed target unitary, we optimize the average gate fidelity over all unitary channels.
\begin{Task}
    Let $\mathcal{E}$ be a quantum channel on a $d$-dimensional Hilbert space $\mathcal{H}$.
    The task is to estimate the following quantity to additive error $\epsilon$ with success probability at least $1-\delta$,
    \begin{equation}
        F_{\mathrm{OAGF}}(\mathcal{E}):=\max_{W\in \mathbb{U}(d)}F_{\mathrm{avg}}(\mathcal{E},\mathcal{W})=
        \max_{W\in \mathbb{U}(d)}\int d\psi \bra{\psi} W^\dagger \mathcal{E}(\proj{\psi}) W\ket{\psi},
    \end{equation}
    where $\mathbb{U}(d)$ denotes the group of unitary operators on $\mathcal{H}$, $\mathcal{W}(\cdot)=W(\cdot)W^\dagger$, and $d\psi$ is the normalized Haar measure on pure states
    in $\mathcal{H}$.
    \label{Task:OAGF}
\end{Task}
The OAGF thus represents how well $\mathcal{E}$ can be approximated by the best unitary channel, as measured by average gate fidelity.
We emphasize that the estimated quantity $F_{\mathrm{OAGF}}(\mathcal{E})$ is independent of any choice of dilation unitary channel of $\mathcal{E}$ and depends only on the underlying quantum channel itself.
This follows immediately from the definition of $F_{\mathrm{OAGF}}(\mathcal{E})$, which is expressed solely in terms of the channel $\mathcal{E}$ and does not involve any particular dilation. 
Thus, the separation concerns the estimation of an intrinsic property of the channel, rather than information specific to a particular choice of dilation.

For this particular task, we will demonstrate an exponential separation in query complexity under the two considered access models. Our proof proceeds in two steps: First, we construct an algorithm that estimates $F_{\mathrm{OAGF}}(\mathcal{E})$ with access to a Stinespring dilation unitary and its inverse, providing an upper bound. Then, we provide a lower bound under black-box access model, which requires exponentially many more queries.

\begin{Thm}[Upper bound, informal]
    Consider a quantum channel $\mathcal{E}: \mathcal{L}(\mathcal{H}) \to \mathcal{L}(\mathcal{H})$ with Kraus rank\footnote{It is defined as the rank of the Choi state of the channel, see also Section \ref{sec:preliminaries}.} $r\,(\ge 2)$ . Let $U_{m}$ be a Stinespring dilation unitary of $\mathcal{E}$ with an environment of dimension $m \ge r$.
    Then, for any $\epsilon,\delta\in(0,1)$, there exists an algorithm that outputs a classical estimate $\widehat F_{\mathrm{OAGF}}(\mathcal{E})\in[0,1]$ with additive error $\epsilon$ and success probability at least $1-\delta$, using
    \begin{equation}
        \bigtilO{\frac{rm}{\epsilon^2}+\frac{r^2}{\epsilon^3}\left(\frac{C}{\epsilon}\right)^{2r-1}}
    \end{equation}
    queries to $U_{m}$ and $U_{m}^\dagger$. Here $C$ is a constant that is independent of $\epsilon,\delta,m,r,d$.
\end{Thm}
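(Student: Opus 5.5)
The plan is to reduce the OAGF to an optimisation over Kraus-operator combinations. Each such combination is directly block-encoded by $U_m$, so every candidate value can be estimated with singular value transformation. Write $K_i=(I\otimes\bra{i}_E)U_m(I\otimes\ket{0}_E)$, $i=1,\dots,m$, for the Kraus operators induced by $U_m$. Recall the standard relation $F_{\mathrm{avg}}(\mathcal{E},\mathcal{W})=(dF_e+1)/(d+1)$, with entanglement fidelity $F_e=\sum_i|\Tr(W^\dagger K_i)|^2/d^2$. We have $\sum_i|\Tr(W^\dagger K_i)|^2=\max_{\|c\|=1}|\Tr(W^\dagger K(c))|^2$ with $K(c)=\sum_i \bar c_i K_i=(I\otimes\bra{c}_E)U_m(I\otimes\ket{0}_E)$. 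Interchanging the two maximisations and using $\max_W|\Tr(W^\dagger A)|=\|A\|_1$ gives
\begin{equation}
F_{\mathrm{OAGF}}(\mathcal{E})=\frac{d\,g^2+1}{d+1},\qquad g:=\max_{c\in\mathbb{C}^m,\ \|c\|=1}\frac{\|K(c)\|_1}{d}.
\end{equation}
So it suffices to estimate $g\in[0,1]$ to additive accuracy $O(\epsilon)$.

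\textbf{Step 1: restrict to the Kraus subspace.} Only $c$ in the $r$-dimensional support $S$ of the environment state $\rho_E=\Tr_{\mathcal{H}}[U_m(I/d\otimes\proj{0})U_m^\dagger]$ matter. Components of $c$ orthogonal to $S$ give $K(c)=0$, so they only waste norm. I would prepare $\rho_E$ using one query to $U_m$ and a maximally entangled input, then run rank-$r$ state tomography on the $m$-dimensional environment. This costs $\tilde O(rm/\epsilon^2)$ queries and gives the first term of the bound. From the estimate I keep an approximate $r$-dimensional subspace $\hat S$, together with a unitary on $E$ rotating it into the span of the first $r$ basis vectors. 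Discarding eigenvalues below roughly $\epsilon^2/r$ is harmless, because the contribution of such directions to $\|K(c)\|_1/d$ is controlled by $\|K(c)\|_2/\sqrt d=\sqrt{\bra{c}\rho_E^T\ket{c}}$.

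\textbf{Step 2: estimate $\|K(c)\|_1/d$ for fixed $c$.} The operator $U_m$, postselected on $\ket{0}_E$ and $\ket{c}_E$, is a block encoding of $K(c)$, and $U_m^\dagger$ gives the adjoint. I apply QSVT with an odd/even polynomial $P$ of degree $O(1/\epsilon)$ that approximates $|x|$ uniformly on $[-1,1]$. Applying this to one half of a maximally entangled state, a Hadamard-test/amplitude-estimation circuit estimates $\Tr[P^{(SV)}(K(c))]/d\approx\Tr|K(c)|/d$. Combining degree $O(1/\epsilon)$ with $O(1/\epsilon)$ amplitude-estimation rounds costs $\tilde O(1/\epsilon^2)$ queries per point, and median amplification yields a small failure probability.

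\textbf{Step 3: discretise the sphere.} The map $c\mapsto\|K(c)\|_1/d$ is $1$-Lipschitz on the unit sphere of $S$. Indeed, $\|K(c)-K(c')\|_1/d\le\|K(c-c')\|_2/\sqrt d\le\|c-c'\|$ because $\sum_iK_i^\dagger K_i=I$. Hence an $\epsilon$-net of the real $(2r-1)$-sphere, of size $(C/\epsilon)^{2r-1}$, suffices. Estimating every net point and taking the maximum, with a union bound folded into the logarithmic factors, gives the second term. I expect the extra $r^2/\epsilon$ overhead there to come from the accuracy needed in Step 1's basis: it must be rescaled so that the total error stays $O(\epsilon)$, and also from converting an error in $g$ into an error in $F_{\mathrm{OAGF}}$, which scales via $2g$.

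\textbf{Main obstacle.} The delicate part is the robustness analysis linking Steps 1 and 3. The tomographic subspace $\hat S$ is only approximately $S$, and the induced vectors $\ket{c}$ are not exactly in $S$. I would bound $|g_{\hat S}-g|$ by the Lipschitz estimate above, combined with a Davis–Kahan-type bound on the subspace error weighted by $\rho_E$. The key point is that errors only enter through $\sqrt{\bra{c}\rho_E\ket{c}}$, which is small on poorly-estimated, low-weight directions. This is what forces the particular $\epsilon$ and $r$ dependence in the bound.
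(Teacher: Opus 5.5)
\textbf{What matches the paper, and what improves on it.} Most of your plan follows the paper: the variational formula $F_{\mathrm{OEF}}=g^2$, tomography of $\rho_E$ (the paper's $G$) to obtain an $r$-dimensional search space, a net combined with a Lipschitz bound, QSVT-based estimation for each candidate, and taking the maximum.

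Two of your differences are harmless or better.
\begin{itemize}
\item You use a deterministic net instead of Haar-random samples. This is fine.
\item You apply an even degree-$O(1/\epsilon)$ approximation of $|x|$ directly to the block-encoding of $K(c)$, which yields $P(\sqrt{K(c)^\dagger K(c)})$. This costs $\tilde O(\epsilon^{-2})$ per candidate, versus the paper's $O(\epsilon^{-3})$, which comes from approximating $\sqrt{|x|}$ and applying it to $A_\psi^\dagger A_\psi$. This is a genuine improvement.
\end{itemize}
For the scalar readout, use the paper's $\Phi^+$-overlap amplitude estimation of $|\Tr B/d|^2$ rather than a Hadamard test. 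A Hadamard test would need controlled $U_m$, which is not in the access model. The $r^2$ in the statement is just the squared logarithmic factor $r\log(C/\epsilon)+\log(1/\delta)$ from the union bound, so you need not account for it.

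\textbf{The gap.} The genuine gap is the step you flag as the main obstacle, and the route you sketch does not close it at the claimed cost.

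With $\tilde O(rm/\epsilon^2)$ tomography you only get $\|\hat\rho_E-\rho_E\|_\infty\le\zeta=\Theta(\epsilon)$. At that resolution you cannot identify, let alone discard, eigendirections with eigenvalue of order $\epsilon^2/r$. A Davis--Kahan bound gives nothing uniform, because $\lambda_r(\rho_E)$ can be arbitrarily close to $0$, so there is no eigengap.

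What you can guarantee, by Weyl's inequality, is $\|Q\rho_EQ\|_\infty\le 2\zeta$, where $Q$ projects onto the complement of the top-$r$ eigenspace of $\hat\rho_E$. But if you then control the discarded component via $\|K(Qc)\|_1/d\le\sqrt{\bra{Qc}\rho_E\ket{Qc}}$ and a triangle inequality, you only get $g-g_{\hat S}\le\sqrt{2\zeta}$. That forces $\zeta\sim\epsilon^2$, and hence a tomography cost scaling as $\epsilon^{-4}$ rather than $\epsilon^{-2}$. Your target of estimating $g$ itself to accuracy $\epsilon$ is part of the problem.

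\textbf{The fix.} Work with $g^2$, which is what $F_{\mathrm{OAGF}}$ depends on. For an optimizer $c^*$, write $c^*=\Pi c^*+Qc^*$. Then
$g\le g_{\hat S}\|\Pi c^*\|+\sqrt{\|Q\rho_EQ\|_\infty}\,\|Qc^*\|\le\sqrt{g_{\hat S}^2+\|Q\rho_EQ\|_\infty}$
by Cauchy--Schwarz, using $\|\Pi c^*\|^2+\|Qc^*\|^2=1$. Hence $0\le g^2-g_{\hat S}^2\le 2\zeta$, which is linear in $\zeta$.

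This is the paper's Lemma~\ref{lem:projected F}, proved there via $\ket{v(W)}\!\bra{v(W)}\preceq G$ and the Pythagorean split $\|v(W)\|^2=\|\Pi v(W)\|^2+\|Qv(W)\|^2$. Combined with the Weyl argument of Lemma~\ref{lem:estimated-subspace}, it lets $\zeta=\epsilon/8$ suffice and recovers the $rm/\epsilon^2$ term. Your per-candidate $\Phi^+$ estimate naturally outputs $g_c^2$, so nothing else changes.

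(Minor: with your convention $K(c)=(I\otimes\bra{c}_E)U_m(I\otimes\ket{0}_E)$ one has $\|K(c)\|_2^2/d=\bra{c}\rho_E\ket{c}$, without a transpose. Using $\rho_E^T$ would select the conjugate subspace.)
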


Now moving to consider the black-box access model, we demonstrate the following lower bound on the query complexity.
\begin{Thm}[Lower bound, informal]
    Fix any constant $\epsilon \in \left(0,\frac{\sqrt{2}-1}{10}\right)$.
    Given black-box access
    to an unknown $d$-dimensional quantum channel
    $\mathcal{E}$, any algorithm that estimates $F_{\mathrm{OAGF}}(\mathcal{E})$
    to additive error $\epsilon$ with success probability
    at least $2/3$ must make
    \begin{equation}
        \Omega\left(\sqrt{d}\right)
    \end{equation}
    queries to $\mathcal{E}$.
\end{Thm}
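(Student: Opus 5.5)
We will prove the $\Omega(\sqrt{d})$ lower bound by a two-point (distinguishing) reduction. We construct two ensembles of constant-Kraus-rank channels whose OAGF values differ by more than $2\epsilon$, but which no algorithm with $o(\sqrt{d})$ black-box queries can tell apart with advantage $1/3$. Any $\epsilon$-accurate estimator would distinguish them, so it must make $\Omega(\sqrt d)$ queries.

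\emph{Choice of ensembles.} The natural candidates are rank-two channels of the form $\mathcal{E}(\rho)=\tfrac12 U\rho U^\dagger+\tfrac12 V\rho V^\dagger$. Using the formula $F_{\mathrm{avg}}=(d\,F_{e}+1)/(d+1)$ with entanglement fidelity $F_e(\mathcal{E},\mathcal{W})=\sum_k |\Tr(W^\dagger K_k)|^2/d^2$, the OAGF is governed by $\max_W \sum_k|\Tr(W^\dagger K_k)|^2/d^2$.

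In ensemble $A$ we take $V=U$, so the channel is unitary and $F_{\mathrm{OAGF}}=1$. In ensemble $B$ we take $V=UZ$ for a Haar-random $U$, where $Z$ is a random reflection $Z=\mathds{1}-2P$ with $P$ a random projector of rank $d/2$. Then $\Tr(W^\dagger U)$ and $\Tr(W^\dagger UZ)$ cannot both be near $d$: writing $W=U X$, we have $|\Tr X|^2+|\Tr XZ|^2\le$ roughly $d^2\cdot$ const $<2d^2\cdot\tfrac12$-type bounds. The optimum is achieved near $X=(\mathds 1+Z)/\|\cdot\|$-type unitaries, which gives a constant gap of about $1-\tfrac{1+\sqrt2}{2}\cdot\tfrac12$-ish. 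This is where the threshold $(\sqrt2-1)/10$ should come from, and we will compute it exactly. If necessary we will modify $B$ to a "phase" version $V=U e^{i\theta Z}$ so that both hypotheses have the same Kraus-rank structure.

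\emph{Indistinguishability.} The key difficulty is that in ensemble $B$ the channel is not unitary, yet we need it to look like a Haar-random unitary to a $q$-query adaptive algorithm. The plan is to compare both ensembles to a common reference: a channel whose action on any input is $U\rho U^\dagger$ up to a "which-branch" decoherence on a hidden random half-dimensional subspace. Concretely, ensemble $B$ is $U\circ\mathcal{D}_P$, where $\mathcal{D}_P(\rho)=P\rho P+(\mathds 1-P)\rho(\mathds 1-P)$ is the pinching with respect to a random subspace.

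Since $U$ is Haar random and independent, the outer $U$ can be absorbed through unitary invariance. The problem therefore reduces to distinguishing the identity channel from $\mathcal{D}_P$ for a random $P$, interleaved with arbitrary processing and conjugated by an unknown Haar unitary. Equivalently, we must distinguish $U$ from $U\circ \mathcal D_P$ when both $U$ and $P$ are hidden.

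We will then use a hybrid/collision argument in the style of the random-dilation and path-recording frameworks (e.g.\ \cite{YoshidaEtAlRandomDilation,chen2025}, and Haar-twirl / Weingarten estimates). In the purified picture the pinching attaches an environment qubit recording $P$ versus $\mathds 1-P$. For an algorithm of $q$ queries, the trace distance between the final states should be bounded by $O(q^2/d)$, because detecting the decoherence requires the algorithm to produce interference between two query inputs whose relation to the hidden, Haar-rotated $P$ is correlated. This is a birthday-type event with probability $O(1/d)$ per pair. Setting $O(q^2/d)\ge 1/3$ yields $q=\Omega(\sqrt d)$.

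\emph{Expected main obstacle.} The hardest step is the rigorous adaptive bound $O(q^2/d)$. A direct Weingarten expansion over $q$ copies of $U$ and $U^\dagger$ is messy. The approach to try first is to approximate the Haar $U$ by a random permutation-with-phases, or to use the path-recording isometry, which turns the channel query into a purification recording pairs. Under this representation the dephasing on $P$ becomes visible only when recorded indices collide, and we then bound the collision probability by a union bound over the $\binom q2$ pairs.

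A fallback is to restrict to non-adaptive (parallel) strategies first, using the equivalence of parallel access and random Stinespring access for dilation-independent properties, and then extend to adaptive strategies via a standard hybrid over query positions.
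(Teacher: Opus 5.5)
\textbf{There is a genuine gap: your two ensembles can be distinguished with $O(1)$ queries, so the claimed $O(q^2/d)$ trace-distance bound is false.}

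In ensemble $A$ the channel is unitary, so its normalized Choi state is pure. In ensemble $B$ the Kraus operators $UP$ and $U(I-P)$ are Hilbert--Schmidt orthogonal, with $\Tr(P)=\Tr(I-P)=d/2$. Hence $\mathsf{E}$ has spectrum $\{1/2,1/2\}$ and unitarity $\Tr[\mathsf{E}^2]=1/2$, whatever $U$ and $P$ are.

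Concretely, feed a fixed $\ket{\psi}$ into two parallel queries and SWAP-test the outputs. Under $A$ both outputs are the same pure state, so the test always accepts. Under $B$ each output $U\mathcal{D}_P(\ket{\psi}\!\bra{\psi})U^\dagger$ has purity $p^2+(1-p)^2\approx 1/2$, where $p=\bra{\psi}P\ket{\psi}$, so the test accepts with probability about $3/4$.

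This is a two-query non-adaptive attack, so your parallel fallback fails as well. The phase variant $V=Ue^{i\theta Z}$ does not help either, because $A$ stays rank one. The ``birthday'' picture is the step that breaks: decoherence is not visible only when recorded indices collide, since comparing two outputs of the same input exposes it at leading order. (Incidentally, your pair has OEF $1$ versus $1/2$, so the threshold $(\sqrt{2}-1)/10$ would not come out of it.)

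\textbf{Missing idea.} Haar randomization on both sides preserves the Choi spectrum, and invariants such as $\Tr[\mathsf{E}^2]$ are estimable with $d$-independent queries. So the hard pair must agree in these invariants while still differing in OEF.

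The paper builds such a pair. It takes rank-two channels on $d=4l$ acting diagonally in the basis $\ket{\alpha,u}$. Their $4\times4$ environment Gram matrices $C_Y,C_N$ are isospectral, both with spectrum $\{3,1,0,0\}$, so they have equal unitarity. Yet $F_{\mathrm{OEF}}=5/8$ versus $(3+2\sqrt{2})/8$, which is the source of the $(\sqrt{2}-1)/10$ threshold.

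The paper then conjugates by independent Haar unitaries on input and output. It replaces the output unitary by the standard path-recording oracle and the input unitary by the fresh-$u$ restricted oracle $P^{\#}$. The resulting hybrid state is a sum over all $\pi\in S_t$ weighted by $\prod_{c\in\mathrm{Cyc}(\pi)}\Tr(C_b^{|c|})$. Non-identity permutations carry no $1/d$ suppression.

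Your pair fits this framework: absorbing $P=V^\dagger\Pi V$ into the Haar unitaries gives the identity channel versus $\mathcal{D}_\Pi$. Their Gram matrices are the all-ones $2\times2$ matrix versus the $2\times2$ identity, which already differ in $\Tr(C^2)$, i.e.\ at a single transposition — exactly your SWAP-test attack. Isospectrality of $C_Y$ and $C_N$ is what makes the paper's hybrid independent of $b$. The $O(t^2/d)$ bound then comes only from the two path-recording approximation errors.
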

Interestingly, even if we are promised that the Kraus rank of the unknown channel is 2, every coherent algorithm requires an exponential number of queries in terms of the number of qubits.

Combining these upper and lower bounds, for constant Kraus rank $r=\bigO{1}$, polynomially bounded environment dimension $m=\poly(n)$, and inverse-polynomial additive accuracy $\epsilon=\frac{1}{\poly(n)}$, the OAGF can be estimated with polynomially many queries under the Stinespring dilation access model, whereas $\Omega(\sqrt d)$ queries are required when only black-box access to the channel is provided. In particular, for $n$-qubit systems where $d=2^n$, this result implies an exponential separation in the number of qubits.

\subsection{Technical overview}

For our analysis, it is convenient to work with \textit{optimal entanglement fidelity (OEF)}, which admits the following expression in terms of normalized Choi states
\begin{equation}
    F_{\mathrm{OEF}}(\mathcal{E}):=\max_{W\in\mathbb{U}(d)}\Tr[\mathsf{E}\mathsf{W}],
    \label{eq:OEF}
\end{equation}
where we denote normalized Choi states of unitary channel $\mathcal{W}$ and $\mathcal{E}$ by $\mathsf{W}$ and $\mathsf{E}$, respectively.
Since the OAGF and OEF are equivalent up to a constant rescaling of the additive error (see Lemma~\ref{lem:equivalent OEF OAGF}), we work with OEF in the technical analysis, while stating our main results in terms of OAGF, which is more commonly used in the literature concerning quantum benchmarking.
\paragraph{Upper bound.}
Our algorithm is based on \textit{hypothesis selection}~\cite{B_descu2024}.
The basic idea of hypothesis selection is to generate a set of candidates, estimate an objective function for each candidate, and select the one with the largest estimated value.
Although the definition of OEF in Eq.~\eqref{eq:OEF} requires us to find the optimal unitary among all $d$-dimensional unitaries, we focus on the channel's rank structure and make use of the following identity (Lemma~\ref{lem:oef-variational}),

\begin{equation}
    F_{\mathrm{OEF}}(\mathcal{E})=\max_{\ket{\psi}\in\mathcal{H}_E}\frac{\|A_\psi\|_1^2}{d^2}, \quad A_\psi:=\bigl(I_\mathcal{H}\otimes\bra{\psi}_E\bigr)U_{m}\bigl(I_\mathcal{H}\otimes\ket{0}_E\bigr).
\end{equation}
This identity reduces the optimization over $d$-dimensional unitaries to an optimization over quantum states in an $m$-dimensional space $\mathcal{H}_E$.
Moreover, to estimate the OEF, we show that it is sufficient to consider states within a particular $r$-dimensional subspace, where $r$ is the Kraus rank of the channel and we also provide an algorithm that specifies such a subspace (Lemma~\ref{lem:projected F}).
This substantially reduces the number of degrees of freedom whenever the Kraus rank is small.
Based on this reduction, we take pure environment states as our hypotheses and estimate the normalized trace norm associated with each candidate.
Since $A_\psi$ above can be block-encoded using the dilation unitary, we can perform this estimation using \textit{quantum singular value transformation (QSVT)}~\cite{Gilyen2019SVD}, which requires access to the dilation unitary and its inverse (Lemma~\ref{lem:QSVT}).
Finally, we output the largest estimated objective value among the candidates as our estimate of the OEF.
\paragraph{Lower bound.}
We establish the lower bound by reducing quantum channel discrimination to the task of estimating OEF.
We first construct two quantum channels whose OEF values differ by a constant and randomize their input and output bases by applying Haar-random unitaries before and after each channel, thereby obtaining two ensembles of quantum channels. 
Since OEF is invariant under composition with unitary channels (Lemma~\ref{lem:unitary equivalent}), this randomization preserves the constant gap between the two unknown channels.
Thus, any algorithm that estimates OEF with a sufficiently small constant additive error can distinguish between the resulting two channel ensembles.
However, we show that distinguishing these two ensembles with bounded error requires $\Omega(\sqrt{d})$ queries (Lemma~\ref{Lem:distance Haar}).
To analyze this discrimination problem, we use the concept of a \textit{path-recording oracle}~\cite{Ma2024}.
Such an oracle coherently records query input–output pairs in an auxiliary register, allowing us to approximately reproduce the output state of an algorithm averaged over the Haar-random unitary.
Using this representation, we bound the trace distance between the algorithm's average output states under the two channel ensembles, thereby obtaining a lower bound on the number of queries required for discrimination.
Consequently, estimating the OEF with a sufficiently small constant additive error also requires $\Omega(\sqrt{d})$ channel queries.
\subsection{Related work}

\paragraph{Agnostic process tomography.}
Our task is essentially a quantum optimization problem over quantum channels, specifically over unitary channels.
As a related task, agnostic process tomography has been studied recently~\cite{Wadhwa2025}.
Roughly speaking, this task aims to output a description of a channel that approximates a given quantum channel.
The quality of the approximation is evaluated against the best approximation achievable within a predefined class of channels.
Algorithms that solve this task have been proposed for various classes of channels, such as Pauli channels~\cite{Wadhwa2025} and Clifford unitary channels~\cite{dutt2026}.
In contrast to these works, we consider the class of all unitary channels and estimate the entanglement fidelity between a given quantum channel and the closest unitary channel, without requiring its description as output.
\paragraph{Related learning tasks.}
Our task is to estimate a quantity that represents the closeness between a given channel and the set of all unitary channels. Related tasks are \textit{unitarity estimation} and \textit{unitary certification}.

\textit{Unitarity} $\mathfrak{u}$ of a channel $\mathcal{E}$ is a quantity defined as the purity of its normalized Choi state $\mathfrak{u}:=\Tr{[\mathsf{E}^2]}$~\cite{Chen2023}, which equals $1$ if and only if $\mathcal{E}$ is a unitary channel~\cite{montanaro2018}.
Our target quantity, OAGF, is bounded above and below in terms of the unitarity as follows~\cite{Chen2023} 
\begin{equation}
    \frac{d}{d+1}\mathfrak{u}^2+\frac{1}{d+1}\sqrt{\mathfrak{u}}\le F_{\textup{OAGF}}\le\frac{d}{d+1}\sqrt{\mathfrak{u}}+\frac{1}{d+1},
\end{equation}
but OAGF is not determined by unitarity alone.
With coherent access to quantum channels, algorithms that estimate the unitarity to additive precision $\epsilon$ using $\bigO{1/\epsilon^2}$ queries to the channel have been demonstrated~\cite{Harry2001,Chen2023}, and this bound is optimal~\cite{Chen2023}. 
In contrast, estimating OAGF requires $\Omega(\sqrt{d})$ channel queries in the worst case, even at constant additive precision, as shown in Theorem~\ref{Thm:lower bound}.
Moreover, Ref.~\cite{LiuEtAlPurification} presents an algorithm that estimates unitarity to constant additive precision using a constant number of queries to a dilation isometry with an environment of constant dimension in an incoherent setting.
However, establishing a separation in query complexity between access to a quantum channel and access to its dilation isometry was left open in that work.

\textit{Unitary certification} asks whether an unknown quantum channel equals a prescribed unitary channel or is $\epsilon$-far from it in diamond norm~\cite{fawzi2023,rosenthal2024,Jeon_2025,chen2026}.
For this task, Ref.~\cite{chen2026} established optimal query complexity bounds of $\Theta(d/\epsilon)$ under coherent black-box channel access and $\Theta(\sqrt{d}/\epsilon)$ under source-code access, where the latter provides access to a Stinespring dilation unitary of the channel, its inverse, and their controlled versions.

They demonstrate a quadratic improvement in the dependence on the system dimension $d$.
For OAGF estimation, our results instead establish an exponential separation in the number of system qubits: at fixed constant additive precision and with a constant-dimensional environment, a dimension-independent number of queries to the dilation unitary and its inverse suffices, whereas black-box channel access requires $\Omega(\sqrt{d})$ queries in the worst case.
Lastly, note that our algorithm does \textit{not} require controlled access to the dilation unitary or its inverse.
Controlled access is not essential for the result of Ref.~\cite{chen2026} either, as it can be removed using the \textit{decontrolization} result of Ref.~\cite{Tang2025}.

\paragraph{Dilation of quantum channels.}
The task of implementing a dilation unitary/isometry channel of an unknown quantum channel has been studied from the perspective of fundamental limitations in quantum mechanics and higher-order transformations of quantum channels~\cite{liu2025,deltoro2026}.
In particular, universal exact implementation of a dilation isometry using finitely many queries to the channel has been shown to be impossible, even probabilistically~\cite{liu2025}.
Their theorem also implies a dimension-dependent query lower bound for universal approximate implementation.
On the other hand, the channel obtained by averaging \textit{parallel} applications of a randomly chosen dilation isometry can be implemented using the same number of queries to the original channel~\cite{YoshidaEtAlRandomDilation,GirardiEtAlRandomStinespring,chen2025}, results which are inspired by the concept of a random purification channel~\cite{soleimanifar2022testingmatrixproductstates,Chen2026localunitary,TangWrightZhandryConjugate,girardi2026randompurificationchannelsimple}. 
Consequently, for channel learning algorithms that work for every choice of dilation, parallel access to dilation isometries cannot reduce query complexity~\cite{chen2025}.
However, concerning \textit{sequential} random dilation, 
the construction shown in Ref.~\cite{YoshidaEtAlRandomDilation} requires $O(d^2)$ channel queries, where $d$ is the input dimension.
Whether this dimension dependence can be reduced to an overhead polynomial in the number of qubits was left open in that work.
Here, we establish a dimension-dependent query lower bound for universal approximate simulation of circuits using both a dilation unitary and its inverse, even when their outputs are averaged over random dilations.
Consequently, our result rules out a universal simulator whose channel query complexity is polynomial in both the number of qubits and the number of queries to a dilation unitary and its inverse, even when the target circuit is averaged over random choices of the dilation unitary (Corollary~\ref{cor:no-go dilation}).

\subsection{Open Problems}
\begin{itemize}
    \item In this work, we established an exponential separation in query complexity between access to a quantum channel and access to a Stinespring dilation unitary (together with its inverse) realizing that channel.
    It remains open whether a similar exponential separation exists when only forward queries to the dilation unitary are allowed, i.e., the inverse is not provided.
    It is also unclear whether such a separation between channel access models can be achieved when access is restricted to a dilation \textit{isometry}.
    These questions are also related to whether computations that make sequential queries to a random dilation admit efficient approximate simulation using only polynomially many queries in terms of the number of qubits to the underlying quantum channel.
    Resolving this simulation question would help clarify the computational power of dilation access without inverse queries.
    \item Our upper bound for estimating the OAGF depends on both the Kraus rank of the channel and the environment dimension of Stinespring dilation unitary of the channel.
    For fixed Kraus rank and estimation accuracy, OAGF can be estimated using $\poly(n)$ queries whenever the environment dimension satisfies $m=\poly(n)$.
    However, it remains unclear whether this dependence on the environment dimension is necessary.
    Open questions include whether the upper bound can be improved to allow efficient estimation for larger environment dimensions and whether query lower bounds can be established that demonstrate an unavoidable dependence on the environment dimension.
    More generally, an important open direction is to characterize how the environment dimension affects the computational power of access to a Stinespring dilation unitary.
    Given access to a dilation unitary, another interesting question is how many queries are required to approximately implement a dilation unitary that realizes the same quantum channel with a different environment dimension.
\end{itemize}
\subsection{Paper organization}
This paper is organized as follows.
We first introduce the necessary basic notation and concepts of unitary dilation, and the properties of the quantity we learn in Section \ref{sec:preliminaries}.
Next, we present an estimation algorithm that makes use of access to a dilation unitary and its inverse in Section~\ref{sec:stinespring-upper-bound}, thereby providing an upper bound within said access model,  before establishing a lower bound for black-box channel access in Section~\ref{sec:lower bound}.
Finally, we discuss the implications of our results in Section~\ref{sec:separation}.
\section{Preliminaries}
\label{sec:preliminaries}
\subsection{Quantum channels and the Choi representation}
We denote by $\mathcal{L}(\mathcal{H})$ the set of linear operators acting on a Hilbert space $\mathcal{H}$.
Let
\begin{equation}
    \mathcal{E}:\mathcal{L}(\mathcal{H}_I)\longrightarrow\mathcal{L}(\mathcal{H}_O) \label{eq:quantum-channel}
\end{equation}
be a quantum channel, namely, a \textbf{completely positive and trace-preserving (CPTP)} linear map.
Throughout this paper, we assume that $\dim(\mathcal{H}_I)=\dim(\mathcal{H}_O)=d$ and identify both spaces with a common Hilbert space $\mathcal{H}$.
These identifications are fixed throughout.
We also denote the set of unitary operators acting on a $d$-dimensional Hilbert space by $\mathbb{U}(d)$ and represent a unitary channel induced by a unitary matrix $U\in \mathbb{U}(d)$ and its inverse channel as
\begin{equation}
	\mathcal{U}(\cdot)=U(\cdot)U^\dagger,\quad\mathcal{U}^\dagger(\cdot)=U^\dagger(\cdot)U.
\end{equation}
We use the Choi--Jamio{\l}kowski isomorphism \cite{Choi1975,JAMIOLKOWSKI1972} to represent the quantum channels.
Fix an orthonormal basis $\{\ket{i}\}_{i=1}^{d}$ of $\mathcal{H}$.
A quantum channel $\mathcal{E}:\mathcal{L}(\mathcal{H})\longrightarrow\mathcal{L}(\mathcal{H}) $ is represented as a normalized Choi state
\begin{equation}
    \mathsf{E}:=\frac{1}{d}\sum_{i,j=1}^d\ket{i}\!\bra{j}\otimes \mathcal{E}(\ket{i}\!\bra{j}).
\end{equation}

\subsection{Stinespring dilation unitaries}
\label{subsec:Stinesping dilation}
For any quantum channel $\mathcal{E}$, Stinespring's theorem \cite{Stinespring1955} gives the existence of a corresponding dilation isometry channel $\mathcal{V}_m:\mathcal{L}(\mathcal{H})\longrightarrow\mathcal{L}(\mathcal{H}\otimes\mathcal{H}_{E_O})$ as
\begin{equation}
    \label{eq:isometry}
    \mathcal{E}(\cdot)=\Tr_{\mathcal{H}_{E_O}}[\mathcal{V}_m(\cdot)]=\Tr_{\mathcal{H}_{E_O}}[V_m (\cdot) V_m^\dagger],
\end{equation}
where $V_m$ is an isometry operator and $V_m^\dagger$ denotes its adjoint.
We denote the dimension of the output environment system as $m:=\dim\mathcal{H}_{E_O}$.
To extend this isometry channel to a unitary channel, we introduce an input environment system with Hilbert space $\mathcal{H}_{E_I}$.
Since the input and output system dimensions are equal, the unitary extension requires $\dim\mathcal{H}_{E_I}=\dim\mathcal{H}_{E_O}=m$.
We therefore identify both environment spaces with a common $m$-dimensional Hilbert space $\mathcal{H}_E$.
These identifications are fixed throughout.
We denote an orthonormal basis of $\mathcal{H}_E$ by $\{\ket{i}_E\}_{i=1}^{m}$.
Let $\ket{0}_E$ be a fixed unit vector in $\mathcal{H}_E$, and define an \textit{$m$-dilation unitary} $U_{m}$ as
\begin{align}
    \label{eq:stinespring-action}
    V_m&=U_{m}(I\otimes \ket{0}_E),
\end{align}
and an \textit{$m$-dilation unitary channel} $\mathcal{U}_{m}:\mathcal{L}(\mathcal{H}\otimes\mathcal{H}_E)\longrightarrow\mathcal{L}(\mathcal{H}\otimes\mathcal{H}_E)$ as
\begin{align}
    \mathcal{U}_{m}(\cdot)&=U_{m}(\cdot)\,U_{m}^\dagger.
\end{align}
It follows that
\begin{equation}
    \mathcal{E}(\rho)=\Tr_{\mathcal{H}_E}\left[U_{m}\left(\rho\otimes\ket{0}\!\bra{0}_E\right)U_{m}^\dagger\right].
    \label{eq:channel-from-stinespring-unitary}
\end{equation}
Then we can define Kraus operators $\{K_i\}_{i=1}^{m}$ as
\begin{equation}
    U_{m}\left(I\otimes\ket{0}_E\right)=\sum_{i=1}^{m}\left(K_i\otimes \ket{i}_E\right).
\end{equation}
These Kraus operators satisfy the completeness relation $\sum_{i=1}^m K_i^\dagger K_i=I_{\mathcal{H}}$, where $I_{\mathcal{H}}$ is the identity operator on the system $\mathcal{H}$.
Since $U_{m}$ has a degree of freedom, Kraus decomposition $\{K_i\}_{i=1}^m$ is non-unique for a given channel.
The conditions in Eqs.\eqref{eq:isometry} and \eqref{eq:stinespring-action} determine the action of $U_{m}$ only on the input subspace
\begin{equation}
    \mathcal{H}\otimes\operatorname{span}\{\ket{0}_E\}. \label{eq:extension-input-subspace}
\end{equation}
This input subspace has dimension $d$, whereas its orthogonal complement has dimension $d(m-1)$.
The action of $U_{m}$ on the orthogonal complement can be chosen arbitrarily subject to unitarity.
Consequently, a given Stinespring isometry generally admits multiple unitary extensions.
Nevertheless, for every $\ket{\psi}_E\in\mathcal{H}_E$,
\begin{equation}
    \left(I\otimes\bra{\psi}_E\right)U_{m}\left(I\otimes\ket{0}_E\right)=\left(I\otimes\bra{\psi}_E\right)V_m \label{eq:extension-independent-block}
\end{equation}
depends only on the Stinespring isometry and not on the choice of unitary completion.
We denote a set of $m$-unitary dilation channels as
\begin{equation}
    \mathrm{Dilu}_{m}(\mathcal{E}):=\Bigl\{\mathcal{U}_{m}:\text{unitary channel on }\mathcal{L}(\mathcal{H}\otimes\mathcal{H}_E)\ \Bigm|\;\mathcal{E}(\rho)=\Tr_{\mathcal{H}_E}\left[\mathcal{U}_{m}\left(\rho\otimes\proj{0}_{E}\right)\right],\forall\rho\in\mathcal{L}(\mathcal{H})\Bigr\}.
\end{equation}
We define Kraus rank $r$ as $r:=\rank \mathsf{E}$. This is intrinsic to $\mathcal{E}$, whereas the environment dimension $m$ depends on the chosen dilation unitary.
In general, $r\leq m$, with equality for a minimal Stinespring dilation~\cite{Watrous2018}.
The effective Kraus subspace inside $\mathcal{H}_E$ is the support of the environment state
\begin{equation}
    G:=\Tr_{\mathcal{H}}\left[U_{m}\left(\frac{I_{\mathcal{H}}}{d}\otimes\ket{0}\!\bra{0}_E\right)U_{m}^\dagger\right]=\sum_{i,j}\frac{\Tr K_i^\dagger K_j}{d}\ket{j}\!\bra{i}_E.\label{eq:environment-state}
\end{equation}
The state $G$ satisfies
\begin{equation}
    \rank G=r. \label{eq:environment-state-rank}
\end{equation}

\subsection{Optimal entanglement fidelity}

For a unitary channel $\mathcal{W}(\rho)=W\rho W^\dagger$,
the entanglement fidelity of $\mathcal{E}$ relative to
$\mathcal{W}$, evaluated at the maximally mixed input,
is given by
\begin{equation}
    F_{\mathrm e}(\mathcal{E},\mathcal{W}):=\Tr[\mathsf{E}\mathsf{W}],    
\end{equation}

where $\mathsf{E}$ and $\mathsf{W}$ denote the normalized
Choi states of $\mathcal{E}$ and $\mathcal{W}$, respectively.
We define the optimal entanglement fidelity (OEF) of
$\mathcal{E}$ by
\begin{equation}
    F_{\mathrm{OEF}}(\mathcal{E}):=\max_{W\in\mathbb{U}(d)}
    F_{\mathrm e}(\mathcal{E},\mathcal{W})=\max_{W\in\mathbb{U}(d)}\Tr[\mathsf{E}\mathsf{W}].
    \label{eq:oef-definition}
\end{equation}
This quantity satisfies
$0\leq F_{\mathrm{OEF}}(\mathcal{E})\leq1$,
with $F_{\mathrm{OEF}}(\mathcal{E})=1$ if and only if
$\mathcal{E}$ is a unitary channel.

The standard relation between average gate fidelity and
entanglement fidelity gives the following identity.

\begin{Lem}[\cite{HorodeckiEtAl1999,Nielsen2002}]
    \label{lem:equivalent OEF OAGF}
    For any quantum channel $\mathcal{E}$ on a
    $d$-dimensional Hilbert space,
    \begin{equation}
        F_{\mathrm{OAGF}}(\mathcal{E})=\frac{dF_{\mathrm{OEF}}(\mathcal{E})+1}{d+1}.
    \end{equation}
\end{Lem}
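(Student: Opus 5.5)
The plan is to prove the identity pointwise for each fixed unitary $W$ and then take the maximum. Concretely, I will show that for every $W\in\mathbb{U}(d)$,
\begin{equation}
    F_{\mathrm{avg}}(\mathcal{E},\mathcal{W})=\frac{d\,F_{\mathrm e}(\mathcal{E},\mathcal{W})+1}{d+1}.
\end{equation}
The map $x\mapsto (dx+1)/(d+1)$ is strictly increasing and affine. Maximizing both sides over $W\in\mathbb{U}(d)$ therefore commutes with it, and the maxima are attained because $\mathbb{U}(d)$ is compact and both fidelities are continuous in $W$. This gives the lemma at once.

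For the pointwise identity, I first absorb $W$ into the channel. Set $\mathcal{F}:=\mathcal{W}^\dagger\circ\mathcal{E}$. Then $\bra{\psi}W^\dagger\mathcal{E}(\proj{\psi})W\ket{\psi}=\bra{\psi}\mathcal{F}(\proj{\psi})\ket{\psi}$, so $F_{\mathrm{avg}}(\mathcal{E},\mathcal{W})=F_{\mathrm{avg}}(\mathcal{F},\mathrm{id})$.

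On the Choi side, $\mathsf{W}=\proj{\Phi_W}$ with $\ket{\Phi_W}=(I\otimes W)\ket{\Phi}$ and $\ket{\Phi}=d^{-1/2}\sum_i\ket{i}\ket{i}$. Hence
\begin{equation}
    \Tr[\mathsf{E}\mathsf{W}]=\bra{\Phi}(\mathrm{id}\otimes\mathcal{F})(\proj{\Phi})\ket{\Phi}=F_{\mathrm e}(\mathcal{F},\mathrm{id}).
\end{equation}
So it suffices to treat $W=I$.

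Next I take a Kraus decomposition $\mathcal{F}(\rho)=\sum_k F_k\rho F_k^\dagger$. The entanglement fidelity becomes $F_{\mathrm e}(\mathcal{F},\mathrm{id})=\frac{1}{d^2}\sum_k|\Tr F_k|^2$. The average fidelity is
\begin{equation}
    F_{\mathrm{avg}}=\sum_k\int d\psi\,|\bra{\psi}F_k\ket{\psi}|^2=\sum_k\Tr\!\left[(F_k\otimes F_k^\dagger)\int d\psi\,\proj{\psi}^{\otimes 2}\right].
\end{equation}
I evaluate the integral with the standard second-moment formula $\int d\psi\,\proj{\psi}^{\otimes2}=\frac{I+\mathrm{SWAP}}{d(d+1)}$. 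Using $\Tr[(A\otimes B)\mathrm{SWAP}]=\Tr[AB]$, this yields
\begin{equation}
    F_{\mathrm{avg}}=\frac{1}{d(d+1)}\sum_k\left(|\Tr F_k|^2+\Tr F_k^\dagger F_k\right)=\frac{d^2F_{\mathrm e}+d}{d(d+1)},
\end{equation}
where trace preservation gives $\sum_k\Tr F_k^\dagger F_k=\Tr I=d$. This simplifies to $(dF_{\mathrm e}+1)/(d+1)$.

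The one nontrivial ingredient is the Haar second-moment identity. I will justify it by Schur's lemma: the integral commutes with $V\otimes V$ for all $V$, so it is supported on the symmetric subspace and proportional to the projector $(I+\mathrm{SWAP})/2$ there. The normalization is then fixed by the fact that the symmetric subspace has dimension $d(d+1)/2$. Everything else is bookkeeping, so no real obstacle remains. Alternatively, one could simply cite \cite{HorodeckiEtAl1999,Nielsen2002} for the pointwise identity.
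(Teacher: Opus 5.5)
Your proposal is correct and follows the same route as the paper, which gives no separate proof: it invokes the standard pointwise relation $F_{\mathrm{avg}}(\mathcal{E},\mathcal{W})=(dF_{\mathrm e}(\mathcal{E},\mathcal{W})+1)/(d+1)$ from the cited references and then maximizes over $W$ through the increasing affine map. Your reduction to $W=I$ and your Haar second-moment computation simply make that cited identity self-contained, and the details check out: the Choi convention, the identity $\sum_k\Tr F_k^\dagger F_k=d$, and the normalization $(I+\mathrm{SWAP})/(d(d+1))$.
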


Given an access model, an accuracy parameter
$\epsilon'>0$, and a failure probability
$\delta\in(0,1)$, the OEF estimation task is to output
a classical estimate $\widehat{F}_{\mathrm{OEF}}$ satisfying
\begin{equation}
    \Pr\left[\left|\widehat{F}_{\mathrm{OEF}}-F_{\mathrm{OEF}}(\mathcal{E})\right|\leq\epsilon'\right]\geq1-\delta.
    \label{eq:oef-estimation-task}
\end{equation}
By Lemma~\ref{lem:equivalent OEF OAGF}, this task with
$\epsilon'=(d+1)\epsilon/d$ is equivalent to
OAGF estimation to additive error $\epsilon$.
In particular, an OEF estimate can be converted into
an OAGF estimate by setting
\begin{equation}
    \widehat{F}_{\mathrm{OAGF}}:=\frac{d\widehat{F}_{\mathrm{OEF}}+1}{d+1}.    
\end{equation}
This conversion requires no additional oracle queries
and preserves the success probability.
We use the OEF formulation in the technical analysis
and translate the resulting guarantees into OAGF
in our main statements.

In Section~\ref{sec:stinespring-upper-bound}, we derive
a variational representation of $F_{\mathrm{OEF}}(\mathcal{E})$
that underlies our estimation algorithm with access to
a dilation unitary channel and its inverse.
\section{Query upper bound with access to a Stinespring dilation unitary}
\label{sec:stinespring-upper-bound}

We first present an algorithm that achieves the following query upper bound. Throughout this section, $m$ denotes the dimension of the environment of the given dilation, whereas $r (\le m)$ denotes the Kraus rank of the channel. 

\begin{Thm}
    \label{main-upper-bound}
    Let $\mathcal{E}$ be a quantum channel on a $d$-dimensional Hilbert space $\mathcal{H}$, with known Kraus rank $r\ge2$.
    Let $\,\mathcal{U}_{m}\in\mathrm{Dilu}_{m}(\mathcal{E})$ be an $m$-dilation unitary channel of $\mathcal{E}$, where $m\ge r$.
    For any $\epsilon,\delta\in(0,1)$, there is an algorithm that outputs a classical estimate $\widehat F_{\mathrm{OAGF}}\in[0,1]$ satisfying
    \begin{equation}
        \Pr\!\left[
            \left|\widehat F_{\mathrm{OAGF}}-F_{\mathrm{OAGF}}(\mathcal{E})\right|\le\epsilon
        \right]\ge 1-\delta
    \end{equation}
    using
    \begin{equation}
    \label{eq:main-query-upper-bound}
        \bigO{\frac{rm}{\epsilon^2}\log\left(\frac{3}{\delta}\right)+\frac{1}{\epsilon^3}\left(\frac{64}{\epsilon}\right)^{2r-1}\left[r\log\!\left(\frac{64}{\epsilon}\right)+\log\!\left(\frac{3}{\delta}\right)\right]^2}
    \end{equation}
    queries to $\mathcal{U}_{m}$ and $\mathcal{U}_{m}^\dagger$ in total.
\end{Thm}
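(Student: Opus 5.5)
The plan is to carry out the three-stage strategy from the technical overview, with each stage's error and failure probability budgeted separately (error $\epsilon'/4$ or so per stage, failure $\delta/3$ per stage, which explains the $\log(3/\delta)$ factors). By Lemma~\ref{lem:equivalent OEF OAGF} it suffices to estimate $F_{\mathrm{OEF}}$ to additive error $\epsilon'=\Theta(\epsilon)$, since $(d+1)/d\le 2$.

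\textbf{Step 1: variational formula.} We first prove $F_{\mathrm{OEF}}(\mathcal{E})=\max_{\psi}\|A_\psi\|_1^2/d^2$ over unit vectors $\ket{\psi}\in\mathcal{H}_E$. Write $\Tr[\mathsf{E}\mathsf{W}]=\frac{1}{d^2}\sum_i|\Tr(W^\dagger K_i)|^2$. Maximizing $\sum_i|\Tr W^\dagger K_i|^2$ equals $\max_{\psi}|\Tr W^\dagger A_\psi|^2$ with $A_\psi=\sum_i\bar\psi_i K_i$ (Cauchy--Schwarz on the vector $(\Tr W^\dagger K_i)_i$). Exchanging the two maxima and using $\max_W|\Tr W^\dagger A|=\|A\|_1$ gives the identity.

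Next we restrict $\psi$ to $\operatorname{supp}G$. Components of $\psi$ orthogonal to $\operatorname{supp}G$ give $A=0$, because $\Tr A^\dagger A=d\bra{\psi}G\ket{\psi}$ up to conjugation. Hence the maximum is attained on an $r$-dimensional subspace.

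\textbf{Step 2: learning the subspace.} We prepare $G$ by applying $U_m$ to $I/d\otimes\proj{0}$. One query per copy suffices, using half of a maximally entangled state. We then run state tomography on the $m$-dimensional environment. This gives $\hat G$ with $\|\hat G-G\|_1\le\eta$ using $\bigO{rm\log(1/\delta)/\eta^2}$ copies, which is the first term. Let $\hat P$ be the projector onto the top $r$ eigenvectors of $\hat G$.

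The needed statement, presumably Lemma~\ref{lem:projected F}, is that $\max_{\psi\in\operatorname{ran}\hat P}\|A_\psi\|_1^2/d^2$ is within $\bigO{\epsilon}$ of $F_{\mathrm{OEF}}$. Two facts drive this. First, $\|A_\psi-A_{\psi'}\|_1\le \sqrt{d}\,\|A_{\psi-\psi'}\|_2=d\sqrt{\bra{\psi-\psi'}G\ket{\psi-\psi'}}$. Second, tiny eigenvalues of $G$ contribute little. Since the bound is in terms of $G$-weighted distance rather than eigengap, I expect it to work without a gap assumption: directions of $G$ missed by $\hat P$ must carry $G$-weight at most $O(\eta)$. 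This yields an error $O(\sqrt\eta)$ in $\|A\|_1/d$, so we take $\eta=\Theta(\epsilon^2)$.

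This step is the main obstacle. I would first try to bound $\|(I-\hat P)G(I-\hat P)\|$ by $O(\eta)$ directly from the Ky Fan maximum principle, rather than through Davis--Kahan, because Davis--Kahan would require an eigengap we do not have.

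\textbf{Step 3: net plus QSVT.} We take an $\epsilon/64$-net $\mathcal{N}$ of the unit sphere in $\operatorname{ran}\hat P\cong\mathbb{C}^r$, of size $(64/\epsilon)^{2r-1}$ up to constants. Lipschitz continuity from Step 2 (with $\bra{\phi}G\ket{\phi}\le 1$) shows the net loses only $O(\epsilon)$.

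For each $\psi\in\mathcal{N}$, $A_\psi$ is a block encoding obtained from $U_m$. We prepare $\ket{\psi}$ and use $U_m$ with ancilla $\ket{0}_E$ and postselection on $\ket{\psi}_E$, implemented by conjugating with the state-preparation unitary. Lemma~\ref{lem:QSVT} then estimates the normalized trace norm $\|A_\psi\|_1/d$ to additive $\epsilon$. It does this by applying a polynomial approximation of $|x|$ of degree $\bigtilO{1/\epsilon}$ to the singular values of $A_\psi$, acting on the maximally entangled input, followed by amplitude estimation or sampling, giving $\tilde O(1/\epsilon^3)$ queries per estimate.

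A union bound over $|\mathcal{N}|$ candidates requires confidence $\delta/(3|\mathcal{N}|)$. This produces the $[r\log(64/\epsilon)+\log(3/\delta)]$ factor; I would expect the square to come from median-of-means combined with the QSVT degree. We output the maximum of the squared estimates and convert to OAGF, clipping to $[0,1]$. Summing the three error contributions and the three failure probabilities completes the argument.
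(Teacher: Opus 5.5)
Your architecture matches the paper's: the variational formula, restriction to $\operatorname{supp}G$, tomography of $G$ followed by the top-$r$ eigenspace of $\widehat G$, a net on that $r$-dimensional sphere, QSVT-based estimation of $\|A_\psi\|_1/d$ on a maximally entangled input, and a union bound. The problem is Step~2. You bound the restriction loss through the triangle inequality $\|A_\psi-A_{\psi'}\|_1\le d\sqrt{\bra{\psi-\psi'}G\ket{\psi-\psi'}}$. With that route, a leaked $G$-weight of $O(\eta)$ costs $O(\sqrt{\eta})$ in $F$, so you are forced to take $\eta=\Theta(\epsilon^2)$. Tomography then needs $O(rm\log(1/\delta)/\eta^2)=O(rm\log(1/\delta)/\epsilon^4)$ copies, not the $O(rm\log(3/\delta)/\epsilon^2)$ first term of the theorem. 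Calling it ``the first term'' is inconsistent with your own choice of $\eta$.

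The missing idea is that the restriction loss is \emph{linear} in the leaked weight. You get this from the quadratic structure of $f$ rather than a Lipschitz bound on $\|A_\psi\|_1$. Set $\ket{v(W)}_E:=\frac1d\sum_i\Tr(W^\dagger K_i)\ket{i}_E$. Then $F_{\mathrm{OEF}}=\max_W\|v(W)\|_2^2$ and $f(\psi)=\max_W|\braket{\psi|v(W)}|^2$. Hence, for any subspace with projector $\Pi_P$, the restricted optimum is $F_P=\max_W\|\Pi_P v(W)\|_2^2$. Hilbert--Schmidt Cauchy--Schwarz gives $|\braket{\psi|v(W)}|^2=|\Tr(W^\dagger A_\psi)|^2/d^2\le\|A_\psi\|_2^2/d=\bra{\psi}G\ket{\psi}$, i.e.\ $\ket{v(W)}\!\bra{v(W)}\preceq G$. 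Pythagoras then yields
\[
F_{\mathrm{OEF}}-F_P\le\max_W\|(I-\Pi_P)v(W)\|_2^2\le\|(I-\Pi_P)G(I-\Pi_P)\|_\infty .
\]
Your Ky Fan argument (or Weyl's inequality, as in the paper) bounds the right-hand side by $2\eta$. This permits $\eta=\Theta(\epsilon)$ and recovers the $rm/\epsilon^2$ tomography cost.

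Two smaller points. First, the $L^2$ in the paper does not come from median-of-means. The paper builds its net by Haar-random sampling, so the candidate count $N$ itself carries one factor $L$, from a union bound over a maximal $\eta/2$-packing of size $(6/\eta)^{2r}$. Running each of the $N$ estimates at confidence $\delta/(3N)$ then contributes $\log(6N/\delta)=O(L)$. A deterministic net gives only one factor of $L$, which is fine, but you must then check its size. The plain volumetric bound in $\mathbb{R}^{2r}$ gives $(C/\epsilon)^{2r}$, a full extra factor of $1/\epsilon$ over $(64/\epsilon)^{2r-1}$, which $L$ cannot absorb as $\epsilon\to0$ at fixed $r$. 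You can fix this in either of two ways:
\begin{itemize}
\item use the phase invariance $f(e^{i\theta}\psi)=f(\psi)$ and net only vectors whose first coordinate is real and nonnegative, which lie on a sphere in $\mathbb{R}^{2r-1}$;
\item pair your degree-$\tilde{O}(1/\epsilon)$ even polynomial with amplitude estimation, so each evaluation costs $\tilde{O}(\epsilon^{-2})$ rather than $\tilde{O}(\epsilon^{-3})$.
\end{itemize}
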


Let $U_{m}$ be a unitary operator representing $\mathcal{U}_{m}$, and let $\{K_i\}_{i=1}^m$ be its induced Kraus operators, defined by
\begin{equation}
    U_{m}\bigl(I_\mathcal{H}\otimes\ket{0}_E\bigr)
    =\sum_{i=1}^{m}K_i\otimes\ket{i}_E,
    \qquad
    \sum_{i=1}^{m}K_i^\dagger K_i=I_\mathcal{H}.
\end{equation}
The induced Kraus representation has $m$ operators and need not be minimal.
For any vector $\ket{\psi}_E\in\mathcal{H}_E$, define
\begin{equation}
\label{eq:Apsi}
    A_\psi:=\bigl(I_\mathcal{H}\otimes\bra{\psi}_E\bigr)U_{m}\bigl(I_\mathcal{H}\otimes\ket{0}_E\bigr).
\end{equation}
For unit vectors $\ket{\psi}_E\in\mathcal{H}_E$, define the objective
\begin{equation}
\label{eq:environment-objective}
    f(\ket{\psi}_E):=\frac{\|A_\psi\|_1^2}{d^2}.
\end{equation}
By Lemma~\ref{lem:oef-variational}, the optimal entanglement fidelity admits the variational representation
\begin{equation}
\label{eq:algorithm-variational}
    F_{\mathrm{OEF}}(\mathcal{E})
    =\max_{\substack{\ket{\psi}_E\in\mathcal{H}_E\\\|\psi\|_2=1}}f(\ket{\psi}_E).
\end{equation}
Our core strategy is hypothesis selection: we sample candidate environment states, estimate their objective values, and return the largest estimate.
To construct the subspace from which these candidates are sampled, we use the environment state
\begin{equation}
\label{eq:quantum state G}
    G:=\Tr_{\mathcal{H}}\left[\mathcal{U}_{m}\left(\frac{I_{\mathcal{H}}}{d}\otimes\proj{0}_E\right)\right].
\end{equation}
Fix a known orthonormal basis $\{\ket{e_a}_E\}_{a=0}^{m-1}$ of $\mathcal{H}_E$ with $\ket{e_0}_E=\ket{0}_E$, and let $\{\ket{a}_r\}_{a=0}^{r-1}$ be the standard basis of $\mathbb{C}^r$.
The direct sum $U_j\oplus I_{m-r}$ in Algorithm~\ref{alg:max-estimation} is taken relative to this ordered basis of $\mathcal{H}_E$.

The main procedure is given in Algorithm~\ref{alg:max-estimation}.
In iteration $j$, the subroutine $\texttt{\upshape Estimatef}$ of Lemma~\ref{lem:QSVT} uses $W_j$ to estimate $f(\ket{\psi_j}_E)$, where $\ket{\psi_j}_E=W_j\ket{0}_E$.
\algrenewcommand{\algorithmicrequire}{\textbf{Input:}}
\algrenewcommand{\algorithmicensure}{\textbf{Output:}}
\begin{algorithm}[H]
    \caption{Estimation of $F_{\mathrm{OAGF}}(\mathcal{E})$}
    \label{alg:max-estimation}
    \begin{algorithmic}[1]
        \Require Query access to $\mathcal{U}_{m}$ and $\mathcal{U}_{m}^\dagger$; known Kraus rank $2\le r\le m$; $\epsilon,\delta\in(0,1)$
        \Ensure A classical estimate $\widehat{F}_{\mathrm{OAGF}}$ of $F_{\mathrm{OAGF}}(\mathcal{E})$
        \Statex \textbf{Fixed bases:} $\{\ket{e_a}_E\}_{a=0}^{m-1}$ is a known orthonormal basis of $\mathcal{H}_E$ with $\ket{e_0}_E=\ket{0}_E$.
        \Statex $\{\ket{a}_r\}_{a=0}^{r-1}$ is the standard basis of $\mathbb{C}^r$; direct sums on $E$ use the ordered basis $\{\ket{e_a}_E\}_{a=0}^{m-1}$.
        \State $\eta\gets\epsilon/8$
        \label{line:eta}
        \State $\zeta\gets\epsilon/8$
        \State $N\gets\left\lceil(8/\eta)^{2r-1}\left[2r\log(6/\eta)+\log(3/\delta)\right]\right\rceil$
        \label{line:sample-size}
        \State $\widehat{F}_{\mathrm{OEF}}\gets0$
        \State Prepare copies of $G$ in Eq.~\eqref{eq:quantum state G} and perform tomography to obtain a classical description of a density matrix $\widehat{G}$ satisfying $\lVert\widehat{G}-G\rVert_\infty\le\zeta$, with failure probability at most $\delta/3$
        \State Compute an orthonormal eigenbasis $\ket{p_0}_E,\ldots,\ket{p_{m-1}}_E$ of $\widehat{G}$, ordered by nonincreasing eigenvalue
        \State $\mathcal{H}_G\gets\operatorname{span}\{\ket{p_0}_E,\ldots,\ket{p_{r-1}}_E\}$
        \State $S_G\gets\sum_{a=0}^{m-1}\ket{p_a}_E\bra{e_a}_E$
        \For{$j=1,2,\ldots,N$}
            \State Independently sample a Haar-random unitary $U_j\in U(r)$ classically
            \State $W_j\gets S_G(U_j\oplus I_{m-r})$; construct $W_j^\dagger$
            \State $\widehat{f}_j\gets\texttt{\upshape Estimatef}\!\left(\mathcal{U}_{m},W_j;\epsilon/2,\delta/(3N)\right)$
            \State $\widehat{F}_{\mathrm{OEF}}\gets\max\{\widehat{F}_{\mathrm{OEF}},\widehat{f}_j\}$
        \EndFor
        \State $\widehat{F}_{\mathrm{OAGF}}\gets\frac{d\widehat{F}_{\mathrm{OEF}}+1}{d+1}$
        \State \Return $\widehat{F}_\mathrm{OAGF}$
    \end{algorithmic}
\end{algorithm}

In particular, the essential output is
\begin{equation}
\label{eq:algorithm-output}
    \widehat{F}_{\mathrm{OEF}}=\max_{1\le j\le N}\widehat{f}_j.
\end{equation}
The unitaries $W_j$ have known classical descriptions once the tomography outcome and the sampled unitaries $U_j$ are fixed, so their inverses are also available as known operations.

\begin{proof}[Proof of Theorem~\ref{main-upper-bound}]
    We first describe the algorithm and analyze its error, and then bound its total query complexity.
    The auxiliary results used in the proof are stated and proved in Section~\ref{sub:lemma proof}.
    \paragraph{Step 1. Constructing a search subspace.}
    The environment has dimension $m$, but preprocessing allows us to construct an $r$-dimensional subspace on which the maximum approximates $F_{\mathrm{OEF}}(\mathcal{E})$.
    Lemma~\ref{lem:projected F} allows us to restrict the search to a subspace while controlling the loss in the optimal value.
    Let $\Pi_{\mathrm{supp}}$ denote the orthogonal projector onto $\operatorname{supp}(G)$. Applying Lemma~\ref{lem:projected F} with $\mathcal{H}_P=\operatorname{supp}(G)$ and $\Pi_P=\Pi_{\mathrm{supp}}$ makes its upper bound equal to zero.
    Since $\rank G=r$, this reduces the dimension of the search space from $m$ to $r$ without changing the optimal value.
    Lemma~\ref{lem:estimated-subspace} controls the loss when the subspace is obtained from an approximate description of $G$.
    
    Each copy of the environment state $G$ in Eq.~\eqref{eq:quantum state G} can be prepared using one query to $\mathcal{U}_{m}$.
    We perform tomography on copies of $G$ to identify a suitable subspace.
    Using the classical description of $\widehat{G}$, we compute an orthonormal eigenbasis $\ket{p_0}_E,\ldots,\ket{p_{m-1}}_E$, ordered by nonincreasing eigenvalue, and define
    \begin{equation}
        \mathcal{H}_G
        :=\operatorname{span}\{\ket{p_0}_E,\ldots,\ket{p_{r-1}}_E\},
        \qquad
        \Pi_G:=\sum_{a=0}^{r-1}\ket{p_a}\!\bra{p_a}_E.
    \end{equation}
    Here, $\Pi_G$ projects onto the estimated search subspace $\mathcal{H}_G$ and need not equal $\Pi_{\mathrm{supp}}$.
    We denote the maximum restricted to this subspace by
    \begin{equation}
    \label{eq:restricted-environment-optimization}
        F_G(\mathcal{E})
        :=\max_{\substack{\ket{\psi}_E\in\mathcal{H}_G\\\|\psi\|_2=1}}f(\ket{\psi}_E).
    \end{equation}
    Using the fixed bases introduced above, an isometry $V_G:\mathbb{C}^r\to\mathcal{H}_E$ onto $\mathcal{H}_G$ and a unitary extension $S_G$ are given by
    \begin{equation}
        V_G:=\sum_{a=0}^{r-1}\ket{p_a}_E\bra{a}_r,
        \qquad
        S_G:=\sum_{a=0}^{m-1}\ket{p_a}_E\bra{e_a}_E.
    \end{equation}
    Indeed, $V_G^\dagger V_G=I_r$ and $V_GV_G^\dagger=\Pi_G$, while $S_G$ is unitary and has $V_G$ as its first $r$ columns relative to the ordered basis $\{\ket{e_a}_E\}_{a=0}^{m-1}$.
    \paragraph{Step 2. Sampling candidate states.}
    For $\eta>0$, a finite set $\mathcal{N}_\eta$ of unit vectors in $\mathcal{H}_G$ is an $\eta$-net of the unit sphere in $\mathcal{H}_G$ if, for every unit vector $\ket{\phi}_E\in\mathcal{H}_G$, there is a vector $\ket{\psi}_E\in\mathcal{N}_\eta$ satisfying $\|\ket{\phi}_E-\ket{\psi}_E\|_2\le\eta$.
    This is the usual metric-space definition of a net \cite[Section~4.2]{Vershynin2026}, applied to normalized vectors with their Euclidean distance.
    We sample candidate states to obtain such a net; the Lipschitz bound in Lemma~\ref{lem:Lipschitz} then converts the approximation of vectors into a bound on the loss in the objective value.
    For a classically sampled Haar-random unitary $U_j\in \mathbb{U}(r)$, we construct
    \begin{equation}
        W_j:=S_G(U_j\oplus I_{m-r}),\qquad\ket{\psi_j}_E:=W_j\ket{0}_E=V_GU_j\ket{0}_r,
        \label{eq:W and S}
    \end{equation}
    where the direct sum is taken relative to the ordered basis $\{\ket{e_a}_E\}_{a=0}^{m-1}$.
    Conditioned on the tomography outcome, $\ket{\psi_j}_E$ is Haar-random in $\mathcal{H}_G$, since $V_G$ preserves inner products.
    Lemma~\ref{lem:epsilon-net} tells that the set of random sampled quantum state formed $\eta$-net for sufficient large number of samples.
    Given the tomography outcome and $U_j$, both $W_j$ and $W_j^\dagger$ have known classical descriptions and can be implemented without additional queries to the unknown dilation unitary channel.
    Thus, each sampled state admits the known state-preparation unitary required by Lemma~\ref{lem:QSVT}.
    
    The failure probability budget allocates $\delta/3$ to tomography of $G$, another $\delta/3$ to obtaining an $\eta$-net of the unit sphere in $\mathcal{H}_G$ in the Euclidean norm, and the remaining $\delta/3$ to the $N$ evaluations.
    \paragraph{Step 3. Evaluating the candidates.}
    For each sampled candidate $\ket{\psi_j}_E$, we apply the subroutine in Algorithm~\ref{alg:estimate-f} with accuracy $\xi=\epsilon/2$ and failure probability $\beta=\delta/(3N)$, as guaranteed by Lemma~\ref{lem:QSVT}.
    Combining queries to the dilation and its inverse with the known unitaries $W_j$ and $W_j^\dagger$, the subroutine block-encodes $A_{\psi_j}$ and $A_{\psi_j}^\dagger$, and then their product $A_{\psi_j}^\dagger A_{\psi_j}$.
    QSVT produces an encoded block $B_{\psi_j}$ approximating $\frac12\sqrt{A_{\psi_j}^\dagger A_{\psi_j}}$.
    Applying the corresponding circuit to $\ket{\Phi^+}_{AA'}\ket{0}_{\mathrm{anc}}$ and jointly testing for $\ket{\Phi^+}_{AA'}$ and $\ket{0}_{\mathrm{anc}}$ gives success probability
    \begin{equation}
    \label{eq:candidate-success-probability}
        p_{\psi_j}=\left|\frac{\Tr B_{\psi_j}}{d}\right|^2,
    \end{equation}
    which approximates $f(\ket{\psi_j}_E)/4$.
    The subroutine estimates this probability by amplitude estimation, multiplies the estimate by four, and clips the result to $[0,1]$ to obtain $\widehat f_j$.
    Algorithm~\ref{alg:max-estimation} returns the largest estimate, $\widehat{F}_{\mathrm{OAGF}}=\frac{d}{d+1}\max_{1\le j\le N}\widehat f_j+\frac{1}{d+1}$.

    \paragraph{Error analysis.}
    By Lemma~\ref{lem:tomography}, applied with accuracy $\zeta$ and failure probability $\delta/3$, the tomography step ensures $\|\widehat{G}-G\|_\infty\le\zeta$ except with probability at most $\delta/3$.
    Let $\mathcal{N}_\eta:=\{\ket{\psi_j}_E:1\le j\le N\}$ be the set of unit vectors in $\mathcal{H}_G$ sampled by the algorithm.
    Conditioned on any tomography outcome, $\mathcal{H}_G$ is a fixed $r$-dimensional subspace and the sampled vectors are independent Haar-random unit vectors in $\mathcal{H}_G$.
    By Lemma~\ref{lem:epsilon-net}, applied with $\mathcal{H}_P=\mathcal{H}_G$, and the sample size in line~\ref{line:sample-size} of Algorithm~\ref{alg:max-estimation}, the conditional probability that $\mathcal{N}_\eta$ fails to be an $\eta$-net is at most $\delta/3$.
    Averaging over the tomography outcome gives the same unconditional bound.
    
    Conditioned on any tomography outcome and sampled unitaries, each evaluation fails its accuracy guarantee with probability at most $\delta/(3N)$ by Lemma~\ref{lem:QSVT}.
    Averaging over these choices and applying a union bound gives
    \begin{align}
        \Pr\!\left[
            \exists j\in\{1,\ldots,N\}:
            |\widehat{f}_j-f(\ket{\psi_j}_E)|>\frac{\epsilon}{2}
        \right]
        &\le\sum_{j=1}^{N}
            \Pr\!\left[
                |\widehat{f}_j-f(\ket{\psi_j}_E)|>\frac{\epsilon}{2}
            \right]\nonumber\\
        &\le N\frac{\delta}{3N}=\frac{\delta}{3}.
    \end{align}
    A union bound over tomography failure, net failure, and evaluation failure shows that all three guarantees hold simultaneously with probability at least $1-\delta$.
    Condition on this event.
    Applying Lemmas~\ref{lem:projected F} and~\ref{lem:estimated-subspace} with $\mathcal{H}_P=\mathcal{H}_G$ and $\Pi_P=\Pi_G$ gives
    \begin{equation}
        0\le F_{\mathrm{OEF}}(\mathcal{E})-F_G(\mathcal{E})\le2\zeta.
    \end{equation}
    By continuity of $f$ and compactness of the unit sphere in $\mathcal{H}_G$, there is a unit vector $\ket{\psi_{\mathrm{opt}}}_E\in\mathcal{H}_G$ such that $f(\ket{\psi_{\mathrm{opt}}}_E)=F_G(\mathcal{E})$.
    By the definition of an $\eta$-net, there exists $\ket{\psi_{j_0}}_E\in\mathcal{N}_\eta$ with $\|\ket{\psi_{\mathrm{opt}}}_E-\ket{\psi_{j_0}}_E\|_2\le\eta$.
    Lemma~\ref{lem:Lipschitz} gives
    \begin{equation}
        f(\ket{\psi_{j_0}}_E)
        \ge f(\ket{\psi_{\mathrm{opt}}}_E)-2\eta
        =F_G(\mathcal{E})-2\eta.
    \end{equation}
    Every vector in $\mathcal{N}_\eta$ is a unit vector in $\mathcal{H}_G$, so the definition of $F_G(\mathcal{E})$ implies
    \begin{equation}
        \max_{\ket{\psi}_E\in\mathcal{N}_\eta}f(\ket{\psi}_E)
        \le F_G(\mathcal{E}).
    \end{equation}
    Combining the two inequalities yields
    \begin{equation}
    \label{eq:net-maximum-comparison}
        \max_{\ket{\psi}_E\in\mathcal{N}_\eta}f(\ket{\psi}_E)
        \le F_G(\mathcal{E})
        \le\max_{\ket{\psi}_E\in\mathcal{N}_\eta}f(\ket{\psi}_E)+2\eta.
    \end{equation}
    Moreover, simultaneous accuracy of the evaluations implies
    \begin{equation}
        \left|\widehat{F}_{\mathrm{OEF}}-\max_{\ket{\psi}_E\in\mathcal{N}_\eta}f(\ket{\psi}_E)
        \right|\le\frac{\epsilon}{2}.
    \end{equation}
    Consequently,
    \begin{align}
        |\widehat{F}_{\mathrm{OEF}}-F_{\mathrm{OEF}}(\mathcal{E})|
        &\le\left|\widehat{F}_{\mathrm{OEF}}-\max_{\ket{\psi}_E\in\mathcal{N}_\eta}f(\ket{\psi}_E)\right|+\left|\max_{\ket{\psi}_E\in\mathcal{N}_\eta}f(\ket{\psi}_E)-F_G(\mathcal{E})\right|+|F_G(\mathcal{E})-F_{\mathrm{OEF}}(\mathcal{E})|\nonumber\\
        &\le\frac{\epsilon}{2}+2\eta+2\zeta\nonumber\\
        &=\epsilon,
    \end{align}
    where the last equality uses $\eta=\zeta=\epsilon/8$.
    Also, the relationship between OEF and OAGF provides
    \begin{equation}
        \left|\widehat{F}_{OAGF}-F_{OAGF}\right|=\frac{d}{d+1}|\widehat{F}_{\mathrm{OEF}}-F_{\mathrm{OEF}}(\mathcal{E})|\le\frac{d}{d+1}\epsilon\le\epsilon.
    \end{equation}
    This proves the accuracy and success-probability claims.
    
    \paragraph{Complexity.}
    Each call to $\texttt{\upshape Estimatef}$ in Algorithm~\ref{alg:max-estimation} uses $\xi=\epsilon/2$ and $\beta=\delta/(3N)$.
    By Lemma~\ref{lem:QSVT}, each call requires
    \begin{equation}
        \bigO{\frac{1}{\epsilon^3}\log\!\left(\frac{6N}{\delta}\right)}
    \end{equation}
    queries to $\mathcal{U}_{m}$ and $\mathcal{U}_{m}^\dagger$ in total.
    Hence, the total query cost of the $N$ evaluations is
    \begin{equation}
    \label{eq:total-cost-before-simplification}
        \bigO{\frac{N}{\epsilon^3}\log\!\left(\frac{6N}{\delta}\right)}.
    \end{equation}
    Set
    \begin{equation}
        L:=r\log\!\left(\frac{64}{\epsilon}\right)+\log\!\left(\frac{3}{\delta}\right).
    \end{equation}
    Substituting $\eta=\epsilon/8$ into the expression for $N$ in Algorithm~\ref{alg:max-estimation} yields
    \begin{equation}
        N=\bigO{\left(\frac{64}{\epsilon}\right)^{2r-1}L},\qquad\log\!\left(\frac{6N}{\delta}\right)=\bigO{L}.
    \end{equation}
    Combining these bounds with Eq.~\eqref{eq:total-cost-before-simplification} gives an evaluation cost of
    \begin{equation}
        \bigO{\frac{1}{\epsilon^3}\left(\frac{64}{\epsilon}\right)^{2r-1}L^2}.
    \end{equation}
    
    Lemma~\ref{lem:tomography}, applied with accuracy $\zeta=\epsilon/8$ and failure probability at most $\delta/3$, requires
    \begin{equation}
        \bigO{\frac{rm}{\zeta^2}\log\!\left(\frac{6}{\delta}\right)}
        =\bigO{\frac{rm}{\epsilon^2}\log\!\left(\frac{3}{\delta}\right)}
    \end{equation}
    queries to $\mathcal{U}_{m}$.
    Here, replacing $\log(6/\delta)$ by $\log(3/\delta)$ changes the bound by at most a constant factor for $\delta\in(0,1)$.
    All remaining classical computations and known operations require no additional queries to the unknown dilation unitary channel.
    Adding the tomography and evaluation costs gives
    \begin{equation}
        \bigO{\frac{rm}{\epsilon^2}\log\!\left(\frac{3}{\delta}\right)+\frac{1}{\epsilon^3}\left(\frac{64}{\epsilon}\right)^{2r-1}L^2}.
    \end{equation}
    Substituting the definition of $L$ gives Eq.~\eqref{eq:main-query-upper-bound}, completing the proof of Theorem~\ref{main-upper-bound}.
\end{proof}

\subsection{Supporting lemmas}
\label{sub:lemma proof}

\subsubsection{Subspace reduction}
\label{subsubsec:subspace-reduction}
We establish the variational representation and the guarantees used to construct the search subspace.

\begin{Lem}[Variational representation]
    \label{lem:oef-variational}
    For any unitary Stinespring dilation $U_{m}$ of $\mathcal{E}$,
    \begin{equation}
    \label{eq:environment-optimization}
        F_{\mathrm{OEF}}(\mathcal{E}) =\max_{\substack{\ket{\psi}_E\in\mathcal{H}_E\\\|\psi\|_2=1}}f(\ket{\psi}_E),
    \end{equation}
    where the maximization is taken over all pure quantum states on the environment system $E$.
\end{Lem}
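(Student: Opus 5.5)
The plan is to express $F_{\mathrm e}(\mathcal{E},\mathcal{W})$ in Kraus form and then carry out two maximizations in turn: first over the environment direction, then over the unitary $W$. With $U_m(I\otimes\ket{0}_E)=\sum_i K_i\otimes\ket{i}_E$, the normalized Choi state is $\mathsf{E}=\sum_i\ket{K_i}\!\rangle\langle\!\bra{K_i}/d$, where $\ket{K}\!\rangle:=\sum_j\ket{j}\otimes K\ket{j}$. Likewise $\mathsf{W}=\ket{W}\!\rangle\langle\!\bra{W}/d$. Hence
\begin{equation}
F_{\mathrm e}(\mathcal{E},\mathcal{W})=\frac{1}{d^2}\sum_{i=1}^m\bigl|\langle\!\langle W|K_i\rangle\!\rangle\bigr|^2=\frac{1}{d^2}\sum_{i=1}^m\bigl|\Tr[W^\dagger K_i]\bigr|^2 .
\end{equation}

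For the environment maximization, note that $A_\psi=\sum_i\overline{\psi_i}K_i$ with $\psi_i=\langle i|\psi\rangle_E$. Define the vector $v\in\mathbb{C}^m$ by $v_i:=\Tr[W^\dagger K_i]$. Then $\Tr[W^\dagger A_\psi]=\sum_i\overline{\psi_i}v_i=\langle\psi,v\rangle$. By Cauchy--Schwarz, $\max_{\|\psi\|_2=1}|\langle\psi,v\rangle|^2=\|v\|_2^2=\sum_i|\Tr[W^\dagger K_i]|^2$, with the maximum attained at $\psi\propto v$. This gives
\begin{equation}
F_{\mathrm e}(\mathcal{E},\mathcal{W})=\max_{\|\psi\|_2=1}\frac{|\Tr[W^\dagger A_\psi]|^2}{d^2}.
\end{equation}

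Next I maximize over $W\in\mathbb{U}(d)$ and exchange the two maxima, which is allowed since both are suprema over compact sets. The inner problem $\max_{W}|\Tr[W^\dagger A]|$ equals $\|A\|_1$; this is the standard duality between the trace norm and the operator norm. It is attained by the polar unitary $W$ with $A=W|A|$, which gives $\Tr[W^\dagger A]=\Tr|A|$. The upper bound $|\Tr[W^\dagger A]|\le\|W\|_\infty\|A\|_1$ follows from Hölder's inequality. Therefore
\begin{equation}
F_{\mathrm{OEF}}(\mathcal{E})=\max_W\max_\psi\frac{|\Tr[W^\dagger A_\psi]|^2}{d^2}=\max_\psi\frac{\|A_\psi\|_1^2}{d^2}=\max_\psi f(\ket{\psi}_E),
\end{equation}
which is Eq.~\eqref{eq:environment-optimization}.

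Three routine points remain to be checked. The first is the Choi-vector identity, including the conjugation convention $\langle\!\langle W|K\rangle\!\rangle=\Tr[W^\dagger K]$, which has to be kept consistent; I expect this bookkeeping to be the only real source of possible error. The second is that the result does not depend on the unitary completion of $U_m$: this is immediate from Eq.~\eqref{eq:extension-independent-block}. The third is that the maxima are attained, which holds by continuity on compact sets.
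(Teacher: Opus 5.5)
Your proposal is correct and follows the paper's proof essentially step for step. Both arguments use the Kraus form of $F_{\mathrm e}$, then Cauchy--Schwarz to rewrite $\sum_i|\Tr(W^\dagger K_i)|^2$ as a maximum over unit environment vectors, then interchange the two maxima, and finally apply $\max_{W}|\Tr(W^\dagger A)|=\|A\|_1$; your $A_\psi=\sum_i\overline{\psi_i}K_i$ matches the paper's substitution $\ket{\psi}_E=\sum_i\overline{c_i}\ket{i}$. The only differences are that you derive the Kraus formula explicitly from Choi vectors, which the paper simply asserts, and you invoke compactness to justify swapping the maxima, which is unnecessary: iterated suprema over a product set always commute, and compactness only guarantees that the maxima are attained.
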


\begin{proof}[Proof of Lemma~\ref{lem:oef-variational}]
Since the entanglement fidelity between the channel $\mathcal{E}$ and a unitary $W$ is given by $1/d^2\sum_i \abs{\Tr(W^\dagger K_i)}^2$, we have
\begin{equation}
\label{eq:oef-kraus-overlap}
    F_{\mathrm{OEF}}(\mathcal{E})=\frac{1}{d^2}\max_{W\in \mathbb{U}(d)}\sum_{i=1}^{m}\left|\Tr(W^\dagger K_i)\right|^2.
\end{equation}
For fixed $W\in \mathbb{U}(d)$, define $z_i:=\Tr(W^\dagger K_i)$ for $i=1,\ldots,m$, and write $z=(z_1,\ldots,z_m)\in\mathbb{C}^m$.
For coefficients $c_i$ satisfying $\sum_i|c_i|^2=1$, the Cauchy-Schwarz inequality gives
\begin{equation}
    \left|\sum_{i=1}^{m}c_i z_i\right|^2\le\left(\sum_{i=1}^{m}|c_i|^2\right)\left(\sum_{i=1}^{m}|z_i|^2\right)=\|z\|_2^2.
\end{equation}
If $z\ne0$, equality is attained by choosing $c_i=\overline{z_i}/\|z\|_2$; if $z=0$, equality holds for every admissible choice of the coefficients.
Therefore,
\begin{equation}
\label{eq:cauchy-schwarz-max}
    \sum_{i=1}^{m}|z_i|^2=\max_{\sum_i|c_i|^2=1}\left|\sum_{i=1}^{m}c_i z_i\right|^2.
\end{equation}
Combining this identity with $\max_{W\in \mathbb{U}(d)}|\Tr(W^\dagger A)|=\|A\|_1$ gives
\begin{align}
    F_{\mathrm{OEF}}(\mathcal{E})
    &=\frac{1}{d^2}\max_{\sum_i|c_i|^2=1}\max_{W\in \mathbb{U}(d)}\left|\Tr\!\left(W^\dagger\sum_{i=1}^{m}c_iK_i\right)\right|^2\nonumber\\
    &=\frac{1}{d^2}\max_{\sum_i|c_i|^2=1}\left\|\sum_{i=1}^{m}c_iK_i\right\|_1^2.
\end{align}
Setting $\ket{\psi}_E=\sum_{i=1}^{m}\overline{c_i}\ket{i}$ yields $A_{\psi}=\sum_{i=1}^{m}c_iK_i$ and proves the claim.
\end{proof}

For any vector $\ket{\psi}_E\in\mathcal{H}_E$, Eq.~\eqref{eq:quantum state G} and the expression for $A_\psi$ in Eq.~\eqref{eq:Apsi} imply
\begin{equation}
\label{eq:Apsi-support}
    \bra{\psi}_E G\ket{\psi}_E=\frac{1}{d}\Tr(A_\psi^\dagger A_\psi)=\frac{1}{d}\|A_\psi\|_2^2.
\end{equation}

\begin{Lem}[Loss from subspace restriction]
\label{lem:projected F}
    Let $\mathcal{H}_P$ be a nonzero subspace of $\mathcal{H}_E$, and let $\Pi_P$ be the orthogonal projector onto $\mathcal{H}_P$.
    Define
    \begin{equation}
        F_P(\mathcal{E})
        :=\max_{\substack{\ket{\psi}_E\in\mathcal{H}_P\\\lVert\psi\rVert_2=1}}f(\ket{\psi}_E).
    \end{equation}
    Then
    \begin{equation}
        0\le F_{\mathrm{OEF}}(\mathcal{E})-F_P(\mathcal{E})\le\left\|(I_{\mathcal{H}_E}-\Pi_P)G(I_{\mathcal{H}_E}-\Pi_P)\right\|_\infty.
    \end{equation}
\end{Lem}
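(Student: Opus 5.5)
The lower bound $F_P(\mathcal{E})\le F_{\mathrm{OEF}}(\mathcal{E})$ follows directly from Lemma~\ref{lem:oef-variational}: $F_P$ maximizes $f$ over a subset of the unit vectors over which $F_{\mathrm{OEF}}$ is maximized. The work is therefore in the upper bound. My plan is to take a maximizer of $f$ over the full sphere and compare it with its normalized projection onto $\mathcal{H}_P$.

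Let $\ket{\psi}_E$ be a unit vector with $f(\ket{\psi}_E)=F_{\mathrm{OEF}}(\mathcal{E})$; it exists by compactness. Decompose it as $\ket{\psi}=\ket{\psi_P}+\ket{\psi_\perp}$ with $\ket{\psi_P}=\Pi_P\ket{\psi}$, and set $a:=\|\psi_P\|_2$ and $b:=\|\psi_\perp\|_2$, so that $a^2+b^2=1$. The map $\ket{\psi}\mapsto A_\psi$ is antilinear, hence $A_\psi=A_{\psi_P}+A_{\psi_\perp}$. The triangle inequality for the trace norm gives
\[
\|A_\psi\|_1\le\|A_{\psi_P}\|_1+\|A_{\psi_\perp}\|_1 .
\]
I bound the two terms separately. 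For the first, if $a>0$ then by homogeneity and the definition of $F_P$,
\[
\|A_{\psi_P}\|_1=a\,d\sqrt{f(\psi_P/a)}\le a\,d\sqrt{F_P}.
\]
For the second, Cauchy--Schwarz gives $\|A\|_1\le\sqrt{d}\,\|A\|_2$, and Eq.~\eqref{eq:Apsi-support} then yields
\[
\|A_{\psi_\perp}\|_1^2\le d^2\bra{\psi_\perp}G\ket{\psi_\perp}\le d^2 b^2\lambda,
\qquad
\lambda:=\bigl\|(I-\Pi_P)G(I-\Pi_P)\bigr\|_\infty .
\]
Combining these, $\sqrt{F_{\mathrm{OEF}}}\le a\sqrt{F_P}+b\sqrt{\lambda}$. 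By Cauchy--Schwarz in $\mathbb{R}^2$, using $a^2+b^2=1$,
\[
F_{\mathrm{OEF}}\le F_P+\lambda .
\]
The case $a=0$ is covered by the same inequality with the first term absent.

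The only delicate point is the crude-looking step $\|A\|_1\le\sqrt{d}\|A\|_2$. It is exactly what lets $G$, which controls only Hilbert--Schmidt norms, bound the trace-norm objective, and it produces the correct normalization $d^2$. Combining the two terms via $(a x+b y)^2\le x^2+y^2$ is what yields an additive loss rather than a cross term. I expect no further obstacles beyond confirming antilinearity of $\psi\mapsto A_\psi$ and the identity in Eq.~\eqref{eq:Apsi-support} for unnormalized vectors, both of which are immediate from the definitions.
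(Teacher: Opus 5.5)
Your proof is correct, but it takes a different route from the paper's. You work with a maximizer $\ket{\psi}_E$ itself; the paper works with auxiliary vectors indexed by unitaries.

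For each $W\in\mathbb{U}(d)$ the paper forms $\ket{v(W)}_E=\frac1d\sum_i\Tr(W^\dagger K_i)\ket{i}_E$. Then $f(\ket{\psi}_E)=\max_W|\braket{\psi|v(W)}|^2$, $F_{\mathrm{OEF}}=\max_W\|\ket{v(W)}_E\|_2^2$, and $F_P=\max_W\|\Pi_P\ket{v(W)}_E\|_2^2$. The Hilbert--Schmidt Cauchy--Schwarz bound $|\Tr(W^\dagger A_\psi)|^2\le d\,\|A_\psi\|_2^2$ is the same estimate as your $\|A\|_1\le\sqrt d\,\|A\|_2$. Combined with Eq.~\eqref{eq:Apsi-support}, it gives the operator inequality $\ket{v(W)}\!\bra{v(W)}_E\preceq G$ for every $W$. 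The loss bound then follows from the exact Pythagorean split $\|v\|_2^2=\|\Pi_P v\|_2^2+\|(I-\Pi_P)v\|_2^2$, with no triangle inequality and nothing to recombine.

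Your argument instead uses that $\ket{\psi}\mapsto\|A_\psi\|_1/d=\sqrt{f(\ket{\psi}_E)}$ is a seminorm. Subadditivity plus homogeneity give $\sqrt{F_{\mathrm{OEF}}}\le a\sqrt{F_P}+b\sqrt{\lambda}$. Cauchy--Schwarz in $\mathbb{R}^2$ with $a^2+b^2=1$ then converts the triangle-inequality slack back into an additive loss.

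What your route buys: it is self-contained given Lemma~\ref{lem:oef-variational}. It needs no interchange of the maxima over $\ket{\psi}_E$ and $W$, and no separate identity expressing $F_P$ through $v(W)$. You don't even need a maximizer, since the same argument gives $f(\ket{\psi}_E)\le F_P+\lambda$ for every unit vector.

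What the paper's route buys: a uniform structural fact, $\ket{v(W)}\!\bra{v(W)}_E\preceq G$ for all $W$. This makes transparent why all optimal directions lie in $\operatorname{supp}(G)$, and the bound comes from an exact identity plus a single inequality.
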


\begin{proof}[Proof of Lemma~\ref{lem:projected F}]
    For $W\in \mathbb{U}(d)$, define
    \begin{equation}
        \ket{v(W)}_E:=\frac{1}{d}\sum_{i=1}^{m}\Tr(W^\dagger K_i)\ket{i}_E.
    \end{equation}
    For every $\ket{\psi}_E=\sum_{i=1}^{m}\psi_i\ket{i}_E\in\mathcal{H}_E$,
    \begin{align}
        |\braket{\psi|v(W)}|^2
        &=\frac{1}{d^2}\left|\sum_{i=1}^{m}\overline{\psi_i}\Tr(W^\dagger K_i)\right|^2\nonumber\\
        &=\frac{1}{d^2}\left|\Tr(W^\dagger A_\psi)\right|^2\nonumber\\
        &\le\frac{1}{d^2}\Tr(W^\dagger W)\Tr(A_\psi^\dagger A_\psi)\nonumber\\
        &=\frac{\|A_\psi\|_2^2}{d}
          =\bra{\psi}_E G\ket{\psi}_E,
    \end{align}
    where the inequality follows from the Hilbert--Schmidt Cauchy--Schwarz inequality, and the last equality follows from Eq.~\eqref{eq:Apsi-support}.
    Consequently,
    \begin{equation}
        \ket{v(W)}\!\bra{v(W)}_E\preceq G,
    \end{equation}
    where $A \preceq B$ for matrices $A, B$ denotes $B-A\ge 0$.
    Writing $Q_P:=I_{\mathcal{H}_E}-\Pi_P$, we obtain
    \begin{equation}
        Q_P\ket{v(W)}\!\bra{v(W)}_E Q_P\preceq Q_PGQ_P,
    \end{equation}
    and hence
    \begin{equation}
        \|Q_P\ket{v(W)}_E\|_2^2=\left\|Q_P\ket{v(W)}\!\bra{v(W)}_E Q_P\right\|_\infty\le\|Q_PGQ_P\|_\infty.
    \end{equation}
    Equation~\eqref{eq:oef-kraus-overlap} gives
    \begin{equation}
        F_{\mathrm{OEF}}(\mathcal{E})=\max_{W\in \mathbb{U}(d)}\|\ket{v(W)}_E\|_2^2.
    \end{equation}
    Moreover, Eqs.~\eqref{eq:Apsi} and~\eqref{eq:environment-objective}, together with the trace-norm variational identity, imply
    \begin{align}
        F_P(\mathcal{E})&=\max_{\substack{\ket{\psi}_E\in\mathcal{H}_P\\\|\psi\|_2=1}}\max_{W\in \mathbb{U}(d)}|\braket{\psi|v(W)}|^2\nonumber\\
        &=\max_{W\in \mathbb{U}(d)}\|\Pi_P\ket{v(W)}_E\|_2^2.
    \end{align}
    Since $\Pi_P$ and $Q_P$ are complementary orthogonal projectors,
    \begin{align}
        F_{\mathrm{OEF}}(\mathcal{E})
        &=\max_{W\in \mathbb{U}(d)}\left(\|\Pi_P\ket{v(W)}_E\|_2^2+\|Q_P\ket{v(W)}_E\|_2^2\right)\nonumber\\
        &\le F_P(\mathcal{E})+\|Q_PGQ_P\|_\infty.
    \end{align}
    Finally, $F_P(\mathcal{E})\le F_{\mathrm{OEF}}(\mathcal{E})$ because its maximization is restricted to unit vectors in $\mathcal{H}_P$.
\end{proof}

\begin{Lem}[Tomography of the environment state]
\label{lem:tomography}
    Let $G$ be an $m$-dimensional quantum state prepared as in Eq.~\eqref{eq:quantum state G}, with rank $r$, equal to the Kraus rank of $\mathcal{E}$.
    For any $\zeta,\delta\in(0,1)$, a classical description of a density matrix $\widehat{G}$ satisfying $\lVert\widehat{G}-G\rVert_\infty\le\zeta$ can be obtained with probability at least $1-\delta$ using
    \begin{equation}
        \bigO{\frac{rm}{\zeta^2}\log\left(\frac{2}{\delta}\right)}
    \end{equation}
    queries to $\mathcal{U}_{m}$.
\end{Lem}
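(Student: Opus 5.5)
The plan has two parts. First, copies of $G$ can be produced at unit cost. Second, an existing rank-aware tomography guarantee in operator norm is invoked with the correct parameters.

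\emph{Preparing copies of $G$.} We prepare the maximally entangled state $\ket{\Phi^+}_{AA'}$ on $\mathcal{H}\otimes\mathcal{H}$ together with $\ket{0}_E$. We apply one query of $U_{m}$ to $A\otimes E$ and discard both $A$ and $A'$. Since $\Tr_{A'}\proj{\Phi^+}=I_{\mathcal{H}}/d$, the reduced state on $E$ is exactly $\Tr_{\mathcal{H}}[\mathcal{U}_{m}(\frac{I_{\mathcal{H}}}{d}\otimes\proj{0}_E)]=G$. Equivalently, we can feed a uniformly random basis state $\ket{i}$ into the system register. Either way, each copy of $G$ costs exactly one query to $\mathcal{U}_{m}$ and no inverse queries. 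By Eq.~\eqref{eq:environment-state-rank}, $G$ is an $m$-dimensional state of rank $r$.

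\emph{Tomography step.} We invoke a known state-tomography result with operator-norm guarantee. With $n=\bigO{\frac{rm}{\zeta^2}\log\frac{2}{\delta}}$ copies of a rank-$r$ state on $\mathbb{C}^m$, one obtains an estimate $\widetilde G$ with $\|\widetilde G-G\|_\infty\le\zeta$ with probability at least $1-\delta$. Suitable results are the Haah--Harrow--Ji--Wu--Yu bounds in operator norm, or O'Donnell--Wright's Keyl-algorithm analysis; the latter gives $\bigO{m/\zeta^2}$ even without the rank, and $\bigO{r m/\zeta^2}$ is weaker than needed, which is fine. The simplest route I would take is the measure-and-estimate approach with Haar-random (or 2-design) single-copy measurements. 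Each outcome yields an unbiased estimator $X_k$ of $G$ with $\|X_k\|_\infty=\bigO{m}$, whose variance proxy is controlled using the rank. Matrix Bernstein then gives the $\frac{rm}{\zeta^2}\log\frac{m}{\delta}$ scaling. If the dimension factor inside the logarithm is an issue, I would instead cite the optimal operator-norm result directly.

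\emph{Projection and main obstacle.} The raw estimate $\widetilde G$ may fail to be a density matrix. We therefore project it onto the set of density matrices, for example by eigenvalue clipping and renormalization, or by taking the closest density matrix in operator norm. Since $G$ itself lies in that set, this projection at most doubles the error; we rescale $\zeta$ by a constant to absorb this and obtain $\widehat G$. The main obstacle is ensuring the logarithmic factor is $\log(2/\delta)$ rather than $\log(m/\delta)$. For that I would rely on the cited optimal tomography bounds rather than matrix Bernstein, since a $\log m$ factor would only enter the $\tilde{\mathcal O}$ anyway. The stated query count then follows by counting one query per copy.
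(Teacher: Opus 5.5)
Your proposal is correct and follows the paper's route: one query to $\mathcal{U}_{m}$ per copy of $G$, then the O'Donnell--Wright tomography guarantee converted to operator norm. The paper uses the rank-$r$ trace-distance bound with $\|\widehat{G}-G\|_\infty\le\|\widehat{G}-G\|_1$, and since that estimator already outputs a density matrix, no projection step is needed; if you keep one, use the closest density matrix in operator norm, because your ``$G$ lies in the set'' argument does not cover clipping followed by renormalization, which divides by a trace that can be of order $1+(m-r)\zeta$ and so need not keep the error $O(\zeta)$.
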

\begin{proof}[Proof of Lemma~\ref{lem:tomography}]
    This lemma follows from the quantum state tomography method of Ref.~\cite{odonnell2015}, which, after adjusting the accuracy parameter by a constant factor, outputs a classical description of a density matrix $\widehat{G}$ satisfying $\lVert\widehat{G}-G\rVert_1\le\zeta$ with the stated sample complexity.
    Since $\lVert\widehat{G}-G\rVert_\infty\le\lVert\widehat{G}-G\rVert_1$, this also provides the required operator-norm guarantee.
\end{proof}

\begin{Lem}[Subspace obtained from tomography]
\label{lem:estimated-subspace}
    Let $G$ be an $m$-dimensional quantum state of rank $r$, and let $\widehat{G}$ be a density matrix satisfying $\|\widehat{G}-G\|_\infty\le\zeta$.
    Let $\Pi_P$ be the orthogonal projector onto the span of $r$ orthonormal eigenvectors of $\widehat{G}$ associated with its $r$ largest eigenvalues.
    Then
    \begin{equation}
        \left\|(I_{\mathcal{H}_E}-\Pi_P)G(I_{\mathcal{H}_E}-\Pi_P)\right\|_\infty\le2\zeta.
    \end{equation}
\end{Lem}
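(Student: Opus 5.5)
Let $Q:=I_{\mathcal{H}_E}-\Pi_P$. The plan is to split $QGQ$ as $Q\widehat{G}Q+Q(G-\widehat{G})Q$. The second term is easy: since $Q$ is an orthogonal projector, $\|Q(G-\widehat{G})Q\|_\infty\le\|G-\widehat{G}\|_\infty\le\zeta$. It therefore suffices to show $\|Q\widehat{G}Q\|_\infty\le\zeta$.

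For the first term I would use the spectral structure of $\widehat{G}$. Because $\Pi_P$ projects onto eigenvectors of the Hermitian matrix $\widehat{G}$, $Q$ commutes with $\widehat{G}$. Hence $Q\widehat{G}Q$ is the restriction of $\widehat{G}$ to the span of the remaining eigenvectors, and
\begin{equation}
\|Q\widehat{G}Q\|_\infty=\max_{k\ge r+1}|\lambda_k(\widehat{G})|,
\end{equation}
where $\lambda_1(\widehat G)\ge\cdots\ge\lambda_m(\widehat G)$ are the eigenvalues in nonincreasing order. Since $\widehat{G}$ is a density matrix, it is positive semidefinite, so all eigenvalues are nonnegative and this maximum equals $\lambda_{r+1}(\widehat{G})$. (If the number of nonzero eigenvalues is at most $r$, this quantity is $0$.)

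The key step is Weyl's inequality, $|\lambda_k(\widehat{G})-\lambda_k(G)|\le\|\widehat{G}-G\|_\infty\le\zeta$ for every $k$. Since $\rank G=r$, we have $\lambda_{r+1}(G)=0$, so $\lambda_{r+1}(\widehat{G})\le\zeta$. Combining the two terms by the triangle inequality gives
\begin{equation}
\|QGQ\|_\infty\le\|Q\widehat{G}Q\|_\infty+\|Q(G-\widehat{G})Q\|_\infty\le\zeta+\zeta=2\zeta.
\end{equation}

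I expect no real obstacle. The only points needing care are that $Q$ commutes with $\widehat{G}$ because $\Pi_P$ is a spectral projector, that positivity of $\widehat{G}$ removes any concern about large negative eigenvalues, and that ties among eigenvalues do not matter because any choice of $r$ top eigenvectors leaves the remaining eigenvalues bounded by $\lambda_{r+1}(\widehat{G})$.
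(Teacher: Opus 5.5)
Your proof is correct and follows essentially the same route as the paper: both use Weyl's inequality with $\rank G=r$ to get $\lambda_{r+1}(\widehat{G})\le\zeta$, then add $\|G-\widehat{G}\|_\infty\le\zeta$. The only cosmetic difference is that the paper bounds the quadratic form $\bra{\psi}G\ket{\psi}$ over unit vectors $\ket{\psi}\in\ker(\Pi_P)$, using $G\succeq 0$ to identify this maximum with the operator norm, whereas you split $QGQ$ at the operator level and use $\widehat{G}\succeq 0$ to identify $\|Q\widehat{G}Q\|_\infty$ with $\lambda_{r+1}(\widehat{G})$.
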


\begin{proof}[Proof of Lemma~\ref{lem:estimated-subspace}]
    If $r=m$, then $\Pi_P=I_{\mathcal{H}_E}$ and the claim is immediate.
    Suppose that $r<m$.
    Let $\lambda_1(G)\ge\cdots\ge\lambda_m(G)$ and $\lambda_1(\widehat{G})\ge\cdots\ge\lambda_m(\widehat{G})$ denote the eigenvalues of $G$ and $\widehat{G}$, respectively.
    Since $\rank G=r$, Weyl's inequality and $\|\widehat{G}-G\|_\infty\le\zeta$ give
    \begin{equation}
        \lambda_{r+1}(\widehat{G})\le\lambda_{r+1}(G)+\zeta=\zeta.
    \end{equation}
    For every unit vector $\ket{\psi}_E\in\ker(\Pi_P)$,
    \begin{align}
        \bra{\psi}_E G\ket{\psi}_E
        &=\bra{\psi}_E\widehat{G}\ket{\psi}_E+\bra{\psi}_E(G-\widehat{G})\ket{\psi}_E\nonumber\\
        &\le\lambda_{r+1}(\widehat{G})+\|G-\widehat{G}\|_\infty\nonumber\\
        &\le2\zeta.
    \end{align}
    Therefore,
    \begin{equation}
        \left\|(I_{\mathcal{H}_E}-\Pi_P)G(I_{\mathcal{H}_E}-\Pi_P)\right\|_\infty=\max_{\substack{\ket{\psi}_E\in\ker(\Pi_P)\\\|\psi\|_2=1}}\bra{\psi}_E G\ket{\psi}_E\le2\zeta.
    \end{equation}
\end{proof}

\subsubsection{Finite candidate search}
\label{subsubsec:finite-candidate-search}

\begin{Lem}[Lipschitz continuity]
    \label{lem:Lipschitz}
    For any unit vectors $\ket{\psi}_E$ and $\ket{\phi}_E$ in $\mathcal{H}_E$,
    \begin{equation}
        |f\left(\ket{\psi}_E\right)-f\left(\ket{\phi}_E\right)|\le 2\|\ket{\psi}_E-\ket{\phi}_E\|_2.
    \end{equation}
\end{Lem}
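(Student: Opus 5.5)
The plan is to write $f(\ket{\psi}_E)=g(\psi)^2$ with $g(\psi):=\|A_\psi\|_1/d$ and bound the difference of squares as
\[
|f(\psi)-f(\phi)|=|g(\psi)-g(\phi)|\,(g(\psi)+g(\phi)).
\]
This reduces the claim to two facts about $g$: it is $1$-Lipschitz in the Euclidean norm on $\mathcal{H}_E$, and it is bounded by $1$ on unit vectors. Multiplying these gives the constant $2$.

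For the Lipschitz property, linearity of $\ket{\psi}\mapsto A_\psi$ is the key input. Since $A_\psi=\bigl(I_\mathcal{H}\otimes\bra{\psi}_E\bigr)U_m\bigl(I_\mathcal{H}\otimes\ket{0}_E\bigr)$ is conjugate-linear in $\ket{\psi}$, we have $A_\psi-A_\phi=A_{\psi-\phi}$, where $A_v$ is defined for an arbitrary (unnormalized) vector $v$. The triangle inequality for the trace norm then gives $|g(\psi)-g(\phi)|\le \|A_{\psi-\phi}\|_1/d$. We then use $\|X\|_1\le\sqrt{d}\,\|X\|_2$ for $d\times d$ operators, together with Eq.~\eqref{eq:Apsi-support}, which extends verbatim to unnormalized $v$: $\|A_v\|_2^2=d\bra{v}G\ket{v}\le d\|v\|_2^2$, since $G$ is a density operator and so $G\preceq I$. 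Combining these,
\[
\frac{\|A_v\|_1}{d}\le\frac{\sqrt d\cdot\sqrt d\,\|v\|_2}{d}=\|v\|_2 .
\]
Setting $v=\psi-\phi$ yields $|g(\psi)-g(\phi)|\le\|\ket{\psi}-\ket{\phi}\|_2$.

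For the boundedness, the same estimate with a unit vector $v=\psi$ gives $g(\psi)\le 1$. This is consistent with $f(\psi)\le F_{\mathrm{OEF}}(\mathcal{E})\le1$ from Lemma~\ref{lem:oef-variational}. Hence $g(\psi)+g(\phi)\le 2$, and the claim follows.

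The only point needing care is confirming that Eq.~\eqref{eq:Apsi-support} and the bound $G\preceq I$ apply to the unnormalized difference vector. That holds because $\bra{v}G\ket{v}=\frac1d\Tr(A_v^\dagger A_v)$ is a quadratic identity valid for every $v\in\mathcal{H}_E$. No other obstacle is expected.
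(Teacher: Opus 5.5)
Your proof is correct and follows the paper's argument: it uses the same difference-of-squares factorization, the same identity $A_\psi-A_\phi=A_{\psi-\phi}$ with the reverse triangle inequality, and the same key bound $\|A_v\|_1\le d\|v\|_2$. The only variation is how you obtain that bound, via $\|X\|_1\le\sqrt d\,\|X\|_2$ and $\|A_v\|_2^2=d\bra{v}G\ket{v}\le d\|v\|_2^2$, whereas the paper uses $\|A_v\|_\infty\le\|v\|_2$ (from unitarity of $U_m$) together with $\|X\|_1\le d\|X\|_\infty$; both routes give the constant $2$.
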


\begin{proof}[Proof of Lemma~\ref{lem:Lipschitz}]
    For any vector $\ket{z}_E\in\mathcal{H}_E$, recall that $A_z=(I_{\mathcal{H}}\otimes\bra{z}_E)U_{m}(I_{\mathcal{H}}\otimes\ket{0}_E)$.
    Because $U_{m}$ is unitary,
    \begin{equation}
        \|A_z\|_\infty\le\|\ket{z}_E\|_2,\qquad\|A_z\|_1\le d\|\ket{z}_E\|_2.
    \end{equation}
    In particular, $\|A_\psi\|_1,\|A_\phi\|_1\le d$ and $\|A_\psi-A_\phi\|_1\le d\|\ket{\psi}_E-\ket{\phi}_E\|_2$.
    The triangle inequality gives
    \begin{align}
        |f(\ket{\psi}_E)-f(\ket{\phi}_E)|
        &\le\frac{\left(\|A_\psi\|_1+\|A_\phi\|_1\right)\|A_\psi-A_\phi\|_1}{d^2}\nonumber\\
        &\le 2\|\ket{\psi}_E-\ket{\phi}_E\|_2.
    \end{align}
\end{proof}

\begin{Lem}[A net from Haar-random vectors]
    \label{lem:epsilon-net}
    Let $\dim\mathcal{H}_P=r\ge2$ and $\eta,\beta\in(0,1)$.
    If $\ket{\psi_1}_E,\ldots,\ket{\psi_N}_E$ are independent Haar-random unit vectors in $\mathcal{H}_P$ and
    \begin{equation}
    \label{eq:random-net-size}
        N\ge\left(\frac{8}{\eta}\right)^{2r-1}\left[2r\log\!\left(\frac{6}{\eta}\right)+\log\!\left(\frac{1}{\beta}\right)\right],
    \end{equation}
    then $\mathcal{N}_\eta:=\{\ket{\psi_j}_E:1\le j\le N\}$ is an $\eta$-net of the unit sphere in $\mathcal{H}_P$ with probability at least $1-\beta$.
\end{Lem}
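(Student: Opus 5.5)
The plan is the standard covering-plus-cap-probability argument. Identify the unit sphere of $\mathcal{H}_P\cong\mathbb{C}^r$ with the real sphere $S^{2r-1}\subset\mathbb{R}^{2r}$; the Haar measure on unit vectors then becomes the uniform surface measure $\sigma$ on $S^{2r-1}$, and the Euclidean distance is unchanged. First fix a deterministic $(\eta/2)$-net $\mathcal{M}$ of $S^{2r-1}$ with $|\mathcal{M}|\le(1+4/\eta)^{2r}\le(6/\eta)^{2r}$ (using $\eta<1$). This follows from the standard volumetric bound \cite[Section~4.2]{Vershynin2026}: take a maximal $(\eta/2)$-separated set and compare the volumes of the disjoint balls of radius $\eta/4$ with the ball of radius $1+\eta/4$. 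The key reduction is then: if every cap $C_x:=\{y\in S^{2r-1}:\|x-y\|_2\le\eta/2\}$, $x\in\mathcal{M}$, contains at least one sample $\ket{\psi_j}$, then $\mathcal{N}_\eta$ is an $\eta$-net. Indeed, any unit $\phi$ has some $x\in\mathcal{M}$ within $\eta/2$, and the sample in $C_x$ is within $\eta/2$ of $x$, so by the triangle inequality it is within $\eta$ of $\phi$.

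Next, lower-bound the cap probability $p:=\sigma(C_x)$, which does not depend on $x$. I aim for $p\ge(\eta/8)^{2r-1}$. One route is the cone/volume argument: $\sigma(C_x)$ equals the normalized volume of the cone over $C_x$ inside the unit ball. The point $(1-\eta/8)x$ is at distance $\eta/8$ from $x$, so the Euclidean ball of radius $\eta/8$ around it lies within distance $\eta/4$ of $x$ and inside the unit ball. Radially projecting that small ball (minus a negligible neighborhood of the origin, which it avoids) lands within distance $\le 2\cdot\eta/4=\eta/2$ of $x$, so the small ball lies in the cone. This gives $\sigma(C_x)\ge(\eta/8)^{2r}$. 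That is one power of $\eta/8$ weaker than the target, so I will sharpen it. Alternatively, I will use the explicit formula $\sigma(C_x)=\Pr[\mathrm{Re}\langle x,y\rangle\ge1-\eta^2/8]$, whose one-dimensional marginal density is proportional to $(1-t^2)^{r-3/2}$, and integrate over the top interval. With angular radius $\theta$ satisfying $\sin\theta\gtrsim\eta/2$, the standard bound $\sigma(\mathrm{cap}_\theta)\ge\frac{1}{2}(\sin\theta/2)^{2r-1}\cdot(\text{const})$, with slack in the constant $8$, should yield $p\ge(\eta/8)^{2r-1}$. I expect this cap estimate, with clean constants matching the exponent $2r-1$, to be the main technical obstacle. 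If it gets messy, I would first try the cone argument applied to the $(2r-1)$-dimensional tangent-ball picture, since caps are comparable to $(2r-1)$-dimensional balls of radius $\sim\eta/2$.

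Finally, a union bound over $\mathcal{M}$ gives
\begin{equation}
\Pr[\exists x\in\mathcal{M}:\ C_x\cap\mathcal{N}_\eta=\emptyset]\le|\mathcal{M}|(1-p)^N\le\left(\frac{6}{\eta}\right)^{2r}e^{-Np}.
\end{equation}
This is at most $\beta$ once $N\ge p^{-1}\left[2r\log(6/\eta)+\log(1/\beta)\right]$, and that condition is implied by Eq.~\eqref{eq:random-net-size} together with $p\ge(\eta/8)^{2r-1}$.
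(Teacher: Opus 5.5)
Your overall argument is the paper's. You use the same maximal $\eta/2$-separated set with $|\mathcal{M}|\le(1+4/\eta)^{2r}\le(6/\eta)^{2r}$, the same triangle-inequality reduction to hitting every cap $C_x$, and the same union bound $(6/\eta)^{2r}e^{-Np}\le\beta$. You even identify $C_x$ with the event $\mathrm{Re}\langle x,y\rangle\ge1-\eta^2/8$, as the paper does.

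\textbf{The gap.} The lemma rests on one quantitative input, $p\ge(\eta/8)^{2r-1}$, and you never prove it. The one estimate you do carry out, the cone argument, yields only $p\ge(\eta/8)^{2r}$. Fed into your union bound, it shows that $N\ge(8/\eta)^{2r}\bigl[2r\log(6/\eta)+\log(1/\beta)\bigr]$ suffices. That is a factor $8/\eta$ more samples than Eq.~\eqref{eq:random-net-size} permits, so it proves a weaker lemma. Inscribing a full-dimensional ball in the cone inherently loses one power of $\eta$, because the cap is a $(2r-1)$-dimensional object. Your other two suggestions, a cap bound with an unspecified constant and a tangent-ball picture, are not arguments as they stand.

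\textbf{The fix.} The missing step is short along the marginal-density route you already named, and it is exactly what the paper does. Let $X=\mathrm{Re}\langle x,y\rangle$, with density $\frac{\Gamma(r)}{\sqrt{\pi}\,\Gamma(r-1/2)}(1-t^2)^{r-3/2}$ on $(-1,1)$ (real ambient dimension $2r$). On $[1-\eta^2/8,1]$ one has $t\ge0$, so $1-t^2\ge1-t$. Since $r-\tfrac32\ge0$, this inequality survives the power. Integrating exactly gives
\begin{align*}
p&\ge\frac{\Gamma(r)}{\sqrt{\pi}\,\Gamma(r-\tfrac12)}\int_{1-\eta^2/8}^{1}(1-t)^{r-\frac32}\,dt
=\frac{\Gamma(r)}{\sqrt{\pi}\,\Gamma(r+\tfrac12)}\Bigl(\frac{\eta^2}{8}\Bigr)^{r-\frac12}\\
&=\frac{4^r}{\pi r\binom{2r}{r}}\Bigl(\frac{\eta^2}{8}\Bigr)^{r-\frac12}
\ge\frac{8^{r-\frac12}}{\pi r}\Bigl(\frac{\eta}{8}\Bigr)^{2r-1}
\ge\Bigl(\frac{\eta}{8}\Bigr)^{2r-1}.
\end{align*}
This uses $\binom{2r}{r}\le4^r$, the identity $(\eta^2/8)^{r-1/2}=8^{r-1/2}(\eta/8)^{2r-1}$, and $8^{r-1/2}\ge\pi r$. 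The last inequality is where the hypothesis $r\ge2$ is needed; it fails at $r=1$, as does $r-\tfrac32\ge0$.

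The computation actually gives $p\ge(\eta/\sqrt{8})^{2r-1}/(\pi r)$, so the constant $8$ in the statement has ample slack. The exponent $2r-1$, however, requires an estimate of this surface-measure type. With this bound in place, the rest of your proposal goes through verbatim.
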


\begin{proof}[Proof of Lemma~\ref{lem:epsilon-net}]
    Identify $\mathcal{H}_P\simeq\mathbb{C}^r$ with $\mathbb{R}^{2r}$.
    Choose a maximal subset $\mathcal{N}_{\eta/2}$ of the unit sphere whose distinct points have Euclidean distance greater than $\eta/2$.
    By maximality, $\mathcal{N}_{\eta/2}$ is an $\eta/2$-net.
    The open Euclidean balls of radius $\eta/4$ centered at its points are disjoint and lie inside a ball of radius $1+\eta/4$, so a volume comparison gives
    \begin{equation}
        |\mathcal{N}_{\eta/2}|\le\left(1+\frac{4}{\eta}\right)^{2r}\le\left(\frac{6}{\eta}\right)^{2r}.
    \end{equation}
    For a fixed unit vector $\ket{v}_E\in\mathcal{H}_P$ and a Haar-random unit vector $\ket{\psi}_E\in\mathcal{H}_P$, set $X:=\operatorname{Re}\braket{v|\psi}$.
    Under the identification $\mathbb{C}^r\simeq\mathbb{R}^{2r}$, $\ket{\psi}_E$ is uniform on the real unit sphere: it can be generated by normalizing a vector whose real and imaginary coordinates are independent standard real Gaussian variables.
    Moreover, $X$ is the real Euclidean inner product of the corresponding unit vectors.
    By rotational invariance, $X$ therefore has the same distribution as the first coordinate of a uniform point on that sphere.
    The coordinate-density formula in \cite{Karol2000}, with real ambient dimension $n=2r$, gives
    \begin{equation}
    \label{eq:sphere-coordinate-density}
        p_r(x)=\frac{\Gamma(r)}{\sqrt{\pi}\,\Gamma(r-\tfrac12)}(1-x^2)^{r-\frac{3}{2}},\qquad -1<x<1.
    \end{equation}
    To relate this coordinate to an $\eta/2$-ball, use the normalization of both vectors to expand the squared distance:
    \begin{align}
    \label{eq:distance-coordinate}
        \|\ket{\psi}_E-\ket{v}_E\|_2^2
        &=\|\ket{\psi}_E\|_2^2+\|\ket{v}_E\|_2^2-\braket{\psi|v}-\braket{v|\psi}\nonumber\\
        &=2-2\operatorname{Re}\braket{v|\psi}=2-2X.
    \end{align}
    Consequently, the events are equivalent:
    \begin{equation}
    \label{eq:distance-coordinate-event}
        \|\ket{\psi}_E-\ket{v}_E\|_2\le\frac{\eta}{2}\quad\Longleftrightarrow\quad2-2X\le\frac{\eta^2}{4}\quad\Longleftrightarrow\quad X\ge1-\frac{\eta^2}{8}.
    \end{equation}
    By Eq.~\eqref{eq:distance-coordinate-event}, the $\eta/2$-ball around $\ket{v}_E$ on the unit sphere is the spherical cap defined by $X\ge1-\eta^2/8$.
    Its probability is therefore obtained by integrating the density $p_r$ of $X$ over the interval $[1-\eta^2/8,1]$:
    \begin{equation}
    \label{eq:haar-cap-probability}
        \Pr\!\left[\|\ket{\psi}_E-\ket{v}_E\|_2\le\frac{\eta}{2}
        \right]=\Pr\!\left[X\ge1-\frac{\eta^2}{8}\right]=\int_{1-\eta^2/8}^{1}p_r(x)\,dx.
    \end{equation}
    This integral is the probability of the single cap centered at $\ket{v}_E$.
    Substituting Eq.~\eqref{eq:sphere-coordinate-density} and integrating yields the following lower bound:
    \begin{align}
    \label{eq:haar-cap-bound}
        \int_{1-\eta^2/8}^{1}p_r(x)\,dx
        &=\frac{\Gamma(r)}{\sqrt{\pi}\,\Gamma(r-\tfrac{1}{2})}\int_{1-\eta^2/8}^{1}(1-x^2)^{r-\frac{3}{2}}\,dx\nonumber\\
        &\ge\frac{\Gamma(r)}{\sqrt{\pi}\,\Gamma(r-\tfrac{1}{2})}\int_{1-\eta^2/8}^{1}(1-x)^{r-\frac{3}{2}}\,dx\nonumber\\
        &=\frac{\Gamma(r)}{\sqrt{\pi}\,\Gamma(r-\frac{1}{2})(r-\frac{1}{2})}\left(\frac{\eta^2}{8}\right)^{r-\frac{1}{2}}\nonumber\\
        &=\frac{\Gamma(r)}{\sqrt{\pi}\,\Gamma(r+\tfrac{1}{2})}\left(\frac{\eta^2}{8}\right)^{r-\frac{1}{2}}\nonumber\\
        &=\frac{4^r}{\pi r\binom{2r}{r}}\left(\frac{\eta^2}{8}\right)^{r-\frac{1}{2}}\nonumber\\
        &\ge\frac{1}{\pi r}\left(\frac{\eta^2}{8}\right)^{r-\frac{1}{2}}\nonumber\\
        &=\frac{8^{r-\frac{1}{2}}}{\pi r}\left(\frac{\eta}{8}\right)^{2r-1}\nonumber\\
        &\ge\left(\frac{\eta}{8}\right)^{2r-1}.
    \end{align}
    The first inequality uses $1-x^2\ge1-x$ on the integration interval and $r-\tfrac{3}{2}\ge0$.
    The integral contributes the factor $1/(r-\tfrac{1}{2})$, which combines with $\Gamma(r-\tfrac{1}{2})$ through the recurrence $\Gamma(r+\tfrac{1}{2})=(r-\tfrac{1}{2})\Gamma(r-\tfrac{1}{2})$.
    For the next identity, we used
    \begin{equation}
        \Gamma(r)=(r-1)!,
        \qquad
        \Gamma(r+\tfrac{1}{2})=\frac{(2r)!\sqrt{\pi}}{4^r r!}.
    \end{equation}
    The remaining inequalities follow from $\binom{2r}{r}\le4^r$ and $8^{r-1/2}\ge\pi r$ for $r\ge2$.
    For each $\ket{v}_E\in\mathcal{N}_{\eta/2}$, the probability that all $N$ samples miss its $\eta/2$-ball is at most $\exp[-N(\eta/8)^{2r-1}]$.
    A union bound therefore bounds the probability of missing at least one such ball by
    \begin{equation}
        \left(\frac{6}{\eta}\right)^{2r}
        \exp\!\left[-N\left(\frac{\eta}{8}\right)^{2r-1}\right]
        \le\beta.
    \end{equation}
    If no ball is missed, every unit vector lies within $\eta/2$ of a point of $\mathcal{N}_{\eta/2}$ and that point lies within $\eta/2$ of a sampled vector.
    The triangle inequality completes the proof.
\end{proof}

\subsubsection{Evaluation of a fixed candidate}
\label{subsubsec:fixed-candidate-evaluation}

\begin{Lem}[Estimation for a fixed environment state]
    \label{lem:QSVT}
    Let $\ket{\psi}_E$ be a unit vector in $\mathcal{H}_E$, and suppose a known state-preparation unitary $W_\psi$ satisfying $W_\psi\ket{0}_E=\ket{\psi}_E$ is available.
    For any $\xi,\beta\in(0,1)$, there is a subroutine $\texttt{\upshape Estimatef}(\mathcal{U}_{m},W_\psi;\xi,\beta)$ that returns $\widehat{f}_\psi\in[0,1]$ satisfying
    \begin{equation}
        \Pr\!\left[
            |\widehat{f}_\psi-f(\ket{\psi}_E)|\le\xi
        \right]\ge 1-\beta
    \end{equation}
    using $\bigO{\xi^{-3}\log(2/\beta)}$ queries to $\mathcal{U}_{m}$ and $\mathcal{U}_{m}^\dagger$, together with $\bigO{\xi^{-3}\log(2/\beta)}$ applications of the known unitaries $W_\psi$ and $W_\psi^\dagger$.
\end{Lem}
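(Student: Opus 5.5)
We build a block-encoding of $A_\psi$, apply QSVT to obtain approximately $\tfrac12\sqrt{A_\psi^\dagger A_\psi}$, whose normalized trace is $\|A_\psi\|_1/(2d)$, and then read off the squared trace by a Bell-state overlap test combined with amplitude estimation.

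\textbf{Step 1 (block-encodings).} The unitary $(I_{\mathcal H}\otimes W_\psi^\dagger)U_m$ is a block-encoding of $A_\psi$ with normalization $1$. Its top-left block, with $E$ as the ancilla projected onto $\ket{0}_E$ on both sides, is $(I\otimes\bra{0}W_\psi^\dagger)U_m(I\otimes\ket0)=A_\psi$. Its inverse $U_m^\dagger(I\otimes W_\psi)$ block-encodes $A_\psi^\dagger$. The product of these two block-encodings, using two ancilla copies of $E$, block-encodes $A_\psi^\dagger A_\psi$ with one query each to $U_m$ and $U_m^\dagger$.

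\textbf{Step 2 (QSVT).} Apply the odd/even polynomial version of QSVT to $A_\psi$ directly, avoiding the product, which loses precision near $0$. Concretely, use singular value transformation with a real even polynomial $P$ of degree $D$ satisfying $|P(x)-\tfrac12|x||\le\xi/8$ on $[-1,1]$ and $|P|\le1$. Such a polynomial exists with $D=\bigO{1/\xi}$ by standard approximation of $|x|$, for instance via the sign-function polynomial of \cite{Gilyen2019SVD} times $x$, or by Jackson-type bounds with uniform error $\bigO{1/D}$. Even QSVT on the block-encoding of $A_\psi$ gives a unitary $V$ block-encoding $B_\psi:=P^{(SV)}(A_\psi)=\sum_k P(\sigma_k)\ket{v_k}\bra{v_k}$ using $\bigO{D}$ queries to $U_m,U_m^\dagger$ and $W_\psi,W_\psi^\dagger$. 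Then $\|B_\psi-\tfrac12|A_\psi|\|_\infty\le\xi/8$, so $|\Tr B_\psi/d-\|A_\psi\|_1/(2d)|\le\xi/8$.

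\textbf{Step 3 (trace readout).} Apply $V$ to system $A$ of $\ket{\Phi^+}_{AA'}\ket{0}_{\rm anc}$ and project onto $\ket{\Phi^+}\ket0$. The success amplitude is $\Tr B_\psi/d$, so the success probability is $p_\psi=|\Tr B_\psi/d|^2$. Since $\|A_\psi\|_1/(2d)\le 1/2$ and squaring is $2$-Lipschitz on $[-1,1]$, we get $|4p_\psi-f(\ket\psi_E)|\le \xi/2$. The reflection about the success event is implementable because $\ket{\Phi^+}$ is preparable, so amplitude estimation applies. It estimates $p_\psi$ to additive error $\xi/8$ using $\bigO{1/\xi}$ calls to $V$, with constant success probability. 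The median of $\bigO{\log(2/\beta)}$ repetitions lifts this to success probability $1-\beta$.

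Multiplying by $4$ and clipping to $[0,1]$, which only reduces error since $f\in[0,1]$, gives total error at most $\xi/2+\xi/2=\xi$. The query cost is $\bigO{1/\xi}\cdot\bigO{1/\xi}\cdot\log(2/\beta)=\bigO{\xi^{-2}\log(2/\beta)}$, which is within the stated $\bigO{\xi^{-3}\log(2/\beta)}$.

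\textbf{Main obstacle.} The delicate step is the polynomial approximation of $|x|/2$ near $0$. If one instead works through $A_\psi^\dagger A_\psi$ and approximates $\sqrt{y}$, the error near $0$ degrades: uniform $\xi$-approximation of $\sqrt y$ needs degree $\bigO{1/\xi^2}$, which gives the paper's $\xi^{-3}$. Either route yields a bound within $\bigO{\xi^{-3}\log(2/\beta)}$. I would use the $\sqrt{y}$ route with degree $\bigO{\xi^{-2}}$ to match the statement directly, taking care that QSVT requires a bounded polynomial of definite parity on $[-1,1]$. This is ensured by approximating $\tfrac12\sqrt{x^2}$ as an even polynomial in $x$ and scaling by $1/2$.
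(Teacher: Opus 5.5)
Your argument is correct up to a constant-level slip noted below, and it follows a genuinely different and cheaper route than the paper.

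\textbf{Comparison with the paper's route.} The paper uses the product construction to block-encode $A_\psi^\dagger A_\psi$. It then applies real even QSVT with a Jackson polynomial approximating $\tfrac12\sqrt{|x|}$. Since $\sqrt{\cdot}$ is only $\tfrac12$-H\"older at $0$, this needs degree $\bigO{\xi^{-2}}$. Multiplying by the $\bigO{\xi^{-1}\log(2/\beta)}$ amplitude-estimation rounds gives exactly the stated $\bigO{\xi^{-3}\log(2/\beta)}$.

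You instead use the fact that even-parity QSVT applied to the block-encoding of $A_\psi$ itself already outputs $P\bigl(\sqrt{A_\psi^\dagger A_\psi}\bigr)$. So only the Lipschitz function $\tfrac12|x|$ has to be approximated, at degree $\bigO{\xi^{-1}}$; the product block-encoding in your Step~1 is then never used. The Bell-state trace readout, amplitude estimation, and median amplification are the same as in the paper.

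Your route gives $\bigO{\xi^{-2}\log(2/\beta)}$, which proves the lemma with room to spare. Propagated through Theorem~\ref{main-upper-bound}, it would lower the $\epsilon^{-3}$ prefactor of the evaluation cost to $\epsilon^{-2}$. The paper's route only buys the mild convenience of transforming a positive semidefinite Hermitian block, at the price of a factor $\xi^{-1}$.

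\textbf{Details to tidy.}

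First, the error constant. Set $a=\Tr B_\psi/d$ and $b=\|A_\psi\|_1/(2d)$. Then $|4p_\psi-f(\ket{\psi}_E)|=4\bigl||a|^2-b^2\bigr|\le 4|a-b|\,(|a|+b)\le\tfrac{\xi}{2}\bigl(1+\tfrac{\xi}{8}\bigr)$, not $\le\xi/2$; the $2$-Lipschitz bound on $[-1,1]$ only gives $\xi$. So your total error slightly exceeds $\xi$. Setting the polynomial accuracy to $\xi/16$, as the paper does, fixes this.

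Second, controlled access. Only $\mathcal{U}_m$ and $\mathcal{U}_m^\dagger$ are available, not controlled versions, so you should state two facts the paper also states.
\begin{itemize}
\item Real-polynomial QSVT (the combination of the phase sequences $\Phi$ and $-\Phi$) controls only the known phase rotations and uses the same oracle sequence in both branches.
\item In amplitude estimation, let $T$ be the circuit that prepares $\ket{\Phi^+}\ket{0}$ and applies your QSVT unitary. The controlled reflection $\texttt{ctrl-}(TRT^\dagger)$ equals $(I_2\otimes T)\,\texttt{ctrl-}R\,(I_2\otimes T^\dagger)$, with $T$ uncontrolled.
\end{itemize}

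Third, your closing paragraph conflates the two routes. Approximating $\tfrac12\sqrt{x^2}$ by an even polynomial in $x$ is your $|x|$ route, not the $\sqrt{y}$ route on $A_\psi^\dagger A_\psi$. There is no need to switch, since an $\bigO{\xi^{-2}\log(2/\beta)}$ bound already implies the stated one.
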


\begin{proof}[Proof of Lemma~\ref{lem:QSVT}]
We present a procedure of the algorithm and error analysis, then we show the complexity analysis.
We prepare the $d$-dimensional system register $A$ and ancilla system $A'$ with the same dimension with $A$.
We denote maximally entangled state on these systems by $\ket{\Phi^+}_{AA'}$.
    \begin{algorithm}[H]
        \caption{$\texttt{\upshape Estimatef}(\mathcal{U}_{m},W_\psi;\xi,\beta)$}
        \label{alg:estimate-f}
        \begin{algorithmic}[1]
            \Require Query access to $\mathcal{U}_{m}$ and $\mathcal{U}_{m}^\dagger$; known $W_\psi,W_\psi^\dagger$ with $W_\psi\ket{0}_E=\ket{\psi}_E$; $\xi,\beta\in(0,1)$
            \Ensure An estimate $\widehat{f}_\psi\in[0,1]$ of $f(\ket{\psi}_E)$
            \State $U_\psi\gets(I_{\mathcal{H}}\otimes W_\psi^\dagger)U_{m}$
            \State Block-encode $A_\psi^\dagger A_\psi$ using $U_\psi^\dagger$ and $U_\psi$ with separate block-encoding ancillas
            \State Compute $P$ from Eq.~\eqref{eq:polynomial-construction} and construct its QSVT unitary $Q_\psi$
            \State Construct a circuit $T_\psi$ that prepares $\ket{\Phi^+}_{AA'}\ket{0}_{\mathrm{anc}}$ and then applies $Q_\psi$ on $A$ and the ancillas
            \State $\Pi_{\mathrm{succ}}\gets\proj{\Phi^+}_{AA'}\otimes\proj{0}_{\mathrm{anc}}$
            \State Use amplitude estimation with $T_\psi$, $T_\psi^\dagger$, and $\Pi_{\mathrm{succ}}$ to obtain $\widehat{p}_\psi$ with additive error $\xi/8$ and failure probability at most $\beta$
            \State \Return $\min\{1,\max\{0,4\widehat{p}_\psi\}\}$
        \end{algorithmic}
    \end{algorithm}
    We establish the guarantees of Algorithm~\ref{alg:estimate-f}.
    Since $\|A_\psi\|_\infty\le1$, the spectrum of $A_\psi^\dagger A_\psi$ lies in $[0,1]$, and
    \begin{equation}
    \label{eq:trace-norm-target}
        f(\ket{\psi}_E)=\left(\frac{\Tr\sqrt{A_\psi^\dagger A_\psi}}{d}\right)^2\in[0,1].
    \end{equation}
    The unitary $U_\psi$ and its adjoint block-encode $A_\psi$ and $A_\psi^\dagger$, respectively:
    \begin{align}
        U_\psi&:=(I_{\mathcal{H}}\otimes W_\psi^\dagger)U_{m},
        \label{eq:Upsi-def}\\
        (I_{\mathcal{H}}\otimes\bra{0}_E)U_\psi(I_{\mathcal{H}}\otimes\ket{0}_E)&=A_\psi,
        \label{eq:block-enc-A}\\
        (I_{\mathcal{H}}\otimes\bra{0}_E)U_\psi^\dagger(I_{\mathcal{H}}\otimes\ket{0}_E)&=A_\psi^\dagger.
        \label{eq:block-enc-Adag}
    \end{align}
    The product construction~\cite{Gilyen2019SVD} therefore gives an exact block-encoding of $A_\psi^\dagger A_\psi$ using two environment registers and one call each to $U_\psi$ and $U_\psi^\dagger$.

    \paragraph{Step 1. Polynomial approximation and QSVT.}
    Since $|\sqrt{|x|}-\sqrt{|y|}|\le\sqrt{|x-y|}$, Jackson's theorem~\cite{Jackson1911} gives a real polynomial $q$ of degree $\bigO{\xi^{-2}}$ satisfying
    \begin{equation}
    \label{eq:jackson-polynomial}
        \max_{x\in[-1,1]}\left|q(x)-\sqrt{|x|}\right|\le\frac{\xi}{32}.
    \end{equation}
    Taking its even part and rescaling gives the real even polynomial
    \begin{equation}
    \label{eq:polynomial-construction}
        P(x):=\frac{q(x)+q(-x)}{4(1+\xi/32)},
    \end{equation}
    which has degree $\bigO{\xi^{-2}}$ and satisfies, for all $x\in[-1,1]$,
    \begin{equation}
    \label{eq:poly-approx}
        |P(x)|\le\frac12,
        \qquad
        \left|P(x)-\frac12\sqrt{|x|}\right|\le\frac{\xi/32}{1+\xi/32}\le\frac{\xi}{32}.
    \end{equation}
    Because $A_\psi^\dagger A_\psi$ is positive semidefinite, its singular value transformation by $P$ is $P(A_\psi^\dagger A_\psi)$.
    The real-polynomial QSVT construction~\cite{Gilyen2019SVD} yields a unitary $Q_\psi$ whose encoded block $B_\psi$, with all ancillas in the reference state $\ket{0}_{\mathrm{anc}}$, satisfies
    \begin{equation}
    \label{eq:op-norm-bound}
        \left\|B_\psi-\frac12\sqrt{A_\psi^\dagger A_\psi}\right\|_\infty\le\frac{\xi}{16}.
    \end{equation}
    Here, the polynomial approximation contributes at most $\xi/32$, and we implement the known phase rotations with total operator-norm error at most $\xi/32$, which also bounds the circuit error by a telescoping argument.
    This construction uses the same oracle-call sequence in both control branches, controlling only known phase rotations, so it requires no controlled access to the unknown dilation unitary.
    The Hadamard gates preparing and unpreparing the additional control qubit are included in $Q_\psi$.

    \paragraph{Step 2. Estimating the scalar output.}
    Applying $Q_\psi$ to $\ket{\Phi^+}_{AA'}\ket{0}_{\mathrm{anc}}$ and testing the known projector $\Pi_{\mathrm{succ}}=\proj{\Phi^+}_{AA'}\otimes\proj{0}_{\mathrm{anc}}$ gives success probability
    \begin{equation}
    \label{eq:trace-success-probability}
        p_\psi=\left|\bra{\Phi^+}_{AA'}(B_\psi\otimes I_{A'})\ket{\Phi^+}_{AA'}\right|^2
        =\left|\frac{\Tr B_\psi}{d}\right|^2.
    \end{equation}
    By Eq.~\eqref{eq:op-norm-bound},
    \begin{equation}
    \label{eq:trace-approx-bound}
        \left|\frac{\Tr B_\psi}{d}-\frac{\|A_\psi\|_1}{2d}\right|
        \le\left\|B_\psi-\frac12\sqrt{A_\psi^\dagger A_\psi}\right\|_\infty
        \le\frac{\xi}{16}.
    \end{equation}
    Since $B_\psi$ is a block of a unitary, $|\Tr B_\psi|/d\le1$, while $\|A_\psi\|_1/(2d)\le1/2$.
    Therefore,
    \begin{equation}
    \label{eq:probability-bias}
        |4p_\psi-f(\ket{\psi}_E)|
        \le4\cdot\frac{\xi}{16}\left(1+\frac12\right)
        =\frac{3\xi}{8}.
    \end{equation}
    Amplitude estimation~\cite{Brassard2000}, followed by median amplification, estimates $p_\psi$ to additive error $\xi/8$ with failure probability at most $\beta$.
    This needs no controlled query to the unknown dilation: for the preparation unitary $T_\psi$ and any known reflection $R$,
    \begin{equation}
        \texttt{ctrl-}(T_\psi R T_\psi^\dagger)
        =(I_2\otimes T_\psi)\,\texttt{ctrl-}R\,(I_2\otimes T_\psi^\dagger),
    \end{equation}
    where $\texttt{ctrl-}$ denotes the controlled version of an operator.
    Thus, the controlled reflection about the prepared state uses unconditional calls to $T_\psi$ and $T_\psi^\dagger$, and the controlled reflection for $\Pi_{\mathrm{succ}}$ uses only known operations.
    On the success event of amplitude estimation,
    \begin{equation}
    \label{eq:subroutine-error}
        |4\widehat{p}_\psi-f(\ket{\psi}_E)|
        \le4\frac{\xi}{8}+\frac{3\xi}{8}
        =\frac{7\xi}{8}\le\xi.
    \end{equation}
    Clipping $4\widehat{p}_\psi$ to $[0,1]$ cannot increase the error, proving the required accuracy and success probability.

    \paragraph{Complexity.}
    The block-encoding product uses a constant number of queries to $\mathcal{U}_{m}$ and $\mathcal{U}_{m}^\dagger$ and applications of $W_\psi$ and $W_\psi^\dagger$.
    The degree of $P$ is $\bigO{\xi^{-2}}$, so each use of $Q_\psi$ or its inverse costs $\bigO{\xi^{-2}}$ such queries and applications.
    Amplitude estimation and median amplification use
    \begin{equation}
    \label{eq:trace-est-cost}
        \bigO{\xi^{-1}\log(2/\beta)}
    \end{equation}
    applications of $Q_\psi$ and $Q_\psi^\dagger$ in total~\cite{Brassard2000}, giving the total cost
    \begin{equation}
    \label{eq:subroutine-query-cost}
        \bigO{\xi^{-2}}\,\bigO{\xi^{-1}\log(2/\beta)}
        =\bigO{\xi^{-3}\log(2/\beta)}.
    \end{equation}
    All remaining classical computations and known operations require no additional queries to the unknown dilation unitary channel.
\end{proof}

\section{Query lower bound with access to standard channel}
\label{sec:lower bound}
Next, we establish query lower bound on estimating OAGF under access only to black-box channel, instead of access to Stinespring dilation unitary of it and inverse. For sufficiently small constant additive error, we show that $\Omega(\sqrt d)$ channel queries are necessary.
For an $n$-qubit system, this is exponential in $n$.
\begin{Thm}
    Let $d=4l$ for a positive integer $l$.
    Fix a constant $0<\epsilon<\frac{\sqrt{2}-1}{10}$.
    Any quantum algorithm that, given black-box query access to an unknown quantum channel $\mathcal{E}$ acting on a $d$-dimensional system, estimates $F_{\mathrm{OAGF}}(\mathcal{E})$ to additive error at most $\epsilon$ with success probability at least $2/3$ for every such $\mathcal{E}$ requires $\bigOmega{\sqrt{d}}$ queries in the worst case.
    The lower bound holds even when $\mathcal E$ is promised to have Kraus rank two.
    \label{Thm:lower bound}
\end{Thm}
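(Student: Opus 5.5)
We reduce a channel-discrimination problem to OEF estimation and then bound the distinguishing power of any $q$-query algorithm against two Haar-randomized channel ensembles. By Lemma~\ref{lem:equivalent OEF OAGF}, estimating $F_{\mathrm{OAGF}}$ to additive error $\epsilon$ is equivalent to estimating $F_{\mathrm{OEF}}$ to error $\epsilon(d+1)/d\le 2\epsilon$. It therefore suffices to find two Kraus-rank-two channels $\mathcal{E}_0,\mathcal{E}_1$ whose OEF values differ by more than $4\epsilon$, uniformly in $d$. The threshold $(\sqrt2-1)/10$ suggests a gap of order $(\sqrt2-1)/2$ between the two OEF values.

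\textbf{Construction.} With $d=4l$, split $\mathcal{H}$ into orthogonal blocks. As $\mathcal{E}_0$ we take a channel with Kraus operators $K_1=\tfrac{1}{\sqrt2}I$ and $K_2=\tfrac{1}{\sqrt2}Z$, where $Z=\Pi-\Pi^\perp$ for a projector $\Pi$ of rank $d/2$. Its best unitary approximation has $\|c_1K_1+c_2K_2\|_1^2/d^2=\tfrac12(|c_1|+|c_2|)^2$, maximized at $1$ by $c_1=c_2=1/\sqrt2$. This uses Lemma~\ref{lem:oef-variational}. For $\mathcal{E}_1$ we choose Kraus operators such as $\tfrac{1}{\sqrt2}(I\oplus 0)$-type partial isometries, or a pair $K_1=\tfrac1{\sqrt2}I$, $K_2=\tfrac1{\sqrt2}X$ with $X$ a traceless unitary built from the four-block structure (this is why $4\mid d$). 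The goal is that every combination $c_1K_1+c_2K_2$ has trace norm at most $d\sqrt{\cdots}$, giving an OEF value near $1/\sqrt2$ or $1/2$. I will tune the pair so that the two channels have identical low-order "local" statistics: the same output on the maximally mixed state, and indistinguishable behaviour under single uses on random inputs. At the same time, their OEF values must be computable in closed form via Lemma~\ref{lem:oef-variational}, with gap $\Delta>10\epsilon$-compatible constants.

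\textbf{Randomization.} Define the ensembles $\mathcal{E}_b^{U,V}=\mathcal{V}\circ\mathcal{E}_b\circ\mathcal{U}$ with $U,V$ Haar-random. OEF is invariant under this randomization (Lemma~\ref{lem:unitary equivalent}), so an $\epsilon$-accurate estimator, thresholded at the midpoint, distinguishes the ensembles with success probability $2/3$. It remains to show that for any $q$-query adaptive algorithm (with arbitrary ancillas), the trace distance between the averaged final states is $O(q^2/d)$ plus vanishing terms. Then $q=\Omega(\sqrt d)$ follows.

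\textbf{Main obstacle: the indistinguishability bound.} I would write each query as a Stinespring isometry $V_b$ conjugated by the Haar unitaries, and replace the Haar-random $U$ and $V$ by path-recording oracles~\cite{Ma2024}. These agree with true Haar averaging up to $O(q^2/d)$ in trace distance. In the path-recording picture, as long as no collision occurs among the recorded input/output registers, the joint state depends only on the action of $\mathcal{E}_b$ through "fresh" random vectors. Its averaged effect is then determined by normalized traces of products of Kraus operators, such as $\Tr(K_i^\dagger K_j)/d$, which I arrange to be equal for $b=0,1$. Collisions, where a recorded output is fed back as an input, occur with amplitude $O(q/\sqrt d)$ per query; summed over queries this gives $O(q^2/d)$ total deviation. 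The delicate parts are two. First, the isometry must be compatible with the adaptive, environment-discarding structure. Second, the matched quantities must suffice: I expect to need the Gram matrices $\Tr(K_i^\dagger K_j)/d$ to coincide and the channels to be unital. If a direct matching fails, I would first try choosing $\mathcal{E}_1$ so that both $\mathsf{E}_0$ and $\mathsf{E}_1$ have identical spectra on their rank-two supports, differing only in the trace-norm structure that the OEF detects.
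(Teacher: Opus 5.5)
Your overall architecture matches the paper's. You reduce to discriminating two Haar-randomized ensembles $\mathcal{V}\circ\mathcal{E}_b\circ\mathcal{U}$, use invariance of $F_{\mathrm{OEF}}$ under unitary pre- and post-composition, and replace the Haar unitaries by path-recording oracles at cost $O(q^2/d)$. But the two ingredients that constitute the proof are missing.

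First, there is no hard instance. The one channel you write down is miscomputed. For $K_1=I/\sqrt2$, $K_2=Z/\sqrt2$ one has $c_1K_1+c_2K_2=\frac{1}{\sqrt2}\left[(c_1+c_2)\Pi+(c_1-c_2)\Pi^\perp\right]$. Hence $f=\frac18\left(\abs{c_1+c_2}+\abs{c_1-c_2}\right)^2\le\frac12$, with equality at $c=(1,0)$, and at your choice $c_1=c_2=1/\sqrt2$ one gets $1/4$. You replaced $\norm{c_1K_1+c_2K_2}_1$ by its triangle-inequality upper bound. An OEF of $1$ would force $\mathcal{E}_0$ to be unitary, which this Kraus-rank-two channel is not. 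The second channel $\mathcal{E}_1$ is never fixed, so neither the constant gap nor the indistinguishability can be checked. Your slack (OEF error $\le 2\epsilon$, hence gap $>4\epsilon$) is also lossier than needed. Over the stated range of $\epsilon$ it demands an OEF gap of about $2(\sqrt2-1)/5$. The paper instead uses $(d+1)/d\le 5/4$ for $d\ge4$, so that its gap $(\sqrt2-1)/4$ exactly suffices.

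Second, the exact $b$-independence of the final hybrid is where the real idea lies, and your matching conditions do not supply it.
\begin{itemize}
\item \textbf{Channels.} The paper uses Schur-multiplier channels on $\mathbb{C}^4\otimes\mathbb{C}^l$ with diagonal Kraus operators, $\mathcal{E}_b(\ket{\alpha,u}\!\bra{\beta,v})=(C_b)_{\alpha\beta}\ket{\alpha,u}\!\bra{\beta,v}$. Here $C_Y$ and $C_N$ are rank-two Gram matrices of unit vectors in $\mathbb{C}^2$ with the same spectrum $\{3,1,0,0\}$, but with OEF values $5/8$ and $(3+2\sqrt2)/8$.
\item \textbf{Oracles.} The inner Haar unitary is replaced by a restricted path-recording oracle $\mathcal{P}^{\#}$ that forces the labels $u_i$ to be distinct, at an extra $O(t^2/d)$ cost. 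The outer one is replaced by the standard oracle.
\item \textbf{Collapse.} The dilation acts diagonally on $\ket{\alpha,u}$ and the $u_i$ are fresh. So matching the relation registers forces one permutation $\pi$ on $(\vec x,\vec\alpha,\vec u,\vec z)$, and the sum over $\vec\alpha$ is unconstrained. The only $b$-dependence is then $\sum_{\vec\alpha}\prod_i(C_b)_{\alpha_i\alpha_{\pi(i)}}=\prod_{c\in\mathrm{Cyc}(\pi)}\Tr(C_b^{\abs{c}})$, and isospectrality makes the hybrids exactly equal.
\end{itemize}
A $t$-query algorithm is sensitive to all cyclic moments $\Tr(C_b^q)$ with $q\le t$, not only to second-order data. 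Equal Gram matrices $\Tr(K_i^\dagger K_j)/d$ and unitality do hold for the paper's pair. They suffice there only because of the diagonal structure together with rank two and $\Tr C_b=4$. For general Kraus operators you give no argument that they control the path-recording expansion.

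Finally, your collision heuristic (amplitude $O(q/\sqrt d)$ per query, summed over $q$ queries) would only give $O(q^2/\sqrt d)$, hence $\Omega(d^{1/4})$. The $O(q^2/d)$ bound must come from the path-recording theorems themselves, applied so that the $b$-dependence disappears exactly.
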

Moreover, this lower bound gives not only for estimating $F_{\mathrm{OAGF}}$ but also difficulty for implementing the closest unitary channel from the given channel in terms of entanglement fidelity.
\begin{Cor}
    Considering the task of implementing quantum channel $\mathcal{V}$ such that
    \begin{equation}
        \norm{\mathcal{U}-\mathcal{V}}_\diamond\le\epsilon,\qquad \mathcal{U}\in\argmax_{\substack{\mathcal{U}(\cdot)=U(\cdot)U^\dagger\\U\in \mathbb{U}(d)}}F_{\mathrm{avg}}(\mathcal{E},\mathcal{U})
    \end{equation}
    given access to $\mathcal{E}$.
    If we choose dimension-independent constant $\epsilon<0.04$, then we require $\bigOmega{\sqrt{d}}$ uses of quantum channel $\mathcal{E}$ to achieve this task.
\end{Cor}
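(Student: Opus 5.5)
We prove the corollary by reduction to Theorem~\ref{Thm:lower bound}: an implementer of a channel $\mathcal{V}$ that is $\epsilon$-close in diamond norm to an optimal unitary $\mathcal{U}$ gives, with a few extra queries, an estimator of $F_{\mathrm{OAGF}}(\mathcal{E})$ to constant additive error below $\frac{\sqrt2-1}{10}$. If the implementer used $o(\sqrt d)$ channel uses, so would the estimator, contradicting the theorem. We work with the entanglement fidelity $F_{\mathrm e}$ and convert to OAGF with Lemma~\ref{lem:equivalent OEF OAGF}, which only shrinks additive errors by the factor $d/(d+1)$.

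\emph{Estimator from a near-optimal unitary.} Let $\mathcal{U}$ attain the maximum, so $\Tr[\mathsf{E}\mathsf{U}]=F_{\mathrm{OEF}}(\mathcal{E})$. Here $\mathsf{U}=\proj{\phi_U}$ with $\ket{\phi_U}=(I\otimes U)\ket{\Phi^+}$. Hence
\begin{equation}
F_{\mathrm{OEF}}(\mathcal{E})=\bra{\Phi^+}(I\otimes\mathcal{U}^\dagger\circ\mathcal{E})(\proj{\Phi^+})\ket{\Phi^+}.
\end{equation}
We do not have $\mathcal{U}^\dagger$, only a channel $\mathcal{V}\approx\mathcal{U}$. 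So we use a swap test between $(I\otimes\mathcal{E})(\Phi^+)$ and $(I\otimes\mathcal{V})(\Phi^+)$. Its acceptance probability is $\tfrac12(1+\Tr[\mathsf{E}\,\mathsf{V}])$, where $\mathsf{V}$ is the Choi state of $\mathcal{V}$. Since $\|\mathsf{V}-\mathsf{U}\|_1\le\|\mathcal{V}-\mathcal{U}\|_\diamond\le\epsilon$ and $\|\mathsf{E}\|_\infty\le1$,
\begin{equation}
\bigl|\Tr[\mathsf{E}\mathsf{V}]-F_{\mathrm{OEF}}(\mathcal{E})\bigr|\le\epsilon.
\end{equation}
We repeat the swap test $O(1)$ times, each repetition costing one use of $\mathcal{E}$ plus one run of the implementer. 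By Hoeffding this estimates $\Tr[\mathsf{E}\mathsf{V}]$ to additive error $\epsilon_s$ with probability at least $2/3$. If the implementer itself succeeds only with constant probability, we take a median over independent batches.

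\emph{Constants and conclusion.} The total error on OEF is at most $\epsilon+\epsilon_s$. With $\epsilon<0.04$ we pick a small constant $\epsilon_s$ so that $\epsilon+\epsilon_s<\frac{\sqrt2-1}{10}\approx0.0414$. Lemma~\ref{lem:equivalent OEF OAGF} then gives an OAGF estimate within the theorem's threshold, using $O(T)$ channel uses, where $T$ is the implementer's cost. Theorem~\ref{Thm:lower bound}, which holds even for Kraus rank two, forces $T=\Omega(\sqrt d)$.

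\emph{Main obstacle.} The delicate step is the error bookkeeping in the swap test. The Choi-state closeness must come from the diamond-norm guarantee, and the statistical error must fit inside the small gap $0.0414-\epsilon$. This leaves only a constant number of repetitions, with constants depending on $\epsilon$ but not on $d$.

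If the argmax is not unique, any maximizer works, since the bound uses only $\Tr[\mathsf{E}\mathsf{U}]=F_{\mathrm{OEF}}(\mathcal{E})$.
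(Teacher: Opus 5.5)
Your proposal is correct and follows essentially the same route as the paper. The diamond-norm guarantee gives $|\Tr[\mathsf{E}\mathsf{V}]-F_{\mathrm{OEF}}(\mathcal{E})|\le\epsilon$, a swap test between the Choi states with a dimension-independent number of repetitions estimates $\Tr[\mathsf{E}\mathsf{V}]$, and the slack $\frac{\sqrt2-1}{10}-0.04$ absorbs the statistical error, so Theorem~\ref{Thm:lower bound} forces $\Omega(\sqrt d)$ channel uses. Your per-repetition query count and your remark on non-unique maximizers only make explicit what the paper leaves implicit.
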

\begin{proof}
    If we can implement quantum channel $\mathcal{V}$, 
    \begin{align}
        \abs{\Tr(\mathsf{E}\mathsf{U})-\Tr(\mathsf{E}\mathsf{V})}
        &\le\norm{\mathsf{E}\mathsf{U}-\mathsf{E}\mathsf{V}}_1
        \\ &\le\norm{\mathsf{E}}_\infty\norm{\mathsf{U}-\mathsf{V}}_1
        \\ &\le\epsilon        
    \end{align}
By using SWAP test, we can estimate $\Tr(\mathsf{E}\mathsf{V})$ with additive error $\epsilon'$ with $\bigO{1/\epsilon'^2}$ queries, which does not depend on  dimension. Thus, if we convert $N$ queries of $\mathcal{E}$ to one query of $\mathcal{V}$, we can estimate $\max F_{\mathrm{avg}}(\mathcal{E},\mathcal{U})$ with $\epsilon+\epsilon'$ error.
If we take $\epsilon'=\frac{\sqrt{2}-1.4}{10}$ as dimension-independent constant, $\epsilon<0.04$ and 

\begin{equation}
    N\times\bigO{\frac{1}{\epsilon'^2}}=\bigOmega{\sqrt{d}},
\end{equation}
and we obtain $N=\bigOmega{\sqrt{d}}$.
\end{proof}
\subsection{Hard instance}
We prove the query lower bound in Theorem~\ref{Thm:lower bound} by reducing a channel-discrimination problem to OAGF estimation, defined below.
Let $d = 4l$, where $l$ is a positive integer, and identify the system with the basis $\{\ket{\alpha,u}:\alpha\in[4], u\in[l]\}$.
Let's consider channels
\begin{align}
    \mathcal{E}_b(\rho)&=K_1^{\,(b)}\rho K_1^{(b)\dagger}+K_2^{\,(b)}\rho K_2^{(b)\dagger},
\end{align}
here $b\in\{Y,N\}$ and
\begin{align}
    K_s^{\,(b)}&=\sum_\alpha^4\sum_u^l(\vec{v}_\alpha^{\,(b)})_s\proj{\alpha,u},\\
    \vec{v}_1^{\,(Y)}&=\vec{v}_2^{\,(Y)}=\vec{v}_3^{\,(Y)}=\begin{pmatrix} \frac{1}{\sqrt{2}} \\ \frac{1}{\sqrt{2}}  \end{pmatrix}, \quad\vec{v}_4^{\,(Y)}=\begin{pmatrix} \frac{1}{\sqrt{2}} \\ -\frac{1}{\sqrt{2}}  \end{pmatrix},
    \\
    \vec{v}_1^{\,(N)}&=\vec{v}_2^{\,(N)}=\begin{pmatrix} \frac{1}{\sqrt{2}} \\ \frac{1}{\sqrt{2}}  \end{pmatrix},\quad
    \vec{v}_3^{\,(N)}=\begin{pmatrix} \frac{1}{\sqrt{2}} \\ \frac{i}{\sqrt{2}}\end{pmatrix},\quad
    \vec{v}_4^{\,(N)}=\begin{pmatrix} \frac{1}{\sqrt{2}} \\ -\frac{i}{\sqrt{2}}\end{pmatrix},
\end{align}
Our aim is to discriminate $\mathcal{E}_Y$ and $\mathcal{E}_N$.
\begin{Prop}
    The channels $\mathcal{E}_b\,b\in\{Y,N\}$ satisfy the following properties. \\
    (i)
    \begin{equation}
    \sum_s K_s^{(b)\dagger}K_s^{\,(b)}=I_{4l}.        
    \end{equation}
    (ii) $\mathcal{E}_b$ are mixed unitary channels.
    \\
    (iii)
    \begin{equation}
        \mathcal{E}_b\left(\ket{\alpha,u}\!\bra{\beta,v}\right)=\left(C_b\right)_{\alpha\beta}\ket{\alpha,u}\!\bra{\beta,v},
    \end{equation}
    where $(C_b)_{\alpha\beta}:=\langle\vec{v}_\beta^{\,(b)},\vec{v}_\alpha^{\,(b)}\rangle$ is $4\times 4$ matrix.
    \\
    (iv) For $(C_b)_{\alpha\beta}$ defined above,
    \begin{equation}
        \text{(set of eigenvalue of $C_b$)}=\left\{3,1,0,0\right\}.
    \end{equation}
    (v)
    \begin{equation}
        F_{\mathrm{OEF}}(\mathcal{E}_Y)=\frac{5}{8}=0.625, \quad F_{\mathrm{OEF}}(\mathcal{E}_N)=\frac{3+2\sqrt{2}}{8}\simeq0.728\cdots.
    \end{equation}
    \begin{equation}
        \left|F_{\mathrm{OAGF}}(\mathcal{E}_Y)-F_{\mathrm{OAGF}}(\mathcal{E}_N)\right|\ge\frac{\sqrt{2}-1}{5}
    \end{equation}
  \label{prop:prop of channels}
\end{Prop}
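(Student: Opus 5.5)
The plan is to verify (i)--(iv) directly from the diagonal form of the Kraus operators, and then to obtain (v) from the variational formula of Lemma~\ref{lem:oef-variational}, rewritten as a maximization of a quadratic form of the Gram matrix $C_b$ over unimodular vectors.

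\textbf{Items (i)--(iv).} The starting observation is that $K_s^{(b)}=D_s^{(b)}\otimes I_l$ with $D_s^{(b)}=\mathrm{diag}\bigl((\vec v_\alpha^{(b)})_s\bigr)_{\alpha=1}^4$.
\begin{itemize}
\item (i) follows because $\sum_s|(\vec v_\alpha^{(b)})_s|^2=\|\vec v_\alpha^{(b)}\|^2=1$ for every $\alpha$.
\item (ii) follows because $K_1^{(b)}=I/\sqrt2$, while $\sqrt2\,K_2^{(b)}$ is unitary: it equals $\mathrm{diag}(1,1,1,-1)\otimes I_l$ for $b=Y$ and $\mathrm{diag}(1,1,i,-i)\otimes I_l$ for $b=N$. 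Hence $\mathcal E_b$ is the equal mixture of two unitary channels.
\item (iii) uses $K_s^{(b)}\ket{\alpha,u}=(\vec v_\alpha^{(b)})_s\ket{\alpha,u}$, which gives the coefficient $\sum_s(\vec v_\alpha)_s\overline{(\vec v_\beta)_s}=\langle\vec v_\beta,\vec v_\alpha\rangle$.
\item (iv) uses that $C_b$ is the Gram matrix of four vectors in $\mathbb C^2$. It therefore has rank at most $2$, and its nonzero eigenvalues coincide with those of the $2\times2$ frame operator $M_b=\sum_\alpha\vec v_\alpha\vec v_\alpha^\dagger$. A direct sum gives $M_Y=M_N=\begin{pmatrix}2&1\\1&2\end{pmatrix}$, whose eigenvalues are $3$ and $1$.
\end{itemize}

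\textbf{Reduction for (v).} By Lemma~\ref{lem:oef-variational}, $F_{\mathrm{OEF}}(\mathcal E_b)=d^{-2}\max_{\|c\|=1}\|c_1K_1+c_2K_2\|_1^2$. The operator $c_1K_1+c_2K_2$ is diagonal with entries $\langle\bar c,\vec v_\alpha\rangle$, each repeated $l$ times. Hence
\[
F_{\mathrm{OEF}}=\frac1{16}\max_{\|c\|=1}\Bigl(\sum_\alpha|\langle c,\vec v_\alpha\rangle|\Bigr)^2 .
\]
Next I write $\sum_\alpha|\langle c,\vec v_\alpha\rangle|=\max_{|u_\alpha|=1}|\langle c,\sum_\alpha u_\alpha\vec v_\alpha\rangle|$ and exchange the two maxima. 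This yields the clean formula
\[
F_{\mathrm{OEF}}(\mathcal E_b)=\frac1{16}\max_{|u_\alpha|=1}\Bigl\|\sum_\alpha u_\alpha\vec v_\alpha^{(b)}\Bigr\|^2=\frac1{16}\max_{|u_\alpha|=1}u^\dagger C_b u .
\]
This is the key reformulation; equivalently, it follows from $\max_W\Tr(\mathsf E\mathsf W)$ restricted to diagonal $W$.

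\textbf{Case $Y$.} Here $\|\sum_\alpha u_\alpha\vec v_\alpha\|^2=\tfrac12\bigl(|s+u_4|^2+|s-u_4|^2\bigr)=|s|^2+1$ with $s=u_1+u_2+u_3$. The maximum is $9+1=10$, so $F_{\mathrm{OEF}}(\mathcal E_Y)=5/8$.

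\textbf{Case $N$.} Here the quantity is $\tfrac12\bigl(|u_1+u_2+u_3+u_4|^2+|u_1+u_2+iu_3-iu_4|^2\bigr)$.
\begin{itemize}
\item Lower bound: the choice $u_1=u_2=1$, $u_3=e^{-i\pi/4}$, $u_4=e^{i\pi/4}$ gives the value $6+4\sqrt2$.
\item Upper bound: I use the global phase to fix $u_1+u_2=2\cos t\ge0$ with $t\in[0,\pi/2]$. The two terms then become $|2\cos t+u_3+u_4|^2$ and $|2\cos t+i(u_3-u_4)|^2$. I would maximize over $u_3,u_4$ by writing $u_3=e^{i\theta_3}$ and $u_4=e^{i\theta_4}$ and reducing to a trigonometric inequality, monotone in $\cos t$, whose maximum is $2(1+\sqrt2)^2=6+4\sqrt2$.
\end{itemize}
This gives $F_{\mathrm{OEF}}(\mathcal E_N)=(6+4\sqrt2)/16=(3+2\sqrt2)/8$.

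This upper-bound step for $\mathcal E_N$ is the main obstacle. If the direct trigonometric bound is awkward, I would instead use $u^\dagger Cu\le$ a bound obtained by splitting $C_N$ into two positive rank-one pieces, each bounded by Cauchy--Schwarz, and check that the bound is tight at the exhibited $u$.

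\textbf{OAGF gap.} Finally, Lemma~\ref{lem:equivalent OEF OAGF} gives
\[
|F_{\mathrm{OAGF}}(\mathcal E_Y)-F_{\mathrm{OAGF}}(\mathcal E_N)|=\frac d{d+1}\cdot\frac{2\sqrt2-2}{8}\ge\frac45\cdot\frac{\sqrt2-1}{4}=\frac{\sqrt2-1}{5},
\]
using $d\ge4$.
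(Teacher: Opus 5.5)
Your proposal is correct and follows essentially the same route as the paper: (i)--(iii) by direct computation, and (v) via the variational formula, linearizing $\sum_\alpha|\cdot|$ with unimodular phases, exchanging the maxima, and applying Cauchy--Schwarz in $c$. The differences are cosmetic: the paper groups $\vec v_1=\vec v_2$ into a weight $a_1=2$ rather than writing $\tfrac1{16}\max_{|u_\alpha|=1}u^\dagger C_b u$, and it asserts (iv) by direct calculation where your frame-operator argument is cleaner. The step you flag as the main obstacle closes immediately. After expanding, the $u_3$--$u_4$ interference cancels because $|u_3+u_4|^2+|u_3-u_4|^2=4$, which is the same cancellation the paper encodes in its condition $1+\bar\beta\gamma=0$. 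Maximizing the remaining linear phase terms then gives $(2\cos t+\sqrt2)^2\le(2+\sqrt2)^2=6+4\sqrt2$.
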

\begin{proof}
    (i) Since $\{\vec{v}^{\,(b)}_\alpha\}_{b,\alpha}$ are unit vectors,
    \begin{align}
        \sum_s K_s^{(b)\dagger}K_s^{\,(b)}
        &=\sum_{\alpha,u}\left(\sum_s\left|(\vec{v}_\alpha^{\,(b)})_s\right|^2\right)\proj{\alpha,u}\nonumber
        \\ &=\sum_{\alpha,u}\proj{\alpha,u}\nonumber
        \\ &=I_{4l}.
    \end{align}
    (ii)
    \begin{align}
        K_1^{(Y)}&=\sum_u^l\frac{1}{\sqrt{2}}\left(\proj{1,u}+\proj{2,u}+\proj{3,u}+\proj{4,u}\right)=\frac{1}{\sqrt{2}}I_{4l},
        \\K_1^{(N)}&=\sum_u^l\frac{1}{\sqrt{2}}\left(\proj{1,u}+\proj{2,u}+\proj{3,u}+\proj{4,u}\right)=\frac{1}{\sqrt{2}}I_{4l}.
    \end{align}
    Thus, we obtain
    \begin{align}
        &K^{(Y)\dagger}_1 K^{(Y)}_1=K^{(Y)}_1K^{(Y)\dagger}_1=\frac{1}{2}I_{4l},
        \\ &K^{(Y)\dagger}_2 K^{(Y)}_2=K^{(Y)}_2K^{(Y)\dagger}_2=I_{4l}-\frac{1}{2}I_{4l}=\frac{1}{2}I_{4l},
        \\ &K^{(N)\dagger}_1 K^{(N)}_1=K^{(N)}_1K^{(N)\dagger}_1=\frac{1}{2}I_{4l},
        \\ &K^{(N)\dagger}_2 K^{(N)}_2=K^{(N)}_2K^{(N)\dagger}_2=I_{4l}-\frac{1}{2}I_{4l}=\frac{1}{2}I_{4l}.
    \end{align}
    Then, these are mixed unitary channels.
    \\
    (iii)
    \begin{align}
        \mathcal{E}_b\left(\ket{\alpha,u}\!\bra{\beta,v}\right)
        &=\sum_s\left(\vec{v}_\alpha^{\,(b)}\right)_s\left(\vec{v}_\beta^{\,(b)}\right)^*_s\ket{\alpha,u}\!\bra{\beta,v}\nonumber
        \\ &=\left(C_b\right)_{\alpha\beta}\ket{\alpha,u}\!\bra{\beta,v}.
    \end{align}
    (iv) Matrix representations of $C_b$ are
    \begin{align}
        C_Y&=
        \begin{pmatrix}
            1 & 1 & 1 & 0 \\
            1 & 1 & 1 & 0 \\
            1 & 1 & 1 & 0 \\
            0 & 0 & 0 & 1
        \end{pmatrix},\\
        C_N&=
        \begin{pmatrix}
            1 & 1 & (1-i)/2 & (1+i)/2 \\
            1 & 1 & (1-i)/2 & (1+i)/2 \\
            (1+i)/2 & (1+i)/2 & 1 & 0 \\
            (1-i)/2 & (1-i)/2 & 0 & 1
        \end{pmatrix}.
    \end{align}
    By simple calculation, we find eigenvalues are $\{3,1,0,0\}$.
    \\
    (v)
    \begin{align}
        F_{\mathrm{OEF}}(\mathcal{E}_Y)
        &=\frac{1}{d^2}\max_{\norm{c}=1}\norm{\sum_sc_sK_s^{\,(Y)}}^2_1\nonumber
        \\ &=\frac{1}{d^2}\max_{\norm{c}=1}\Biggl\|\sum_u^l\Biggl[\left(\frac{1}{\sqrt{2}}c_1+\frac{1}{\sqrt{2}}c_2\right)\left(\proj{1,u}+\proj{2,u}+\proj{3,u}\right)\nonumber
        \\ &\qquad\qquad\qquad\qquad\qquad\qquad\qquad\qquad\qquad\qquad +\left(\frac{1}{\sqrt{2}}c_1-\frac{1}{\sqrt{2}}c_2\right)\proj{4,u}\Biggr]\Biggr\|_1^2
        \\ &=\left\{\max_{\norm{c}=1}\frac{1}{4}\left(3\abs{\frac{1}{\sqrt{2}}c_1+\frac{1}{\sqrt{2}}c_2}+\abs{\frac{1}{\sqrt{2}}c_1-\frac{1}{\sqrt{2}}c_2}\right)\right\}^2
        \\ &=\left\{\max_{\norm{c}=1}\frac{1}{4\sqrt{2}}\left(3\abs{c_1+c_2}+\abs{c_1-c_2}\right)\right\}^2.
    \end{align}
    \begin{align}
        F_{\mathrm{OEF}}(\mathcal{E}_N)
        &=\frac{1}{d^2}\max_{\norm{c}=1}\norm{\sum_sc_sK_s^{\,(N)}}^2_1\nonumber\\
        &=\frac{1}{d^2}\max_{\norm{c}=1}
        \Biggl\|\sum_u^l\Bigl[
        \left(\frac{1}{\sqrt{2}}c_1+\frac{1}{\sqrt{2}}c_2\right)
        \left(\proj{1,u}+\proj{2,u}\right)\nonumber\\
        &\qquad\qquad\qquad\qquad\qquad\qquad\qquad\qquad
        +\left(\frac{1}{\sqrt{2}}c_1+\frac{i}{\sqrt{2}}c_2\right)\proj{3,u}+\left(\frac{1}{\sqrt{2}}c_1-\frac{i}{\sqrt{2}}c_2\right)\proj{4,u}\Bigr]\Biggr\|_1^2
        \\ &=\left\{\max_{\norm{c}=1}\frac{1}{4}\left(2\abs{\frac{1}{\sqrt{2}}c_1+\frac{1}{\sqrt{2}}c_2}+\abs{\frac{1}{\sqrt{2}}c_1+\frac{i}{\sqrt{2}}c_2}+\abs{\frac{1}{\sqrt{2}}c_1-\frac{i}{\sqrt{2}}c_2}\right)\right\}^2
        \\ &=\left\{\max_{\norm{c}=1}\frac{1}{4\sqrt{2}}\left(2\abs{c_1+c_2}+\abs{c_1+ic_2}+\abs{c_1-ic_2}\right)\right\}^2.
    \end{align}
    Here, let $\alpha,\beta,\gamma\in\mathbb{C}$ with $1+\Bar{\beta}\gamma=0$ and $a_1,a_2,a_3\ge0$, we consider maximize-problem
    $M=\max_{\abs{c_1}^2+\abs{c_2}^2=1}a_1\abs{c_1+\alpha c_2}+a_2\abs{c_1+\beta c_2}+a_3\abs{c_1+\gamma c_2}$.
    For any $z_1,z_2,z_3\in \mathbb{C}$, we find that
    \begin{equation}
        \abs{z_1}+\abs{z_2}+\abs{z_3}=\max_{\abs{u_1}=\abs{u_2}=\abs{u_3}=1}\abs{u_1z_1+u_2z_2+u_3z_3}.
    \end{equation}
    Then, 
    \begin{align}
        M
        &=\max_{\substack{\abs{u_1}=\abs{u_2}=\abs{u_3}=1\\ \abs{c_1}^2+\abs{c_2}^2=1}}\abs{a_1u_1(c_1+\alpha c_2)+a_2u_2(c_1+\beta c_2)+a_3u_3(c_1+\gamma c_2)}
        \\ &=\max_{\substack{\abs{u_1}=\abs{u_2}=\abs{u_3}=1\\ \abs{c_1}^2+\abs{c_2}^2=1}}\abs{(a_1u_1+a_2u_2+a_3u_3)c_1+(a_1\alpha u_1+a_2\beta u_2+a_3\gamma u_3)c_2}
        \\ &=\max_{\substack{\abs{u_1}=\abs{u_2}=\abs{u_3}=1}}\sqrt{\abs{a_1u_1+a_2u_2+a_3u_3}^2+\abs{a_1\alpha u_1+a_2\beta u_2+a_3\gamma u_3}^2}\qquad(\text{Cauchy-Schwarz inequality}).
    \end{align}
    Only relative phases between $u_1$ and $u_2,u_3$ contribute to this quantity, so let $u_1=1$, $u_2=e^{i\theta}$ and $u_3=e^{i\phi}$.
    \begin{align}
        M^2
        &=\max_{\theta,\phi}\left(\abs{a_1+a_2e^{i\theta}+a_3e^{i\phi}}^2+\abs{a_1\alpha+a_2\beta e^{i\theta}+a_3\gamma e^{i\phi}}^2\right)
        \\ &=\max_{\theta,\phi}\left\{a_1^2\left(1+\abs{\alpha}^2\right)+a_2^2\left(1+\abs{\beta}^2\right)+a_3^2\left(1+\abs{\gamma}^2\right)+2a_1a_2\mathrm{Re}\left[(1+\Bar{\alpha}\beta)e^{i\theta}\right]+2a_1a_3\mathrm{Re}\left[(1+\Bar{\alpha}\gamma)e^{i\phi}\right]\right\}
        \\ &=a_1^2\left(1+\abs{\alpha}^2\right)+a_2^2\left(1+\abs{\beta}^2\right)+a_3^2\left(1+\abs{\gamma}^2\right)+2a_1a_2\abs{1+\Bar{\alpha}\beta}+2a_1a_3\abs{1+\Bar{\alpha}\gamma}.
    \end{align}
    Therefore,
    \begin{equation}
        F_{\mathrm{OEF}}(\mathcal{E}_Y)=\frac{1}{32}\left(9\left(1+\abs{1}^2\right)+1+\abs{-1}^2\right)=\frac{5}{8},
    \end{equation}
    \begin{equation}
        F_{\mathrm{OEF}}(\mathcal{E}_N)=\frac{1}{32}\left(4\left(1+\abs{1}^2\right)+1+\abs{i}^2+1+\abs{-i}^2+4\abs{1+i}+4\abs{1-i}\right)=\frac{3+2\sqrt{2}}{8}.
    \end{equation}
    Since, $\left|F_{\mathrm{OAGF}}(\mathcal{E}_Y)-F_{\mathrm{OAGF}}(\mathcal{E}_N)\right|=\frac{d}{d+1}\left|F_{\mathrm{OEF}}(\mathcal{E}_Y)-F_{\mathrm{OEF}}(\mathcal{E}_N)\right|$ and $\frac{d}{d+1}\ge\frac{4}{5}$ for $d\ge 4$,
    \begin{equation}
        \left|F_{\mathrm{OAGF}}(\mathcal{E}_Y)-F_{\mathrm{OAGF}}(\mathcal{E}_N)\right|\ge\frac{4}{5}\left|F_{\mathrm{OEF}}(\mathcal{E}_Y)-F_{\mathrm{OEF}}(\mathcal{E}_N)\right|=\frac{\sqrt{2}-1}{5}
    \end{equation}
\end{proof}
\begin{Lem}
    For any $U,V\in \mathbb{U}(d)$ and quantum channel $\mathcal{E}$,
    \begin{equation}
        F_{\mathrm{OEF}}(\mathcal{U}\circ\mathcal{E}\circ\mathcal{V})=F_{\mathrm{OEF}}(\mathcal{E}),
    \end{equation}
    where $\mathcal{U}(\cdot)=U(\cdot) U^\dagger$ and $\mathcal{V}(\cdot)=V(\cdot) V^\dagger$.
    \label{lem:unitary equivalent}
\end{Lem}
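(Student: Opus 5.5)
The plan is to work with the Kraus form of the entanglement fidelity already used in the proof of Lemma~\ref{lem:oef-variational}, namely
\begin{equation}
    F_{\mathrm{OEF}}(\mathcal{E})=\frac{1}{d^2}\max_{W\in\mathbb{U}(d)}\sum_i\left|\Tr(W^\dagger K_i)\right|^2,
\end{equation}
and to observe that composing with unitaries only reparametrizes the maximization over $W$.

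First, if $\{K_i\}$ is a Kraus decomposition of $\mathcal{E}$, then $\{UK_iV\}$ is a Kraus decomposition of $\mathcal{U}\circ\mathcal{E}\circ\mathcal{V}$. This holds because
\begin{equation}
    U\Bigl(\sum_i K_i V\rho V^\dagger K_i^\dagger\Bigr)U^\dagger=\sum_i (UK_iV)\rho(UK_iV)^\dagger .
\end{equation}
The entanglement-fidelity formula $\frac{1}{d^2}\sum_i|\Tr(W^\dagger K_i)|^2$ does not depend on which Kraus decomposition is chosen, since it equals $\Tr[\mathsf{E}\mathsf{W}]$. So we may use this decomposition directly.

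Next, by cyclicity of the trace,
\begin{equation}
    \Tr\bigl(W^\dagger UK_iV\bigr)=\Tr\bigl((U^\dagger W V^\dagger)^\dagger K_i\bigr).
\end{equation}
The map $W\mapsto U^\dagger W V^\dagger$ is a bijection of $\mathbb{U}(d)$ onto itself. Hence the maximum over $W$ of $\sum_i|\Tr(W^\dagger UK_iV)|^2$ equals the maximum over $W'$ of $\sum_i|\Tr(W'^\dagger K_i)|^2$, and dividing by $d^2$ gives $F_{\mathrm{OEF}}(\mathcal{U}\circ\mathcal{E}\circ\mathcal{V})=F_{\mathrm{OEF}}(\mathcal{E})$.

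An alternative route goes through the Choi states and the fact that $\Tr[\mathsf{E}\mathsf{W}]$ is the fidelity between $(\mathrm{id}\otimes\mathcal{E})(\Phi^+)$ and $(\mathrm{id}\otimes\mathcal{W})(\Phi^+)$. The input unitary $V$ is moved onto the reference system via $(V\otimes I)\ket{\Phi^+}=(I\otimes V^T)\ket{\Phi^+}$. This is more cumbersome than the Kraus route, so I will use the Kraus argument.

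No step is a real obstacle. The only point requiring care is the justification that the Kraus-sum expression is decomposition-independent. That follows from writing $\Tr[\mathsf{E}\mathsf{W}]=\frac{1}{d^2}\sum_i|\Tr(W^\dagger K_i)|^2$ directly from the Choi state, as in the proof of Lemma~\ref{lem:oef-variational}.
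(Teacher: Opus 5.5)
Your proposal is correct and follows the paper's proof essentially step for step. Like the paper, you take the Kraus operators $UK_sV$, use cyclicity to rewrite $\Tr(W^\dagger UK_sV)=\Tr\bigl((U^\dagger WV^\dagger)^\dagger K_s\bigr)$, and reparametrize the maximization via the bijection $W\mapsto U^\dagger WV^\dagger$ of $\mathbb{U}(d)$; you also spell out the Kraus-decomposition independence, which the paper leaves implicit.
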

\begin{proof}[Proof of Lemma~\ref{lem:unitary equivalent}]
    \begin{align}
        F_{\mathrm{OEF}}(\mathcal{U}\circ\mathcal{E}\circ\mathcal{V})
        &=\frac{1}{d^2}\max_{W\in \mathbb{U}(d)}\sum_s\left|\Tr\left[W^\dagger UK_sV\right]\right|^2\nonumber
        \\ &=\frac{1}{d^2}\max_{W\in \mathbb{U}(d)}\sum_s\left|\Tr\left[(U^\dagger WV^\dagger)^\dagger K_s\right]\right|^2\nonumber
        \\ &=\frac{1}{d^2}\max_{W'\in \mathbb{U}(d)}\sum_s\left|\Tr\left[W'^\dagger K_s\right]\right|^2\nonumber
        \\ &=F_{\mathrm{OEF}}(\mathcal{E}).
    \end{align}
\end{proof}
The rough idea of the proof is following: Assume that the difference between $F_{\mathrm{OEF}}(\mathcal{E}_Y)$ and $F_{\mathrm{OEF}}(\mathcal{E}_N)$ is $\epsilon_{\mathrm{OEF}}$. 
If the algorithm estimate $F_{\mathrm{OEF}}(\mathcal{E})$ with additive error strictly smaller than $\epsilon_{\mathrm{OEF}}/2$ with $t$ queries of the channel, it can be discriminate whether $\mathcal{E}_Y$ or $\mathcal{E}_N$ with $t$ queries. 
From Lemma~\ref{lem:unitary equivalent}, this algorithm can distinguish $\mathcal{U}\circ\mathcal{E}_Y\circ\mathcal{V}$ and $\mathcal{U'}\circ\mathcal{E}_N\circ\mathcal{V'}$ for any $\mathcal{U},\mathcal{V},\mathcal{U'},\mathcal{V'}$. This means that output states $\rho_{\mathcal{U}\circ\mathcal{E}_Y\circ\mathcal{V}}$ and $\rho_{\mathcal{U'}\circ\mathcal{E}_N\circ\mathcal{V'}}$, that are performed $t$ queries of channels, should be distinguishable for any $\mathcal{U},\mathcal{V},\mathcal{U'},\mathcal{V'}$, also $\mathbb{E}_{\substack{U,V\sim \mathbb{U}(d)}}[\rho_{\mathcal{U}\circ\mathcal{E}_Y\circ\mathcal{V}}]$ and 
$\mathbb{E}_{\substack{U,V\sim \mathbb{U}(d)}}[\rho_{\mathcal{U}\circ\mathcal{E}_N\circ\mathcal{V}}]$ which corresponding an average case.
We show that this discrimination task requires $\bigOmega{\sqrt{d}}$ uses of channels.
\subsection{Review of path-recording oracle methods}
\begin{Def}
    Relation with length $t$ $R_t\,(t\in[d])$ is defined as a multiset 
\begin{equation}
    R_t:=\{(x_1,y_1),(x_2,y_2),\cdots,(x_t,y_t)\},\quad (x_i,y_i)\in[d]\times [d].
\end{equation}
We define $R_0$ as empty set and a domain and image of $R_t$ as 
\begin{equation}
    \mathrm{Dom}(R_t)=\{x:x\in[d], \exists y \,\text{s.t.}(x,y)\in R_t\},\quad
    \mathrm{Im}(R_t)=\{y:y\in[d], \exists x \,\text{s.t.}(x,y)\in R_t\}.
\end{equation}
Also we allocate $R_t$ to orthonormal basis $\ket{R_t}$.
\end{Def}
\begin{Def}
    Let $R^{\,\mathrm{inj}}_t$ be the set of all injective relations, i.e., relations $R_t= \{(x_1,y_1),...,(x_t,y_t)\}$ of size $t$,
where $y_i\neq y_j\,(i\neq j)$. Let $R_0^{\,\mathrm{inj}}:=\{\varnothing\}$ and $R^{\,\mathrm{inj}}:=\cup_{t=0}^d R^{\,\mathrm{inj}}_t$.
\end{Def}
\begin{Def}
    Path-recording oracle $P$ is a linear map $\mathcal{H}_S\otimes \mathcal{H}_R\mapsto\mathcal{H}_S\otimes \mathcal{H}_R$ such that for all $x\in [d]$ and $R_t\in R^{\,\mathrm{inj}}\,(t<d)$,
    \begin{equation}
        P : \ket{x}_S\ket{R_t}_R\mapsto\frac{1}{\sqrt{d-t}}\sum_{y\in[d], y\notin \mathrm{Im(R_t)}}\ket{y}_S\ket{R_t\cup\{(x,y)\}}_R.
    \end{equation}
\end{Def}
\begin{Def}[Consistency]
    We say the set $\mathcal{S}^{\,\mathrm{inj}}\subseteq R^{\,\mathrm{inj}}$ of relations is consistent if
    \begin{equation}
        \forall(x_1,\cdots,x_t)\in[d]^t, \exists(y_1,\cdots,y_t)\in[d]^t,\, \text{\,such that } 
        \{(x_i,y_i)\}_{i=1}^t\in\mathcal{S}^{\,\mathrm{inj}}.
    \end{equation}
    Furthermore, if $\{(x_i,y_i)\}_{i=1}^t\in\mathcal{S}^{\,\mathrm{inj}}$, then for any $0\le\tau\le t$, $\{(x_i,y_i)\}_{i=1}^\tau\in\mathcal{S}^{\,\mathrm{inj}}$.
\end{Def}
\begin{Def}[Uniform growth]
    We say the set $\mathcal{S}^{\,\mathrm{inj}}$ of relations satisfies the uniform growth constraint if for all $0\le t<t_{\mathrm{max}}$, there exists $\mathcal{Z}_t\ge 1$ , such that for all $x\in[d]$ and $R\in\mathcal{S}_t^{\,\mathrm{inj}}$,
    \begin{equation}
        \mathcal{Z}_t=\sum_{\substack{y\in[d],s.t.\\ R\cup\{(x,y)\}\in\mathcal{S}_{t+1}^{\,\mathrm{inj}}}}1.
    \end{equation}
\end{Def}
\begin{Def}
Given any consistent set $\mathcal{S}^{\,\mathrm{inj}}$ of relations. The
$\mathcal{S}^{\,\mathrm{inj}}$-restricted path-recording oracle $V(\mathcal{S}^{\,\mathrm{inj}})$ is a linear map
\begin{equation}
    V(\mathcal{S}^{\,\mathrm{inj}}):\mathcal{H}_A\otimes\mathcal{H}_R\mapsto\mathcal{H}_A\otimes\mathcal{H}_R
\end{equation}
defined as follows. For all $0\le t < t_{\mathrm{max}}$, $R \in S^{\,\mathrm{inj}}$
, and $x\in[d]$,
\begin{equation}
    V(\mathcal{S}^{\,\mathrm{inj}}) : \ket{x}_A\ket{R}_R\mapsto\frac{1}{\sqrt{\mathcal{Z}_{x,R}}}\sum_{\substack{y\in[d]\\R\cup\{(x,y)\}\in \mathcal{S}_{t+1}^{\,\mathrm{inj}}}}\ket{y}_A\ket{R\cup\{(x,y)\}}_R,
\end{equation}
The normalization factor $\mathcal{Z}_{x,R}$ is given by
\begin{equation}
    \mathcal{Z}_{x,R}:=\sum_{\substack{y\in[d]\\R\cup\{(x,y)\}\in\mathcal{S}_{t+1}^{\,\mathrm{inj}}}}1.
\end{equation}
\end{Def}
\begin{Def}[fresh-$\vec{u}$ restriction]
    We define the set of fresh-$\vec{u}$ restricted relation $S^\#$ as
    \begin{equation}
        S^\#_t=\{(x_i,(\alpha_i,u_i)):u_i\neq u_j\text{ for } i\neq j\}_{i=1}^t.
    \end{equation}
    This set of relations satisfies the consistency and uniform growth condition and we denote path-recording oracle of this as $P^\#$.
\end{Def}
\begin{Def}[Oracle adversary with $t$ forward queries]
    \begin{equation}
        \ket{\mathcal{A}_t^\mathcal{O}}_{SS'}=\prod_{i=1}^t\left(\mathcal{O}_SA_{i,SS'}\right)\ket{0}_{SS'}
    \end{equation}
    \begin{equation}
        \ket{\mathcal{A}_t^\mathcal{P}}_{SS'R}=\prod_{i=1}^t\left(P_{SR}A_{i,SS'}\right)\ket{0}_{SS'R}
    \end{equation}
\end{Def}
\begin{Lem}[\cite{Ma2024}, Theorem 5]
    \begin{equation}
        \norm{\mathbb{E}_\mathcal{O}\sim\mu_{\mathrm{Haar}}\proj{\mathcal{A}_t^\mathcal{O}}_{SS'}-\Tr_R\left(\proj{\mathcal{A}_t^\mathcal{P}}\right)}_1\le\frac{2t(t-1)}{d+1}.
    \end{equation}
    \label{Lem:Haar P}
\end{Lem}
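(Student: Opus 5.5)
The plan is to follow the purification strategy of \cite{Ma2024}. First rewrite the Haar-averaged adversary state as the reduced state of a purification whose purifying register stores the queried input--output pairs. Then show that this purification agrees with $\ket{\mathcal{A}_t^{\mathcal{P}}}$ except on a collision event of weight $O(t^2/d)$.

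\textbf{Step 1: reduce to $U^{\otimes t}\otimes\bar U^{\otimes t}$.} Expand $\ket{\mathcal{A}_t^{\mathcal{O}}}=\sum_{\vec x,\vec y}\bigl(\prod_i\bra{y_i}\mathcal{O}\ket{x_i}\bigr)\ket{\phi_{\vec x,\vec y}}$. Here $\ket{\phi_{\vec x,\vec y}}$ is the (unnormalized) branch in which the $i$-th query maps $\ket{x_i}$ to $\ket{y_i}$; it is built only from the $A_i$ and is independent of $\mathcal{O}$. Hence $\mathbb{E}_{\mathcal{O}}\proj{\mathcal{A}_t^{\mathcal{O}}}$ depends on $\mathcal{O}$ only through the $t$-th moment operator $M_t:=\mathbb{E}\,\mathcal{O}^{\otimes t}\otimes\bar{\mathcal{O}}^{\otimes t}$.

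\textbf{Step 2: purify with a relation register.} Use right-invariance of the Haar measure, $\mathcal{O}\mapsto \mathcal{O}W$, with $W=D\Pi$ a uniformly random permutation times random diagonal phases. Averaging over $W$ first produces a purification in which the register $R$ holds the multiset $\{(x_i,y_i)\}$. This is exactly the unrestricted analogue of the oracle $P$. I will then decompose $M_t$ via Weingarten calculus, or equivalently via the symmetric-group structure of $\mathrm{span}\{\ket{\vec y}\}$ restricted to distinct outputs. On the subspace where all $y_i$ are pairwise distinct, the averaged operator equals $\sum_{R\in R_t^{\mathrm{inj}}}$ of the normalized path-recording amplitudes $\prod_{s<t}(d-s)^{-1/2}$. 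This yields $\Tr_R\proj{\mathcal{A}_t^{\mathcal{P}}}$ up to a normalization mismatch between $\tfrac{(d-t)!}{d!}$ and the exact Weingarten weights.

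\textbf{Step 3: bound the discrepancy.} I expect two error sources.
\begin{itemize}
\item The collision component, where some $y_i=y_j$. Its weight is at most $\sum_{i<j}\Pr[y_i=y_j]\le \binom{t}{2}\cdot\frac{2}{d+1}$, using second-moment bounds of Haar unitaries: the probability that two fresh outputs collide is $\le 2/(d+1)$.
\item The deviation of the Weingarten coefficients on the injective sector from the product of $1/(d-s)$.
\end{itemize}
I would combine these through a hybrid argument over queries. At query $i$, replace the true Haar-consistent isometry by $P$, and charge an error of at most $4(i-1)/(d+1)$ in trace norm, since the new output must avoid $i-1$ recorded images. Summing over $i$ gives $\sum_i 4(i-1)/(d+1)=2t(t-1)/(d+1)$.

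\textbf{Main obstacle.} The hard part is making the hybrid step rigorous, because $P$ is only a partial isometry on $R^{\mathrm{inj}}$. Concretely, one must show that the intermediate purified states for the Haar oracle and for $P$ differ, per step, only by the norm of the component that $P$ annihilates or misweights. My first attempt is to prove that $P$ restricted to $\mathrm{span}\{\ket{x}\ket{R}:R\in R^{\mathrm{inj}}_{i-1}\}$ is an isometry up to a defect operator with norm squared $\le 2(i-1)/(d+1)$. I would get this by computing $P^\dagger P$ directly from the definition. I would then apply the triangle inequality together with contractivity of $\Tr_R$.
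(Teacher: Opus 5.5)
The paper gives no proof of this lemma; it quotes it from Ma--Huang. Measured against that argument, your plan has genuine gaps in both Step~2 and Step~3.

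\textbf{What fails.} In Step~2 you put the permutation--phase twirl $W=D\Pi$ on the input side ($\mathcal{O}\mapsto\mathcal{O}W$). This does not produce the path-recording structure. An adaptive adversary may query the same basis state $\ket{x}$ twice, and then both $W$-outputs equal $\ket{\pi(x)}$, so nothing enforces distinct images. The Haar factor $\mathcal{O}$ is also left unpurified.

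You then assert that on the distinct-output sector the Haar moment operator reproduces the path-recording amplitudes up to normalization. This is false already at $t=2$. For distinct $\vec x,\vec y$, the coefficient of $\ket{\phi_{\vec x,\vec y}}\!\bra{\phi_{\tau\vec x,\sigma\vec y}}$ in the Haar average is $\mathrm{Wg}(e,d)=1/(d^2-1)$ if $\sigma=\tau$ and $\mathrm{Wg}((12),d)=-1/(d(d^2-1))$ if $\sigma\neq\tau$. In $\Tr_R$ of the path-recording state the corresponding coefficients are $1/(d(d-1))$ and $0$. The off-diagonal Weingarten terms are not a rescaling, and bounding their trace-norm contribution against an adaptive adversary is exactly what the path-recording method is built to avoid.

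Step~3 does not repair this, for four reasons.
\begin{itemize}
\item The ``Haar-consistent isometry'' is never defined on a register shared with $P$. A single Haar $\mathcal{O}$ is reused across all queries, so query-by-query replacement has no meaning without a common purification.
\item The per-step cost $4(i-1)/(d+1)$ is chosen so that it sums to the target, not derived.
\item The mechanism you propose for that cost does not exist. On $\mathrm{span}\{\ket{x}\ket{R}:R\in R^{\,\mathrm{inj}}_{i-1}\}$, the pair of $R\cup\{(x,y)\}$ containing a fresh $y\notin\mathrm{Im}(R)$ is unique, so $P^\dagger P=I$ exactly, with no defect.
\item A hybrid over pure states would naturally cost $\bigO{\sqrt{i/d}}$ per step, not $\bigO{i/d}$. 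Getting an error linear in the collision weight requires a structural reason that your plan does not supply.
\end{itemize}

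\textbf{The missing route (Ma--Huang's).}
\begin{enumerate}
\item By left invariance, the Haar oracle has the same law as $\Sigma_\pi F C$. Here $C$ is Haar and acts \emph{first}, followed by a random diagonal phase $F$ and a random permutation matrix $\Sigma_\pi$.
\item Purify the $(\pi,F)$-average. Consider the component in which the inputs fed to $\Sigma_\pi F$ are pairwise distinct. There the phases force the ket and bra input sets to coincide, and $\pi$ sends distinct inputs to uniformly random distinct outputs with weight $1/(d)_t$. So this component agrees, up to an isometry on the purifying register, with the corresponding component of $\ket{\mathcal{A}^{PC}_t}$.
\item The collision component is orthogonal on a register that is traced out, so cross terms vanish. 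Both states therefore carry the same collision weight $\epsilon_C$, and their reduced states differ by at most $2\epsilon_C$ in trace norm. This is what makes the bound linear in $t^2/d$.
\item $P$ is exactly right-unitarily invariant: $\ket{\mathcal{A}^{PC}_t}=\Gamma_C\ket{\mathcal{A}^{P}_t}$, where $\Gamma_C$ applies $C$ to every recorded input slot of $R$. Hence $\Tr_R\ket{\mathcal{A}^{PC}_t}\!\bra{\mathcal{A}^{PC}_t}=\Tr_R\ket{\mathcal{A}^{P}_t}\!\bra{\mathcal{A}^{P}_t}$ for every unitary $C$.
\item In the symmetrized embedding of $R$, dominate the collision projector by $\sum_{i<j}\Pi_{ij}$, where $\Pi_{ij}$ projects onto equal $i$-th and $j$-th recorded inputs. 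Using $\mathbb{E}_C(C^\dagger)^{\otimes 2}\Pi_{ij}C^{\otimes 2}=(I+\mathrm{SWAP})/(d+1)$, one gets $\mathbb{E}_C\epsilon_C\le\binom{t}{2}\frac{2}{d+1}$.
\end{enumerate}
Together these give the bound $2t(t-1)/(d+1)$.

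Your factor $2/(d+1)$ per pair is the right number. However, it is the weight of two \emph{inputs} to the permutation--phase layer colliding after the Haar twirl, not an output collision of the Haar oracle.
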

\begin{Lem}[\cite{Ma2024}~Theorem 9]
    \begin{equation}
        \norm{\mathbb{E}_\mathcal{O}\sim\mu_{\mathrm{Haar}}\proj{\mathcal{A}_t^\mathcal{O}}_{SS'}-\Tr_R\left(\proj{\mathcal{A}_t^{\mathcal{P}^\#}}\right)}_1
        \le\frac{2t(t-1)}{d+1}+2\left(1-\prod_i^{t-1}\frac{d-4i}{d-i}\right).
    \end{equation}
    Especially, if $t<d/4$, right hand side is bounded by $\frac{6t(t-1)}{d}$.
    \label{Lem:Haar P hash}
\end{Lem}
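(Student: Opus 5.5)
The plan is to compare the fresh-$\vec{u}$ restricted path-recording oracle $P^\#$ with the unrestricted oracle $P$, and then apply Lemma~\ref{Lem:Haar P}. By the triangle inequality,
\begin{equation}
    \norm{\mathbb{E}_{\mathcal{O}}\proj{\mathcal{A}_t^\mathcal{O}}-\Tr_R\proj{\mathcal{A}_t^{\mathcal{P}^\#}}}_1
    \le\norm{\mathbb{E}_{\mathcal{O}}\proj{\mathcal{A}_t^\mathcal{O}}-\Tr_R\proj{\mathcal{A}_t^{\mathcal{P}}}}_1
    +\norm{\Tr_R\proj{\mathcal{A}_t^{\mathcal{P}}}-\Tr_R\proj{\mathcal{A}_t^{\mathcal{P}^\#}}}_1 .
\end{equation}
Lemma~\ref{Lem:Haar P} bounds the first term by $2t(t-1)/(d+1)$. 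The remaining work is to show that the second term is at most $2(1-c_t^2)$, where $c_t^2:=\prod_{i=1}^{t-1}\frac{d-4i}{d-i}$.

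\textbf{Step 1 (projected recursion).} Let $\Pi_\#$ be the projector onto $\operatorname{span}\{\ket{R}:R\in S^\#\}$ in the relation register. It is diagonal in the relation basis and acts only on $R$, so it commutes with every adversary unitary $A_{i,SS'}$.

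I will prove by induction on $t$ that
\begin{equation}
    \Pi_\#\ket{\mathcal{A}_t^{\mathcal{P}}}=c_t\ket{\mathcal{A}_t^{\mathcal{P}^\#}}.
\end{equation}
The inductive step has two parts.
\begin{itemize}
    \item A relation outside $S^\#$ stays outside $S^\#$ after adding a pair, so $\Pi_\# P(I-\Pi_\#)=0$.
    \item On a fixed $R\in S^\#_t$, the map $P$ spreads amplitude $1/\sqrt{d-t}$ uniformly over $d-t$ values of $y$. Exactly $d-4t$ of these keep the relation in $S^\#$: each used $u$ label excludes its $4$ values of $\alpha$. This is the uniform growth count, $\mathcal{Z}_t=d-4t$.
\end{itemize}
Together these give $\Pi_\# P\,\Pi_\#=\sqrt{(d-4t)/(d-t)}\,P^\#\Pi_\#$ on $S^\#_t$, and the constants multiply to $c_t$.

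\textbf{Step 2 (decomposition after the partial trace).} Since $\Pi_\#$ acts only on $R$, the cross terms vanish under the partial trace: $\Tr_R(\Pi_\#\rho(I-\Pi_\#))=0$. Writing $\rho=\proj{\mathcal{A}_t^{\mathcal{P}}}$, this gives
\begin{equation}
    \Tr_R\rho=c_t^2\,\Tr_R\proj{\mathcal{A}_t^{\mathcal{P}^\#}}+(1-c_t^2)\,\sigma
\end{equation}
for some density operator $\sigma$. Hence the difference with $\Tr_R\proj{\mathcal{A}_t^{\mathcal{P}^\#}}$ equals $(1-c_t^2)(\sigma-\tau)$, where $\tau:=\Tr_R\proj{\mathcal{A}_t^{\mathcal{P}^\#}}$. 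Its trace norm is at most $2(1-c_t^2)$, which gives the main inequality.

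\textbf{Step 3 (the regime $t<d/4$).} Write $\frac{d-4i}{d-i}=1-\frac{3i}{d-i}$. For $i\le t-1<d/4$ we have $d-i\ge 3d/4$, so $\frac{3i}{d-i}\le\frac{4i}{d}$. Every factor lies in $[0,1]$, so the union-type bound $\prod(1-a_i)\ge 1-\sum a_i$ applies:
\begin{equation}
    1-c_t^2\le\sum_{i=1}^{t-1}\frac{4i}{d}=\frac{2t(t-1)}{d}.
\end{equation}
Adding $2t(t-1)/(d+1)\le 2t(t-1)/d$ gives the total $6t(t-1)/d$.

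The main obstacle is Step 1. One must check that $\Pi_\# P\Pi_\#$ is \emph{exactly} proportional to $P^\#$ with a constant independent of $x$ and $R$. This is where uniform growth is essential. The interleaved adversary unitaries pose no problem, because they act trivially on $R$ and therefore commute with $\Pi_\#$.
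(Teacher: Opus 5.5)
Your proof is correct, but it is organized differently from the paper's. The paper does not prove the first inequality itself. It checks that $\mathcal{S}^\#$ is consistent and satisfies uniform growth, computes $\mathcal{Z}_i=d-4i$, and then invokes Theorem~9 of \cite{Ma2024} as a black box. The only step it carries out explicitly is the $t<d/4$ simplification, and there it uses exactly your Step~3: $\prod(1-a_i)\ge 1-\sum a_i$ (proved by induction) together with $a_i=3i/(d-i)\le 4i/d$.

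You instead derive the cited bound from Lemma~\ref{Lem:Haar P} alone. You pass through the unrestricted oracle $P$ by the triangle inequality, prove the exact identity $\Pi_\#\ket{\mathcal{A}_t^{P}}=c_t\ket{\mathcal{A}_t^{P^\#}}$, and use the block decomposition of the partial trace. That identity rests on two facts: the complement of $S^\#$ is closed under adding pairs, and the surviving count $d-4t$ does not depend on $x$ or $R$.

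Your route gives a self-contained argument. It shows exactly where uniform growth is needed, and it identifies $1-c_t^2$ as the weight of the branch in which a $u$-label repeats. The paper's route is shorter and leans on the general restricted-oracle framework.

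Two points you leave implicit should be stated:
\begin{itemize}
\item $P^\#$ acts isometrically on $\operatorname{span}\{\ket{x}\ket{R}:R\in S^\#_t\}$, because the unique pair with image $y$ can be removed from $R\cup\{(x,y)\}$. This is what makes $\ket{\mathcal{A}_t^{P^\#}}$ normalized and gives $\|\Pi_\#\ket{\mathcal{A}_t^{P}}\|^2=c_t^2$.
\item Your recursion needs $\mathcal{Z}_i=d-4i\ge 1$ for all $i<t$, i.e.\ $t\le d/4$. This is harmless, since $P^\#$ (and hence the first inequality) is only defined in that range.
\end{itemize}
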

\begin{proof}[Proof of Lemma~\ref{Lem:Haar P hash}]
    $\mathcal{S}^\#$ is obviously consistent since there is no restriction on $x_i$.
    Also since $u_i\neq u_j$ for $i\neq j$, $(x_\tau,(\alpha_\tau, u_\tau))\notin R\in \mathcal{S}_\tau^\#$ and thus $\mathcal{S}^\#$ satisfies uniform growth constraint.
    Therefore, Theorem~9 in \cite{Ma2024} gives 
    \begin{equation}
        \norm{\mathbb{E}_\mathcal{O}\sim\mu_{\mathrm{Haar}}\proj{\mathcal{A}_t^\mathcal{O}}_{SS'}-\Tr_R\left(\proj{\mathcal{A}_t^{\mathcal{P}(\mathcal{S}^{(\mathrm{inj})})}}\right)}_1
        \le\frac{2t(t-1)}{d+1}+2\left(1-\prod_i^{t-1}\mathcal{Z}_i\frac{(d-t)!}{d!}\right).
    \end{equation}
    Here 
    \begin{equation}
        \mathcal{Z}_i:=\sum_{\substack{y\in[d],s.t.\\ R\cup\{(x,y)\}\in\mathcal{S}_{i+1}^{\#}}}1
    \end{equation}
    and we already used the element of $\vec{u}$ in $i$ times, thus 
    \begin{align}
        \mathcal{Z}_i&=\sum_{\substack{y\in[d],s.t.\\ R\cup\{(x,y)\}\in\mathcal{S}_{i+1}^{\#}}}1\nonumber
        \\ &=4(l-i)\nonumber
        \\ &=(d-4i),
        \label{eq:z}
    \end{align}
    considering $\alpha_j\in[4]$.
    Therefore
    \begin{align}
        \prod_i^{t-1}\mathcal{Z}_i\frac{(d-t)!}{d!}
        &=\prod_i^{t-1}(d-4i)\frac{1}{d(d-1)\cdots(d-t+1)}
        \\ &=\prod_{i=0}^{t-1}\frac{d-4i}{d-i}.
    \end{align}
    Let 
    \begin{equation}
    a_i:=1-\frac{d-4i}{d-i}=\frac{3i}{d-i}.
    \end{equation}
    We can show 
    \begin{equation}
        \prod_{i=0}^{t-1}(1-a_i)\ge1-\sum_{i=0}^{t-1}a_i
        \label{eq:a_i}
    \end{equation}
    by using induction. 
    More precisely, (i) $t=0$ : The left hand side and right hand side are equal.
    (ii) Assume that Eq. (\ref{eq:a_i}) holds for $t$ :
    Since $0\le\frac{d-4i}{d-i}\le 1$ for $0\le i\le t-1$, we find
    \begin{align}
    \prod_i^{t}(1-a_i)\ge(1-a_t)\left(1-\sum_{i=0}^{t-1}a_i\right)\ge1-\sum_{i=0}^ta_i.
    \end{align}
    Therefore, 
    \begin{align}
        \frac{2t(t-1)}{d+1}+2\left(1-\prod_i^{t-1}\frac{d-4i}{d-i}\right)
        &\le\frac{2t(t-1)}{d+1}+2\left(1-\left(1-\sum_{i=0}^{t-1}\frac{3i}{d-i}\right)\right)
        \\ &\le\frac{2t(t-1)}{d+1}+2\sum_{i=0}^{t-1}\frac{4i}{d}\quad\left(d-i\ge d-t\ge\frac{3}{4}d\right)
        \\ &=\frac{2t(t-1)}{d+1}+\frac{4t(t-1)}{d}
        \\ &\le\frac{6t(t-1)}{d}.
    \end{align}
\end{proof}
\subsection{Proof of Theorem~\ref{Thm:lower bound}}
\label{subsection:showing lower bound}
The core part of our strategy to show query lower bound is the adversary method~\cite{Bennett1997}.
We introduce an adversary setup to discriminate two quantum channels. 
We choose one specific Stinespring dilation isometry of channel $\mathcal{E}_b$ as
\begin{equation}
    W_b : \ket{\alpha,u}_{S}\mapsto\ket{\alpha,u}_{S}\ket{w_{b,\alpha}}_E,
\end{equation}
where $\braket{w_{b,\beta}|w_{b,\alpha}}_E=(C_b)_{\alpha\beta}$.

Indeed, $W_b$ gives Stinespring dilation isometry of quantum channel $\mathcal{E}_b$ as follows, 
\begin{align}
    \Tr_E\left(W_b\ket{\alpha,u}\!\bra{\beta,v}_{S}W_b^\dagger\right)
    &=\Tr_E\left(\ket{\alpha,u}\!\bra{\beta,v}_{S}\otimes \ket{w_{b,\alpha}}\!\bra{w_{b,\beta}}_E\right)
    \\ &=\braket{w_{b,\beta}|w_{b,\alpha}}_E\ket{\alpha,u}\!\bra{\beta,v}_{S}
    \\ &=\mathcal{E}_b\left(\ket{\alpha,u}\!\bra{\beta,v}_{S}\right).
\end{align}
Here, we represent the adjoint of the isometry $W_b$ as $W_b^\dagger$ and we use Proposition~\ref{prop:prop of channels} in last equality.
If we do not perform any operation on system $E$ and the wire is traced out, replacing the channel $\mathcal{E}_b$ by isometry $W_b$ does not effect on the query complexity.
Let $b\in\{Y,N\}$. We apply unitary channels \(\mathcal{V}\) and \(\mathcal{U}\) before and after \(\mathcal{W}_b\), respectively, and define \(\rho_b(t)\) as the output state of a \(t\)-query algorithm averaged over independently Haar-random \(\mathcal{U}\) and \(\mathcal{V}\) (see Figure~\ref{fig:rho^Haar}):

\begin{equation}
\rho^{\mathrm{Haar}}_b(t):=\mathbb{E}_{U,V\sim\mu_{\mathrm{Haar}}}\Tr_{E}\left(\proj{\mathcal{A}_t^{\mathcal{U},\mathcal{W}_b,\mathcal{V}}}_{SS'E}\right)
\label{eq:Haar}
\end{equation}

Here, $t$ denotes the number of queries to the oracle channel $\mathcal{U}\circ\mathcal{W}_b\circ\mathcal{V}$,
and $A_0,\ldots,A_t$ are arbitrary quantum operations independent of $b$ and of the oracles $\mathcal{U},\mathcal{V},\mathcal{W}_b$, representing the intermediate operations performed by the algorithm.

Our goal is to upper bound

\begin{equation}
    \|\rho_Y^{\mathrm{Haar}}(t)-\rho_N^{\mathrm{Haar}}(t)\|_1
\end{equation}
and thereby analyze how the distinguishability between the two cases $b=Y$ and $b=N$ grows with the number of queries $t$.

To analyze this distance, we sequentially replace the Haar-random unitaries $\mathcal{U}$ and $\mathcal{V}$ by the corresponding path-recording oracle $\mathcal{P}$ and restricted path-recording oracle $\mathcal{P}^{\#}$, respectively. More precisely, we define $\rho_b^{(1)}(t)$ as the output state obtained by replacing $\mathcal{V}$ with $\mathcal{P}^{\#}$ (see Figure~\ref{fig:rho^{(1)}}), and $\rho_b^{(2)}(t)$ as the output state obtained by additionally replacing $\mathcal{U}$ with $\mathcal{P}$ (see Figure~\ref{fig:rho^{(2)}}):

\begin{align}
    \rho^{\mathrm{(1)}}_b(t)&:=\mathbb{E}_{U\sim\mu_{\mathrm{Haar}}}\Tr_{ER_1}\left(\proj{\mathcal{A}_t^{\mathcal{U},\mathcal{W}_b,\mathcal{P}^\#}}_{SS'ER_1}\right),
    \\
    \rho^{\mathrm{(2)}}_b(t)&:=\Tr_{ER_1R_2}\left(\proj{\mathcal{A}_t^{\mathcal{P},\mathcal{W}_b,\mathcal{P}^\#}}_{SS'ER_1R_2}\right),
\end{align}

The key observation in this hybrid argument is that the resulting state $\rho_b^{(2)}(t)$ is independent of $b$.
Therefore, the trace distance between the original output states can be controlled by bounding the error incurred at each step of replacing the Haar-random unitaries with the corresponding path-recording oracles.

\begin{Lem}
    For $0\le t<d/4$
    \begin{equation}
        \rho^{(2)}_Y(t)=\rho^{(2)}_N(t).
    \end{equation}
    \label{Lem:Hash equal}
\end{Lem}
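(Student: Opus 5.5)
We prove that the joint state of all registers produced by the fully path-recorded circuit, after $E$, $R_1$, $R_2$ are traced out, does not depend on $b$. Fix the algorithm $A_0,\dots,A_t$. Expand the global pure state $\ket{\mathcal{A}_t^{\mathcal{P},\mathcal{W}_b,\mathcal{P}^\#}}$ in the computational basis of the $R_1$ register. Each branch of $\mathcal{P}^\#$ produces outputs $(\alpha_i,u_i)$ with pairwise distinct $u_i$; these outputs are what the $i$-th query feeds into $W_b$ and then into $\mathcal{P}$.

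The key structural fact is that $W_b$ acts diagonally on $\ket{\alpha,u}$: it only attaches $\ket{w_{b,\alpha}}_E$ and leaves $S$ untouched. Hence the $S$-content entering $\mathcal{P}$, and everything that $\mathcal{P}$ records in $R_2$, is the same for $b=Y$ and $b=N$. For a fixed branch, the environment register therefore holds $\ket{w_{b,\alpha_1}}\otimes\cdots\otimes\ket{w_{b,\alpha_t}}$, where each query uses a fresh copy of $E$ (a fresh dilation output). So the unnormalized global state has the form
\begin{equation}
\sum_{R_1,R_2}\ket{\Psi_{R_1,R_2}}_{SS'}\ket{R_1}\ket{R_2}\bigotimes_{i=1}^t\ket{w_{b,\alpha_i(R_1)}}_{E_i},
\end{equation}
where $\ket{\Psi_{R_1,R_2}}$ is independent of $b$. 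Here we use that after a query the algorithm acts only on $S$ through $\mathcal{P}$'s output, and $\mathcal{P}$'s output $y$ is summed over uniformly regardless of its input.

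Now trace out $E R_1 R_2$. The cross terms between $\ket{R_1}\ket{R_2}$ and $\ket{R_1'}\ket{R_2'}$ vanish unless $R_1=R_1'$ and $R_2=R_2'$, since the relation states form an orthonormal basis. For the diagonal terms, the $E$ overlap is $\prod_i\braket{w_{b,\alpha_i}|w_{b,\alpha_i}}=\prod_i (C_b)_{\alpha_i\alpha_i}=1$ for both $b$, because every $\vec v^{(b)}_\alpha$ is a unit vector. Thus $\rho^{(2)}_b(t)=\sum_{R_1,R_2}\proj{\Psi_{R_1,R_2}}$, which is independent of $b$.

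Two points need care.

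\textbf{Off-diagonal contributions.} We must make sure no off-diagonal $\alpha$-coherence survives. Coherence between $\alpha_i\neq\alpha_i'$ inside a single branch is harmless, because $R_1$ records $(x_i,(\alpha_i,u_i))$ and so different $\alpha_i$ give orthogonal $R_1$ states. Consequently the matrix $C_b$ only ever appears through its diagonal.

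\textbf{Sequencing.} The main subtlety is the interleaving of the three oracles at each query, $\mathcal{P}^\#$ then $W_b$ then $\mathcal{P}$, together with the requirement $t<d/4$, i.e. $t<l$. This requirement is exactly what guarantees that $\mathcal{P}^\#$ is well defined: a fresh $u$ must remain available at every query, with $\mathcal{Z}_i=d-4i>0$ as in Eq.~\eqref{eq:z}. We verify by induction on the query index that the state keeps the product form displayed above, with $\ket{\Psi}$ independent of $b$. The inductive step is the following. Apply $A_{i}$, which is independent of $b$. Apply $\mathcal{P}^\#$, which appends the basis label $(x,(\alpha,u))$ to $R_1$. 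Apply $W_b$, which tensors on $\ket{w_{b,\alpha}}$ determined by that label. Apply $\mathcal{P}$ to the unchanged $\ket{\alpha,u}_S$, which is independent of $b$.
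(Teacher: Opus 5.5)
\textbf{There is a genuine gap.} The break is at your displayed product form: the environment content $\bigotimes_i\ket{w_{b,\alpha_i}}_{E_i}$ is \emph{not} a function of the relation states $\ket{R_1}\ket{R_2}$, so ``$\alpha_i(R_1)$'' is undefined.

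Relations are multisets. Hence $R_1(\vec x,\vec\alpha,\vec u)=R_1(\pi\vec x,\pi\vec\alpha,\pi\vec u)$ and $R_2(\vec\alpha,\vec u,\vec z)=R_2(\pi\vec\alpha,\pi\vec u,\pi\vec z)$ for every $\pi\in S_t$. By contrast, the registers $E_1,\dots,E_t$ and the $SS'$ vector $\ket{\phi_{\vec x,\vec z}}$ are ordered by query index. Each pair $(R_1,R_2)$ is therefore reached by $t!$ query orderings, and these carry different vectors $\ket{\phi_{\pi\vec x,\pi\vec z}}\otimes\ket{W_{\pi\vec\alpha}}$. After tracing out $R_1R_2E$, the cross terms between orderings survive with weight $\braket{W_{\pi\vec\alpha}|W_{\vec\alpha}}=\prod_i(C_b)_{\alpha_i\alpha_{\pi(i)}}$.

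Already for $t=2$, the choices $(\alpha_1,\alpha_2)$ and $(\alpha_2,\alpha_1)$, with the $x$'s, $u$'s and $z$'s swapped accordingly, give identical relation states and contribute $|(C_b)_{\alpha_1\alpha_2}|^2$. So two of your claims are false: that distinct $\alpha_i$ give orthogonal $R_1$ states, and that $C_b$ enters only through its diagonal. If the registers recorded ordered tuples, your conclusion would hold trivially. But then they would not be the path-recording oracle, and Lemmas~\ref{Lem:Haar P} and~\ref{Lem:Haar P hash} would not apply: the symmetrization over $S_t$ is exactly what reproduces the Haar average.

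A sanity check confirms that something must be missing. Your proof never uses Proposition~\ref{prop:prop of channels}(iv), so it would apply verbatim to any $C$ with unit diagonal. Take the all-ones $C$ (the identity channel) against $C_Y$. Their unitarities $\frac{1}{16}\Tr C^2$ are $1$ and $5/8$. A SWAP test on two Choi states, using two queries, separates these with constant bias, and this is invariant under the Haar twirl. That contradicts the $\bigO{t^2/d}$ hybrid bounds.

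The missing step is the one the paper takes. Keep ordered tuples and trace out, which gives a sum over $\pi\in S_t$ of $\prod_i(C_b)_{\alpha_i\alpha_{\pi(i)}}\ket{\phi_{\vec x,\vec z}}\!\bra{\phi_{\pi\vec x,\pi\vec z}}$. Since $\ket{\phi_{\vec x,\vec z}}$ depends on neither $\vec\alpha$ nor $\vec u$, the $\vec\alpha$-sum factorizes over the cycles of $\pi$ into $\prod_{c\in\mathrm{Cyc}(\pi)}\Tr(C_b^{|c|})$. This is $b$-independent precisely because $C_Y$ and $C_N$ have the same spectrum $\{3,1,0,0\}$. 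The lemma rests on that isospectrality, not on the unit diagonal.
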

\newpage
\begin{proof}[Proof of Lemma~\ref{Lem:Hash equal}]
\begin{figure}
    \centering
    \includegraphics[width=\linewidth]{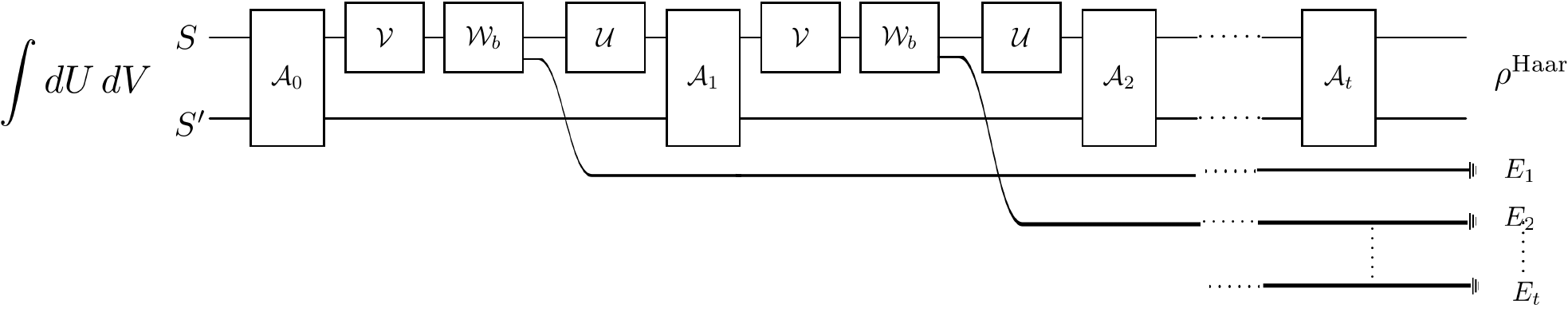}
    \caption{Original $t$-query algorithm with Haar-random unitary channels.
The unitary channels $\mathcal{V}$ and $\mathcal{U}$ are independently chosen
from the Haar measure and the same unitary channels are used in every query to the composite oracle
$\mathcal{U}\circ\mathcal{W}_b\circ\mathcal{V}$.
The quantum operations $\mathcal{A}_0,\ldots,\mathcal{A}_t$ are arbitrary operations independent of the oracles $\mathcal{U}, \mathcal{W}_b$ and $\mathcal{V}$, and represent the intermediate processing of the
algorithm on the target system $S$ and the ancilla system $S'$.
Each invocation of $\mathcal{W}_b$ produces an environment register $E_i$,
which is traced out at the end.
Averaging the resulting output state over $\mathcal{U}$ and $\mathcal{V}$
gives $\rho_b(t)$.}
    \label{fig:rho^Haar}
\end{figure}
\hspace{1cm}
\begin{figure}
    \centering
    \includegraphics[width=\linewidth]{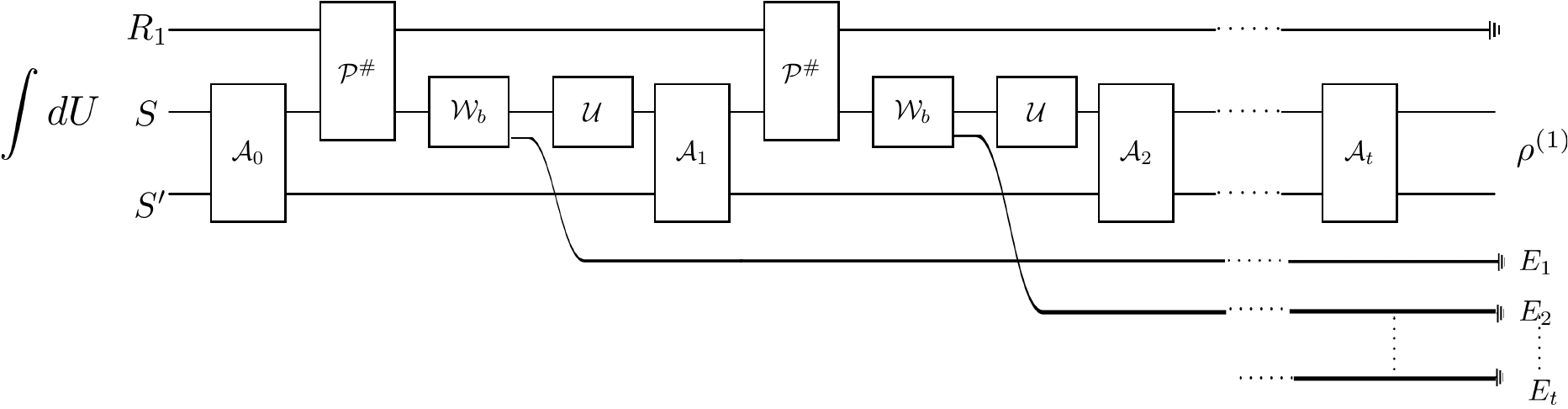}
    \caption{First hybrid in the path-recording argument. Every Haar-random unitary channel $\mathcal{V}$ in Fig.~\ref{fig:rho^Haar} is replaced by the restricted path-recording oracle $\mathcal{P}^{\#}$, while $\mathcal{U}$ remains Haar-random. The oracle $\mathcal{P}^{\#}$ acts jointly on the target system $S$ and the relation register $R_1$, recording the relation $x_i\mapsto(\alpha_i,u_i)$ generated at the $i$-th query. The operations $\mathcal{A}_0,\ldots,\mathcal{A}_t$, the channel $\mathcal{W}_b$, and the environment registers $E_i$ are not changed from the original algorithm. After tracing out $R_1$ and the environment system and averaging over the remaining Haar-random unitary $\mathcal{U}$, the circuit yields the first hybrid state $\rho_b^{(1)}(t)$.}
    \label{fig:rho^{(1)}}
\end{figure}

\begin{figure}
    \centering
    \includegraphics[width=\linewidth]{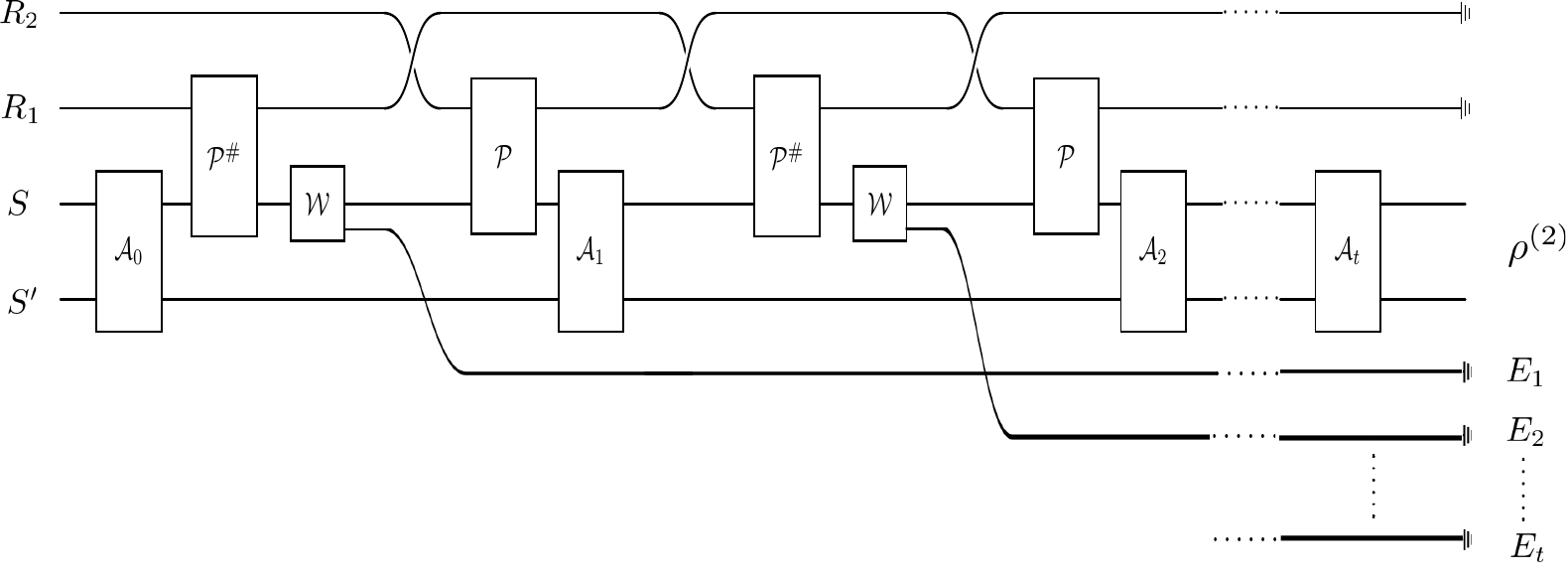}
    \caption{Second hybrid in the path-recording argument. In addition to replacing $\mathcal{V}$ by the restricted path-recording oracle $\mathcal{P}^{\#}$, every Haar-random unitary channel $\mathcal{U}$ is replaced by the standard path-recording oracle $\mathcal{P}$. At the $i$-th query, $\mathcal{P}^{\#}$ records the transition $x_i\mapsto(\alpha_i,u_i)$ in the relation register $R_1$, $\mathcal{W}_b$ produces the corresponding environment state $\ket{w_{b,\alpha_i}}_{E_i}$, and $\mathcal{P}$ records the subsequent transition $(\alpha_i,u_i)\mapsto z_i$ in the relation register $R_2$. Thus, after $t$ queries, $R_1$ and $R_2$ store the relations $R_1(\vec{x},\vec{\alpha},\vec{u})$ and $R_2(\vec{\alpha},\vec{u},\vec{z})$, respectively. Tracing out the two relation registers and the environment system yields the second hybrid state $\rho_b^{(2)}(t)$.}
    \label{fig:rho^{(2)}}
\end{figure}
    Consider the exact representation of $\ket{\mathcal{A}_t^{\mathcal{P},\mathcal{W}_b, \mathcal{P}^\#}}$(Figure~\ref{fig:rho^{(2)}}).
    For example, in the case of $t=1$, let $\ket{x_1}_S$ be a computational basis state of system $S$ for $x_1\in[d]$. Then, transition is as follows
    \begin{align}
        &\ket{x_1}_S\ket{\varnothing}_{R_1}\ket{\varnothing}_{R_2}\nonumber
        \\
        \xrightarrow{\mathcal{P}^\#} &
        \frac{1}{\sqrt{\mathcal{Z}_0}}\sum_{\substack{\alpha_1\in[4],u_1\in[l]\\\varnothing\cup\{(x_1,(\alpha_1,u_1))\}\in\mathcal{S}_1^\#}}\ket{\alpha_1,u_1}\ket{\{(x_1,(\alpha_1,u_1))\}}_{R_1}\ket{\varnothing}_{R_2}
        \\
        \xrightarrow{\mathcal{W}_b} &
        \frac{1}{\sqrt{\mathcal{Z}_0}}\sum_{\substack{\alpha_1\in[4],u_1\in[l]\\\varnothing\cup\{(x_1,(\alpha_1,u_1))\}\in\mathcal{S}_1^\#}}\ket{\alpha_1,u_1}\ket{\{(x_1,(\alpha_1,u_1))\}}_{R_1}\ket{\varnothing}_{R_2}\ket{w_{b,\alpha_1}}_{E_1}
        \\
        \xrightarrow{\mathcal{P}} &
        \frac{1}{\sqrt{\mathcal{Z}_0(d-|\varnothing|)}}\sum_{\substack{\alpha_1\in[4],u_1\in[l],z_1\in[d]\\\{(x_1,(\alpha_1,u_1))\}\in\mathcal{S}_1^\#\\z_1\notin \mathrm{Im} \varnothing}}\ket{z_1}\ket{\{(x_1,(\alpha_1,u_1))\}}_{R_1}\ket{\{((\alpha_1,u_1),z_1)\}}_{R_2}\ket{w_{b,\alpha_1}}_{E_1}
        \\ =&
        \frac{1}{\sqrt{4ld}}\sum_{\substack{\alpha_1\in[4],u_1\in[l],z_1\in[d]\\\{(x_1,(\alpha_1,u_1))\}\in\mathcal{S}_1^\#\\z_1\notin \mathrm{Im} \varnothing}}\ket{z_1}\ket{\{(x_1,(\alpha_1,u_1))\}}_{R_1}\ket{\{((\alpha_1,u_1),z_1)\}}_{R_2}\ket{w_{b,\alpha_1}}_{E_1}
    \end{align}
    We use (\ref{eq:z}) in the last equality.
    By the injectivity of $\mathcal{S}^\#$ and the definition of standard path recording oracle, for every $k\in[t]$, $y_k=(\alpha_k,u_k)\in[d]\setminus\bigcup_i^{k-1}\{y_i\}$ and $z_k\in[d]\setminus\bigcup_i^{k-1}\{z_i\}$.
    If we denote $[d]_\mathrm{dist}^k:=\{\vec{v}=(v_1,\cdots,v_k)\in[d]^k : v_i\neq v_j\,\text{ for }\,i\neq j\}$, $\vec{y}\in[d]^k$ and $\vec{z}\in[d]^k$ are both chosen from $[d]_\mathrm{dist}^k$. 
    Thus, by linearity, applying this transformation to the state $A_0\ket{0}_{SS'}$, followed by $A_1$, gives
    \begin{equation}
        \ket{\mathcal{A}_1^{\mathcal{P},\mathcal{W}_b,\mathcal{P}^\#}}=\frac{1}{\sqrt{4ld}}\sum_{\substack{x_1\in[d]\\\alpha_1\in[4],u_1\in[l]\\z_1\in[d]\\\{(x_1,(\alpha_1,u_1))\}\in\mathcal{S}_1^\#\\z_1\notin \mathrm{Im} \varnothing}}\left\{A_1(\ket{z_1}\!\bra{x_1}_S\otimes I_{S'})A_0\right\}\ket{0}_{SS'}\ket{\{(x_1,(\alpha_1,u_1))\}}_{R_1}\ket{\{((\alpha_1,u_1),z_1)\}}_{R_2}\ket{w_{b,\alpha_1}}_{E_1}
    \end{equation}
    Repeating the same argument for each query, after $t$ queries, relations $R_1(\vec{x},\vec{\alpha},\vec{u}):=\{(x_i,(\alpha_i,u_i)):i\in[t]\}$ and $R_2(\vec{\alpha},\vec{u},\vec{z}):=\{((\alpha_i,u_i),z_i):i\in[t]\}$ are stored in $R_1,R_2$ register and we obtain
    \begin{align}
        \ket{\mathcal{A}_t^\mathcal{P,P^\#}}_{SS'R_1R_2E}
        &=\frac{1}{\sqrt{\prod_{i=0}^{t-1}4(l-i)(d-i)}}\sum_{\substack{\vec{x}\in[d]^t,\vec{u}\in[l]_\mathrm{dist}^t \\ \vec{\alpha}\in[4]^t,\vec{z}\in[d]_\mathrm{dist}^t}}\ket{\phi_{\vec{x},\vec{z}}}_{SS'}\otimes \ket{R_1(\vec{x},\vec{\alpha},\vec{u})}_{R_1}\otimes\ket{R_2(\vec{\alpha},\vec{u},\vec{z})}_{R_2}\otimes \bigotimes_{i=1}^t\ket{w_{b,\alpha_i}}_{E_i}\\
        &=\frac{1}{\sqrt{4^t(l)_t(d)_t}}\sum_{\substack{\vec{x}\in[d]^t,\vec{u}\in[l]_\mathrm{dist}^t \\ \vec{\alpha}\in[4]^t,\vec{z}\in[d]_\mathrm{dist}^t}}\ket{\phi_{\vec{x},\vec{z}}}_{SS'}\otimes \ket{R_1(\vec{x},\vec{\alpha},\vec{u})}_{R_1}\otimes\ket{R_2(\vec{\alpha},\vec{u},\vec{z})}_{R_2}\otimes \bigotimes_{i=1}^t\ket{w_{b,\alpha_i}}_{E_i}.
    \end{align}
    Here, we denote $(l)_t:=\prod_{i=0}^{t-1}(l-i)$, $(d)_t:=\prod_{i=0}^{t-1}(d-i)$ and 
    \begin{equation}
        \ket{\phi_{\bm{x},\bm{z}}}:=\left(\prod_{i=1}^tA_i\left(\ket{z_i}\!\bra{x_i}\otimes I\right)\right)A_0\ket{0}_{SS'}.
    \end{equation}
    We emphasize that this unnormalized state $\ket{\phi_{\bm{x},\bm{z}}}$ does not depend on $b\in\{Y,N\}$: only the environment states do.
    To consider the output on \(SS'\) system, we first trace out the relation registers \(R_1\) and
    \(R_2\).  For convenience, let
    \begin{equation}
        y_i \coloneqq (\alpha_i,u_i),\qquad\vec{y}\coloneqq(y_1,\cdots,y_t),
    \end{equation}
    and define
    \begin{equation}
        \ket{W_{\vec{\alpha}}}_{E}\coloneqq\bigotimes_{i=1}^{t}\ket{w_{b,\alpha_i}}_{E_i}.    
    \end{equation}
    Then
    \begin{align}
        \omega_b^{(2)}(t)_{SS'E}
        &\coloneqq
        \operatorname{Tr}_{R_1R_2}
        \left[\ket{\mathcal{A}_t^{\mathcal{P},\mathcal{P}^{\#}}}\!\bra{\mathcal{A}_t^{\mathcal{P},\mathcal{P}^{\#}}}\right]\nonumber
        \\ &=\frac{1}{4^t(l)_t(d)_t}
        \sum_{\substack{\vec{x},\vec{x}'\in[d]^t,\,\vec{u},\vec{u}'\in[l]_\mathrm{dist}^t\\\vec{\alpha},\vec{\alpha}'\in[4]^t,\,\vec{z},\vec{z}'\in[d]_\mathrm{dist}^t
        }}\left\langle R_1(\vec{x}',\vec{\alpha}',\vec{u}')\,\middle|\,R_1(\vec{x},\vec{\alpha},\vec{u})\right\rangle
        \nonumber\\
        &\hspace{1.5cm}\times
        \left\langle R_2(\vec{\alpha}',\vec{u}',\vec{z}')\,\middle|\,R_2(\vec{\alpha},\vec{u},\vec{z})\right\rangle\ket{\phi_{\vec{x},\vec{z}}}\!\bra{\phi_{\vec{x}',\vec{z}'}}\otimes\ket{W_{\vec{\alpha}}}
        \!\bra{W_{\vec{\alpha}'}}.
        \label{eq:trace-out-R1-R2}
    \end{align}
    
    The relation registers are represented in an orthonormal computational basis.
    Therefore, the two inner products in~\eqref{eq:trace-out-R1-R2} are nonzero
    if and only if the corresponding relations are identical.  Since the relation
    \(R_1\) is injective, the equality
    \begin{equation}
        R_1(\vec{x}',\vec{\alpha}',\vec{u}')=R_1(\vec{x},\vec{\alpha},\vec{u})  
    \end{equation}
    implies that there exists a unique permutation \(\pi\in S_t\) such that
    \begin{equation}
        x_i'=x_{\pi(i)},\qquad(\alpha_i',u_i')=(\alpha_{\pi(i)},u_{\pi(i)})
    \end{equation}
    for every \(i\in[t]\).  Moreover, because the same labels
    \((\alpha_i,u_i)\) occur in \(R_2\), the condition
    \begin{equation}
        R_2(\vec{\alpha}',\vec{u}',\vec{z}')=R_2(\vec{\alpha},\vec{u},\vec{z})
    \end{equation}
    forces the same permutation \(\pi\) and additionally gives
    \begin{equation}
        z_i'=z_{\pi(i)}
    \end{equation}
    for every $i\in[t]$.
    Consequently, only the terms with indices related by a permutation $\pi\in S_t$ remain, namely,
    \begin{align}
        \omega_b^{(2)}(t)_{SS'E}
        =
        \frac{1}{4^t(l)_t(d)_t}
        \sum_{\substack{\vec{x}\in[d]^t,\,\vec{u}\in[l]_\mathrm{dist}^t\\\vec{\alpha}\in[4]^t,\,\vec{z}\in[d]_\mathrm{dist}^t}}
        \sum_{\pi\in S_t}\ket{\phi_{\vec{x},\vec{z}}}
        \!\bra{\phi_{\pi\vec{x},\pi\vec{z}}}\otimes\ket{W_{\vec{\alpha}}}\!\bra{W_{\pi\vec{\alpha}}},
        \label{eq:state-after-tracing-relations}
    \end{align}
    where we use the convention
    \begin{equation}
        (\pi\vec{x})_i:=x_{\pi(i)},\qquad(\pi\vec{z})_i:=z_{\pi(i)},\qquad(\pi\vec{\alpha})_i:=\alpha_{\pi(i)}.
    \end{equation}
    Finally, tracing out the environment registers gives
    \begin{align}
        \rho_{b}^{(2)}(t)
        &=\operatorname{Tr}_{E}\!\left[\omega_b^{(2)}(t)_{SS'E}\right]
        \nonumber\\
        &=
        \frac{1}{4^t(l)_t(d)_t}
        \sum_{\substack{\vec{x}\in[d]^t,\,\vec{u}\in[l]_\mathrm{dist}^t\\
        \vec{\alpha}\in[4]^t,\,\vec{z}\in[d]_\mathrm{dist}^t
        }}
        \sum_{\pi\in S_t}\braket{W_{\pi\vec{\alpha}}|W_{\vec{\alpha}}}\,\ket{\phi_{\vec{x},\vec{z}}}
        \!\bra{\phi_{\pi\vec{x},\pi\vec{z}}}
        \nonumber\\
        &=\frac{1}{4^t(l)_t(d)_t}
        \sum_{\substack{\vec{x}\in[d]^t,\,\vec{u}\in[l]_\mathrm{dist}^t\\
        \vec{\alpha}\in[4]^t,\,\vec{z}\in[d]_\mathrm{dist}^t
        }}
        \sum_{\pi\in S_t}\prod_{i=1}^t(C_b)_{\alpha_i\alpha_{\pi(i)}}\,\ket{\phi_{\vec{x},\vec{z}}}
        \!\bra{\phi_{\pi\vec{x},\pi\vec{z}}},
        \label{eq:reduced-state-SS-prime}
    \end{align}
    Since the summand is independent of $\vec{u}$ and there are exactly $(l)_t$ admissible choices of $\vec{u}$, the summation over $\vec{u}$ can be performed explicitly, yielding
    \begin{equation}
        \rho_{b}^{(2)}(t)=\frac{1}{4^t(d)_t}
        \sum_{\substack{\vec{x}\in[d]^t,\,\vec{\alpha}\in[4]^t,\,\vec{z}\in[d]_{\mathrm{dist}}^t}}
        \sum_{\pi\in S_t}
        \left(\prod_{i=1}^{t}(C_b)_{\alpha_i\alpha_{\pi(i)}}
        \right)\ket{\phi_{\vec{x},\vec{z}}}
        \!\bra{\phi_{\pi\vec{x},\pi\vec{z}}}.
        \label{eq:final state}
    \end{equation}
    Recall that $\ket{\phi_{\vec{x},\vec{z}}}$ does not depend on $b\in\{Y,N\}$.
    Furthermore, let consider the cycle decomposition of $\pi$, denoted as $\mathrm{Cyc}(\pi)$. For $c=(i_1,\cdots,i_q)\in\mathrm{Cyc}(\pi)$, since
    \begin{equation}
    \sum_{\alpha_{i_1},\cdots,\alpha_{i_q}}^4(C_b)_{\alpha_{i_1},\alpha_{i_2}}\cdots(C_b)_{\alpha_{i_q},\alpha_{i_1}}=\Tr(C_b^q),
    \end{equation}
    we obtain 
    \begin{equation}
        \sum_{\vec{\alpha}\in[4]^t}\prod_{i=1}^t(C_b)_{\alpha_i,\alpha_{\pi(i)}}=\prod_{c\in\mathrm{Cyc}(\pi)}\Tr(C_b^{\abs{c}}).
    \end{equation}
    Remember Proposition \ref{prop:prop of channels} (iv), the spectrum of $C_Y$ and $C_N$ are same and this implies that final state in (\ref{eq:final state}) does not depend on $b$.
\end{proof}
\begin{Lem}
    \begin{equation}
    \norm{\rho_Y^\mathrm{Haar}(t)-\rho_N^\mathrm{Haar}(t)}_1\le\frac{16t(t-1)}{d}.
    \end{equation}
    \label{Lem:distance Haar}
\end{Lem}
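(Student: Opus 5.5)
The bound will come from a two-step hybrid argument through the states $\rho_b^{(1)}(t)$ and $\rho_b^{(2)}(t)$, combined with Lemma~\ref{Lem:Hash equal}. By the triangle inequality,
\begin{equation}
\norm{\rho_Y^{\mathrm{Haar}}(t)-\rho_N^{\mathrm{Haar}}(t)}_1\le\sum_{b\in\{Y,N\}}\left(\norm{\rho_b^{\mathrm{Haar}}(t)-\rho_b^{(1)}(t)}_1+\norm{\rho_b^{(1)}(t)-\rho_b^{(2)}(t)}_1\right)+\norm{\rho_Y^{(2)}(t)-\rho_N^{(2)}(t)}_1 .
\end{equation}
The last term vanishes by Lemma~\ref{Lem:Hash equal}, assuming $t<d/4$. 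It therefore suffices to show
\begin{equation}
\norm{\rho_b^{\mathrm{Haar}}(t)-\rho_b^{(1)}(t)}_1\le\frac{6t(t-1)}{d},\qquad
\norm{\rho_b^{(1)}(t)-\rho_b^{(2)}(t)}_1\le\frac{2t(t-1)}{d}.
\end{equation}
Summed over both values of $b$, these give $2(6+2)t(t-1)/d=16t(t-1)/d$. If instead $t\ge d/4$, the right-hand side is at least $16\cdot\tfrac{d}{4}(\tfrac d4-1)/d\ge 2$ once $d\ge 8$, so the bound is trivial. The case $d=4$ can be checked directly, since then $t\ge1$ and $16t(t-1)/4\ge 2$ for $t\ge2$, while for $t=1$ we have $t<d/4$ fails only when $t\ge 1$ with $d=4$, which we handle by noting that a single query gives equal averaged outputs by the same Lemma~\ref{Lem:Hash equal} computation with $t=1$.

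\textbf{First replacement ($\mathcal V\to\mathcal P^\#$).} Fix $U$ and absorb everything except $\mathcal V$ into the adversary's intermediate operations. That is, define new operations $A_i'$ that first apply $U$ and $W_b$ and then $A_i$. The isometry $W_b$ enlarges the workspace by the environment register $E_i$, which becomes part of the purifying ancilla $S'$. The algorithm is then a $t$-query adversary to the single Haar oracle $V$. Lemma~\ref{Lem:Haar P hash} applies for every fixed $U$, and the trace distance cannot increase under the partial trace over $E$. Averaging over $U$ and using convexity of the trace norm gives the bound $6t(t-1)/d$.

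The one point to check is that Lemma~\ref{Lem:Haar P hash} is stated for pure adversaries with unitaries $A_i$. General operations $A_i$ are handled by Stinespring-purifying them into a larger ancilla, and the output difference is then only reduced by the final partial trace. The same remark covers the first query, where the initial operation $A_0$ precedes $V$.

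\textbf{Second replacement ($\mathcal U\to\mathcal P$).} Now treat $\mathcal P^\#$ together with its relation register $R_1$, and $W_b$ with $E$, as part of the adversary's workspace. The algorithm is then a $t$-query adversary to the Haar oracle $U$. The point needing care is that $\mathcal P^\#$ is a linear isometry on $S\otimes R_1$ rather than a unitary. Embedding it in a unitary on a larger space and restricting to the relevant input subspace takes care of this, and Lemma~\ref{Lem:Haar P}'s proof only uses that the adversary's operations are fixed and independent of the oracle. Lemma~\ref{Lem:Haar P} then gives $2t(t-1)/(d+1)\le 2t(t-1)/d$, and tracing out $E,R_1,R_2$ again does not increase the distance.

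I expect the main obstacle to be this legitimacy check: that Lemmas~\ref{Lem:Haar P} and~\ref{Lem:Haar P hash} apply to adversaries containing another path-recording isometry or a non-unitary $W_b$, with the relation registers included in the ancilla. Once that is granted, the rest is bookkeeping of constants.
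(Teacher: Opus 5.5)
Your proposal is correct and follows essentially the same route as the paper. The paper uses the same hybrid chain $\rho_b^{\mathrm{Haar}}\to\rho_b^{(1)}\to\rho_b^{(2)}$ for both values of $b$, kills the middle term with Lemma~\ref{Lem:Hash equal}, and takes the $6t(t-1)/d$ and $2t(t-1)/d$ pieces from Lemmas~\ref{Lem:Haar P hash} and~\ref{Lem:Haar P}. Your additional remarks fill in details the paper leaves implicit: fixing $U$ and averaging by convexity, absorbing $W_b$ and $\mathcal{P}^\#$ into the adversary's workspace, and disposing of the range $t\ge d/4$, where the bound is either trivial or, for $d=4$ and $t=1$, the two averaged outputs actually coincide.
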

\begin{proof}[Proof of Lemma~\ref{Lem:distance Haar}]
    \begin{align}
        &\norm{\rho_Y^\mathrm{Haar}(t)-\rho_N^\mathrm{Haar}(t)}_1
        \\ &\le\norm{\rho_Y^\mathrm{Haar}(t)-\rho_Y^\mathrm{(1)}(t)}_1+\norm{\rho_Y^\mathrm{(1)}(t)-\rho_Y^\mathrm{(2)}(t)}_1
        +\norm{\rho_Y^\mathrm{(2)}(t)-\rho_N^\mathrm{Haar}(t)}_1
        \\ &=\norm{\rho_Y^\mathrm{Haar}(t)-\rho_Y^\mathrm{(1)}(t)}_1+\norm{\rho_Y^\mathrm{(1)}(t)-\rho_Y^\mathrm{(2)}(t)}_1
        +\norm{\rho_N^\mathrm{(2)}(t)-\rho_N^\mathrm{Haar}(t)}_1
        \\ &\le\norm{\rho_Y^\mathrm{Haar}(t)-\rho_Y^\mathrm{(1)}(t)}_1+\norm{\rho_Y^\mathrm{(1)}(t)-\rho_Y^\mathrm{(2)}(t)}_1
        +\norm{\rho_N^\mathrm{(2)}(t)-\rho_N^\mathrm{(1)}(t)}_1+\norm{\rho_N^\mathrm{(1)}(t)-\rho_N^\mathrm{Haar}(t)}_1
        \\ &\le\frac{6t(t-1)}{d}+\frac{2t(t-1)}{d}+\frac{2t(t-1)}{d}+\frac{6t(t-1)}{d}
        \\ &=\frac{16t(t-1)}{d}.
    \end{align}
    We apply Lemma~\ref{Lem:Hash equal} in the first equality and Lemma~\ref{Lem:Haar P}, \ref{Lem:Haar P hash} in third inequality.
\end{proof}
\begin{proof}[Proof of Theorem~\ref{Thm:lower bound}]

Consider discrimination task between $\mathcal{U}\circ\mathcal{E}_Y\circ\mathcal{V}$ and $\mathcal{U}\circ\mathcal{E}_N\circ\mathcal{V}$ via quantum state $\rho_Y(t)$ and $\rho_N(t)$. If the algorithm succeeds this task with probability at least $2/3$ with $t$-queries of $\mathcal{U}\circ\mathcal{E}_b\circ\mathcal{V}(b\in\{Y,N\})$, there exists a POVM $\{E_Y, E_N\}\,(E_Y+E_N=I)$ such that, $\forall\,\mathcal{U}, \mathcal{V}$
\begin{align}
    p_{\mathrm{suc}}(U,V)
    &=\frac{1}{2}\Tr [E_Y\rho_Y(t)]+\frac{1}{2}\Tr [E_N\rho_N(t)]
    \\ &=\frac{1}{2}+\frac{1}{2}\Tr[E_Y(\rho_Y(t)-\rho_N(t))]
    \\ &\ge \frac{2}{3}.
\end{align}
Averaging both sides of inequality, we obtain
\begin{align}
    \frac{2}{3}\le\mathbb{E}_{U,V}[p_{\mathrm{suc}}(U,V)]
    &=\frac{1}{2}+\frac{1}{2}\Tr[E_Y(\rho_Y^{\mathrm{Haar}}(t)-\rho_N^{\mathrm{Haar}}(t))]
    \\ &\le\frac{1}{2}+\frac{1}{2}\max_{0\le E_Y\le I}\Tr[E_Y(\rho_Y^{\mathrm{Haar}}(t)-\rho_N^{\mathrm{Haar}}(t))]
    \\ &=\frac{1}{2}+\frac{1}{4}\left\|\rho_Y^{\mathrm{Haar}}(t)-\rho_N^{\mathrm{Haar}}(t)\right\|_1,
\end{align}
where we use Helstrom measurement~\cite{Helstrom1976,holevo2011} as an optimal measurement in the last equality.
Therefore, Lemma \ref{Lem:distance Haar} implies
\begin{align}
    \frac{2}{3}\le\frac{1}{2}+\frac{1}{4}\left\|\rho_Y^{\mathrm{Haar}}(t)-\rho_N^{\mathrm{Haar}}(t)\right\|_1
    &\le\frac{1}{2}+\frac{4t^2}{d},
\end{align}
i.e., the number of queries to distinguish $\mathcal{E}_Y$ and $\mathcal{E}_N$ with success probability at least $2/3$ is bounded by
\begin{align}
    t&\ge\sqrt{\frac{d}{24}}.
\end{align}
Recall that the gap $\epsilon_{\mathrm{OEF}}$ between $F_{\mathrm{OEF}}(\mathcal{E}_Y)$ and $F_{\mathrm{OEF}}(\mathcal{E}_N)$ is constant according to Proposition \ref{prop:prop of channels}, which does not depend on the dimension of the system.
Since $F_{\mathrm{OEF}}(\mathcal{E})$ has unitarity invariant property (Lemma \ref{lem:unitary equivalent}), an algorithm that can estimate $F_{\mathrm{OEF}}(\mathcal{E})$ with additive error strictly smaller than $\epsilon_{\mathrm{OEF}}/2$ with success probability at least $2/3$ can also distinguish $\mathcal{U}\circ\mathcal{E}_Y\circ\mathcal{V}$ and $\mathcal{U}\circ\mathcal{E}_N\circ\mathcal{V}$ with probability at least $2/3$ for all $\mathcal{U}$ and $\mathcal{V}$. Therefore, we can establish the query lower bound $t=\bigOmega{\sqrt{d}}$.
By Lemma~3, the OAGF gap between the same hard instances is
\begin{equation}
      \epsilon_{\mathrm{OAGF}}=\frac{d}{d+1}\frac{\sqrt{2}-1}{4}\ge\frac{\sqrt{2}-1}{5},  
\end{equation}
where we used $d=4\ell\ge4$.
Hence, any OAGF estimator with additive error $\epsilon<(\sqrt{2}-1)/10$ distinguishes the two randomized channel ensembles by thresholding at the midpoint of their OAGF values.
This gives the same $\Omega(\sqrt d)$ query lower bound.
Both hard instances have Kraus rank two, and composition
with unitary channels preserves this rank.
\end{proof}
\begin{Remark}
    [Comparison with unitarity]
    In contrast to OEF, the unitarity $\mathfrak{u}(\mathcal{E})$ can be estimated with query complexity independent of the system dimension under coherent black-box access to the channel~\cite{Chen2023}. Therefore, unlike OEF, unitarity estimation does not exhibit a dimension-dependent separation between black-box quantum channel access and access to a Stinespring unitary dilation.

    Moreover, the two hard instances used to establish our lower bound have different OEF values but identical unitarity. 
    Indeed, we have
    \begin{align}
        \mathsf{E}&=\frac{1}{d}\left(I\otimes \mathcal{E}\right)\left[\sum_{\substack{(\alpha,u)\\(\beta,v)}}\ket{\alpha,u}\!\bra{\beta,v}\otimes\ket{\alpha,u}\!\bra{\beta,v}\right]
        \\ &=\frac{1}{d}\sum_{\substack{(\alpha,u)\\(\beta,v)}}\ket{\alpha,u}\!\bra{\beta,v}\otimes\left(C_{\alpha\beta}\ket{\alpha,u}\!\bra{\beta,v}\right).
    \end{align}
    Thus, unitarity can be written as
    \begin{align}
        \mathfrak{u}(\mathcal{E})&=\Tr[\mathsf{E}^2]
        \\ &=\frac{1}{d^2}\Tr\sum_{\substack{(\alpha,u)\\(\beta,v)\\(\alpha',u')\\(\beta',v')}}\ket{\alpha,u}\!\braket{\beta,v|\alpha',u'}\!\bra{\beta',v'}\otimes\left(C_{\alpha\beta}C_{\alpha'\beta'}\ket{\alpha,u}\!\braket{\beta,v|\alpha',u'}\!\bra{\beta',v'}\right)
        \\ &=\frac{1}{d^2}\Tr\sum_{\substack{(\alpha,u)\\(\beta,v)\\(\beta',v')}}\ket{\alpha,u}\!\bra{\beta',v'}\otimes\left(C_{\alpha\beta}C_{\beta\beta'}\ket{\alpha,u}\!\bra{\beta',v'}\right)
        \\
        &=\frac{1}{d^2}\sum_{\substack{(\alpha,u)\\(\beta,v)}}C_{\alpha\beta}C_{\beta\alpha}
        \\ &=\frac{l^2}{d^2}\Tr C^2.
    \end{align}
    This shows that our lower-bound construction exploits information that is not captured by unitarity.
\end{Remark}
\section{Access model separation}
\label{sec:separation}
We established an upper bound on the query complexity of estimating OAGF given access to a Stinespring dilation unitary of an unknown quantum channel and its inverse, and a lower bound under black-box access to the channel itself.
Together, these results yield an exponential separation between the two access models.
To make this separation explicit, fix the parameters as follows: $d$-independent constant $\epsilon<\frac{\sqrt{2}-1}{10}$, $r=2$, $m\le\polylog (d)$,  $\delta=1/6$. Under access to a Stinespring dilation unitary of a rank-$2$ quantum channel and its inverse, Theorem~\ref{main-upper-bound} gives an algorithm that estimates OAGF to additive error $\epsilon$ using $\polylog (d)$ queries to the dilation unitary and its inverse.
On the other hand, under black-box access to a rank-$2$ channel, any algorithm requires $\bigOmega{\sqrt{d}}$ channel queries.
Since $d=2^n$ for an $n$-qubit system, this gives the following corollary.
\begin{Cor}
    Under the above parameter choices, OAGF can be estimated using $\bigO{\poly (n)}$  queries of a Stinespring dilation unitary and its inverse. However, any algorithm that estimates OAGF requires $2^{\bigOmega{n}}$ queries under black-box channel access.
    \label{cor:separation}
\end{Cor}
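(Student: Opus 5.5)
The corollary combines Theorem~\ref{main-upper-bound} and Theorem~\ref{Thm:lower bound} under the stated parameter choices. The plan is to instantiate each theorem with $d=2^n$ and translate the resulting bounds into functions of $n$.

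For the upper bound, I will substitute $r=2$, $\delta=1/6$, a $d$-independent constant $\epsilon<(\sqrt2-1)/10$, and $m\le\polylog(d)=\poly(n)$ into Eq.~\eqref{eq:main-query-upper-bound}. The tomography term becomes $\bigO{rm\epsilon^{-2}\log(3/\delta)}=\bigO{m}=\bigO{\poly(n)}$. The evaluation term is
\begin{equation}
\frac{1}{\epsilon^3}\left(\frac{64}{\epsilon}\right)^{3}\left[2\log\!\left(\frac{64}{\epsilon}\right)+\log 18\right]^2 .
\end{equation}
It depends only on $\epsilon$, so it is an absolute constant. The total is therefore $\bigO{\poly(n)}$ queries to $\mathcal{U}_m$ and $\mathcal{U}_m^\dagger$. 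The success probability is $1-\delta=5/6\ge 2/3$, which matches the success criterion in the lower bound, so both results concern the same task. I will also check that the hypotheses of Theorem~\ref{main-upper-bound} hold: the Kraus rank $r=2\ge2$ is known, and $m\ge r$.

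For the lower bound, Theorem~\ref{Thm:lower bound} requires $d=4l$, which holds for $d=2^n$ with $n\ge2$. It gives $\Omega(\sqrt d)=\Omega(2^{n/2})=2^{\Omega(n)}$ queries for any black-box estimator with constant error $\epsilon<(\sqrt2-1)/10$ and success probability $2/3$. This bound holds even under the promise that the Kraus rank is two, so the lower bound applies within the same promise class as the upper bound.

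The one point needing care is that this promise class is also consistent with the restriction $m\le\polylog(d)$ in the upper bound. The channels $\mathcal{U}\circ\mathcal{E}_b\circ\mathcal{V}$ have Kraus rank two and admit a dilation with $m=2$, and rank two allows any $m\ge2$. Hence the hard instances lie inside the class handled by the upper-bound algorithm, and the two bounds are comparable.

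Everything else is direct substitution, so I expect no real obstacle. The only place to be careful is tracking that the constants hidden in $\bigO{\cdot}$ are independent of $d$, which follows because Theorem~\ref{main-upper-bound} states $C$ is independent of $d$.
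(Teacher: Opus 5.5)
Your proposal is correct and follows the paper's argument: the paper instantiates Theorem~\ref{main-upper-bound} with constant $\epsilon<(\sqrt{2}-1)/10$, $r=2$, $\delta=1/6$ and $m\le\polylog(d)$ to get $\polylog(d)=\poly(n)$ queries, and invokes Theorem~\ref{Thm:lower bound} for $\Omega(\sqrt{d})=2^{\Omega(n)}$ with $d=2^n$. Your additional checks ($5/6\ge 2/3$, $4\mid 2^n$ for $n\ge 2$, and that the rank-two hard instances admit dilations with small $m$) are details the paper leaves implicit.
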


This result implies that access to a Stinespring dilation unitary and its inverse allows us to efficiently simulate the original channel, whereas the reverse is impossible in general: access to the channel alone cannot efficiently simulate access to a dilation unitary and its inverse.
We consider a quantum circuit making $q$ queries of $\mathcal{U}_m$ and $\mathcal{U}_m^\dagger$, where $\mathcal{U}_m\in\mathrm{Dilu}_{m}(\mathcal{E})$, and we denote its final state by $\rho(\mathcal{U}_m,\mathcal{U}_m^\dagger)$. 
We remark that the non-oracle operations of the circuit may depend arbitrarily on the known parameters, such as $d$, $m$,$r$ and $q$ but are independent of the particular unknown dilation unitary $U_m$.
We ask whether it is possible to simulate $\rho(\mathcal{U}_m,\mathcal{U}_m^\dagger)$ within $\kappa$-trace distance using only black-box access to $\mathcal{E}$.
That is, let $\sigma(\mathcal{E})$ denote the output state of the simulated circuit and the desired condition can be written as  $\frac{1}{2}\left\|\rho(\mathcal{U}_m,\mathcal{U}_m^\dagger)-\sigma(\mathcal{E})\right\|_1\le\kappa$.
The separation between black-box channel access and reversible Stinespring access persists when the choice of dilation is randomized.
Fix the environment dimension $m$ and all other known parameters.
Let $\mu_m$ denote the uniform distribution over $\mathrm{Dilu}_{m}(\mathcal{E})$, with its dependence on $\mathcal{E}$ left implicit.
We sample a dilation unitary channel $\mathcal{U}_m\sim\mu_m$ once from this distribution and use the same $\mathcal{U}$ and its inverse $\mathcal{U}^\dagger$ throughout the computation.
Thus, the averaged final quantum state of an oracle algorithm is
\begin{equation}
    \mathbb{E}_{\,\mathcal{U}_m\sim\mu_m}\left[\rho(\mathcal{U}_m,\mathcal{U}_m^\dagger)\right],
\end{equation}
where $\rho(\mathcal{U}_m,\mathcal{U}_m^\dagger)$ denotes its final state of quantum circuit for a fixed $\mathcal{U}_m$ and its inverse, averaged over its internal randomness and measurement outcomes.

\begin{Cor}
    For constant $0\le\kappa\le1/6$, there is no universal simulator that uses only $\poly(\log d, q)$ queries to $\mathcal{E}$ and outputs a state $\sigma(\mathcal{E})$ satisfying
    \begin{equation}
        \inf_{\mathcal{U}_m\in\mathrm{Dilu}_{m}(\mathcal{E})}
        \frac{1}{2}\left\|\rho(\mathcal{U}_m,\mathcal{U}_m^\dagger)-\sigma(\mathcal{E})\right\|_1
        \le\kappa.
    \end{equation}
    Moreover, there is no universal simulator that uses only $\poly(\log d, q)$ queries to $\mathcal{E}$ and outputs a state $\sigma(\mathcal{E})$ satisfying
    \begin{equation}
        \frac{1}{2}\left\|\mathbb{E}_{\mathcal{U}_m\sim\mu_m}\left[\rho(\mathcal{U}_m,\mathcal{U}_m^\dagger)\right]-\sigma(\mathcal{E})\right\|_1
        \le\kappa.
        \label{eq:random dilation}
    \end{equation}
    \label{cor:no-go dilation}
\end{Cor}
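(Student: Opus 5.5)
We argue by contradiction, reducing to the black-box lower bound of Theorem~\ref{Thm:lower bound}. Fix the parameters of Corollary~\ref{cor:separation}: Kraus rank $r=2$, environment dimension $m=2$ (the minimal dilation of the hard instances $\mathcal{U}\circ\mathcal{E}_b\circ\mathcal{V}$), a constant $\epsilon<(\sqrt2-1)/10$, and a small constant failure probability $\delta$. Let $\mathcal{C}$ be the circuit of Algorithm~\ref{alg:max-estimation} followed by a threshold measurement at the midpoint of the two OAGF values of the hard instances. It outputs a single bit. By Theorem~\ref{main-upper-bound}, it makes $q=\polylog(d)$ (indeed $O(1)$) queries to $\mathcal{U}_m,\mathcal{U}_m^\dagger$. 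For \emph{every} $\mathcal{U}_m\in\mathrm{Dilu}_m(\mathcal{E})$, it outputs the correct label $b$ with probability at least $1-\delta$.

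Now suppose a simulator exists using $T=\poly(\log d,q)=\polylog(d)$ queries to $\mathcal{E}$. Under the first (infimum) condition, there is some dilation $\mathcal{U}_m$ with $\frac12\|\rho(\mathcal{U}_m,\mathcal{U}_m^\dagger)-\sigma(\mathcal{E})\|_1\le\kappa$. Since the correctness guarantee holds for every dilation, measuring $\sigma(\mathcal{E})$ outputs the correct bit with probability at least $1-\delta-\kappa$. Under the second (averaged) condition, the same conclusion holds: $\mathbb{E}_{\mathcal{U}_m\sim\mu_m}\rho$ is a convex mixture of states, each accepting the correct answer with probability at least $1-\delta$, so the mixture does as well. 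Trace-distance contractivity under the final two-outcome measurement then gives success probability at least $1-\delta-\kappa$.

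Choosing $\delta$ small, say $\delta\le 1/6$, and using $\kappa\le1/6$, we get success probability at least $2/3$ in discriminating $\mathcal{U}\circ\mathcal{E}_Y\circ\mathcal{V}$ from $\mathcal{U}\circ\mathcal{E}_N\circ\mathcal{V}$, for all $\mathcal{U},\mathcal{V}$, using only $T$ black-box queries. The proof of Theorem~\ref{Thm:lower bound}, via Lemma~\ref{Lem:distance Haar} and the Helstrom bound, forces $T\ge\sqrt{d/24}$. This contradicts $T=\polylog(d)$ for large $d=4l$. The simulator's universality matters here: it must work for the specific circuit $\mathcal{C}$, whose non-oracle gates depend only on known parameters.

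The main obstacle is bookkeeping rather than conceptual. First, the Haar sampling and tomography randomness inside $\mathcal{C}$ must be absorbed into the circuit's internal randomness, so that $\rho$ is the averaged final state as in the statement. Second, we must check that $m$ and $r$ are fixed known parameters for the hard family: they are, since both channels have Kraus rank two and composing with unitary channels preserves this rank. Third, the constants must be set so that $1-\delta-\kappa\ge2/3$; if the boundary case $\kappa=1/6$ makes this tight, we would instead choose $\delta$ strictly smaller, for example $\delta=1/12$, and invoke the lower bound with success probability at least $2/3$.
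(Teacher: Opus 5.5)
Your proposal is correct and follows the paper's proof: Algorithm~\ref{alg:max-estimation} succeeds for every dilation, and hence also for the dilation-averaged state. Running it through the simulator costs at most $\kappa$ in success probability, and the $\Omega(\sqrt{d})$ black-box lower bound then rules out $\polylog(d)$ channel queries. The differences are cosmetic: you fix $m=2$ and threshold the estimate to one bit, reducing directly to the discrimination bound $t\ge\sqrt{d/24}$, whereas the paper allows $m\le\polylog(d)$ and cites the estimation lower bound of Theorem~\ref{Thm:lower bound}.
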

\begin{proof}[Proof of Corollary~\ref{cor:no-go dilation}]
Let $\mathcal{A}$ denote Algorithm~\ref{alg:max-estimation}, which achieves the upper bound.
This algorithm applies to any Stinespring dilation unitary with the specified environment dimension.
Its processing rules, including quantum circuit and classical algorithms, are independent of the unknown channel and the particular choice of dilation.
Since the success guarantee holds for every fixed $\mathcal{U}_m$, we have
\begin{align}
    \Pr_{\mathcal{A}}\!\left[|\widehat{F}_{\mathrm{OAGF}}-F_{\mathrm{OAGF}}(\mathcal{E})|\le\epsilon\,\middle|\,\mathcal{U}_m\right]\ge1-\delta
\end{align}
We now consider the same channel family and parameter regime as in Corollary~\ref{cor:separation}.
In this regime, $\mathcal{A}$ uses $q=\poly(n)$ queries, whereas estimating $F_{\mathrm{OAGF}}(\mathcal{E})$ to the same constant additive accuracy with success probability at least $2/3$ requires $\Omega(\sqrt{d})$ channel queries in the worst case, where $d=2^n$.
Suppose that a universal simulator, using only black-box access to $\mathcal{E}$, reproduces the final output distribution of any such algorithm within total variation distance $\kappa$.
Let $\widetilde{F}_{\mathrm{OAGF}}$ denote the estimate obtained by applying this simulation to Algorithm~\ref{alg:max-estimation}.
Then
\begin{equation}
    \Pr\!\left[|\widetilde{F}_{\mathrm{OAGF}}-F_{\mathrm{OAGF}}(\mathcal{E})|\le\epsilon\,\middle|\,\mathcal{U}_m\right]
    \ge 1-\delta-\kappa.
\end{equation}
The condition $\delta+\kappa\le 1/3$ yields an estimation algorithm that uses only channel access and succeeds with probability at least $2/3$.
Consequently, the simulator must use $\Omega(\sqrt{d})$ channel queries in the worst case.

Also, if $\mathcal{U}_m$ is chose uniformly from $\mathrm{Dilu}_m(\mathcal{E})$,
\begin{align}
    \Pr_{U,\mathcal{A}}\!\left[|\widehat{F}_{\mathrm{OAGF}}-F_{\mathrm{OAGF}}(\mathcal{E})|\le\epsilon\right]
    &=\mathbb{E}_{\mathcal{U}_m\sim\mu_m}\left[\Pr_{\mathcal{A}}\!\left[|\widehat{F}_{\mathrm{OAGF}}-F_{\mathrm{OAGF}}(\mathcal{E})|\le\epsilon\,\middle|\,\mathcal{U}_m\right]\right]
    \nonumber\\
    &\ge 1-\delta.
\end{align}
Thus, randomizing the choice of dilation preserves the estimation accuracy and the success-probability guarantee.
In the same manner, if one has a universal simulator that outputs $\sigma(\mathcal{E})$ such that holds Eq. \eqref{eq:random dilation}, the number of access to the quantum channel would be $\bigOmega{\sqrt{d}}$.
\end{proof}
\section*{AI disclosure}
ChatGPT 5.6 Sol assisted the authors in proving the lower bound, especially in the proof of Lemma~\ref{Lem:Hash equal}.
ChatGPT 6 Astra also assisted the authors in revising the manuscript and checking correctness.
The authors take full responsibility for the presentation of the results and their correctness.

\section*{Acknowledgments}
This work was supported by Japan Society for the
Promotion of Science (JSPS) KAKENHI Grant Numbers 23K21643 and 26K25550, the MEXT Quantum Leap Flagship Program (MEXT QLEAP) JPMXS0118069605, JPMXS0120351339, JST CREST Grant Number JPMJCR25I5, JST NEXUS Grant Number JPMJNX26C9, FoPM, WINGS Program, the University of Tokyo, and IBM Quantum. We also acknowledge funding from the Japan-UK EPSRC project ``Distributed and Secure Quantum Computation with Ion-Trap Nodes and Photonic Links’' (UKRI3703 and JST ASPIRE Grant Number JPMJAP25A3).

\bibliographystyle{linkedrefs}
\bibliography{references}

\end{document}